\lstdefinelanguage{pseudo}{%
  morekeywords={Input, if, then, else, elseif, do, while, foreach, repeat, until, return, for,from, to, upto, case},
  morecomment=[l][keywordstyle]{//},
  escapeinside={(*@}{@*)},
  numberstyle=\tiny,
  stepnumber=1,
  numbersep=5pt,
  numberblanklines=false,
  numberfirstline=false,
  breaklines=true,
  numbers=left,
  xleftmargin=2.5ex,
  xrightmargin=2.4ex,
  breakindent=3ex,
  breakautoindent
}
\lstdefinestyle{pseudo}{%
  basicstyle=\rmfamily,
  emphstyle=\sffamily,
  mathescape=true,
  columns=flexible,
  language=pseudo
}
\newcommand{\m}{\text{\textminus}}
\newcommand{\last}{\textnormal{\text{last}}}
\newcommand{\outcome}{\textnormal{\textsf{Out}}}
\newcommand{\Outcome}{\textnormal{\textsf{Out}}}
\newcommand{\play}{\textnormal{\textsf{P}}}
\newcommand{\ut}{\textnormal{\textsf{u}}}
\newcommand{\rwc}{\textnormal{rwc}\xspace}
\newcommand{\twa}{\textnormal{TWA}\xspace}
\newcommand{\DAG}{\textnormal{DAG}\xspace}
\newcommand{\RMP}{\textnormal{RMP}\xspace}
\newcommand{\target}{\textnormal{\textsf{C}}}
\newcommand{\br}{\textnormal{\textsf{br}}}
\newcommand{\reg}{\textnormal{\textsf{reg}}}
\newcommand{\best}{\textnormal{\textsf{best}}}
\newcommand{\minmax}{\textnormal{\textsf{minmax}}\xspace}
\newcommand{\ba}{\textnormal{\textsf{ba}}}
\newcommand{\proj}{\textnormal{\textsf{proj}}}
\newtheorem{proposition}{Proposition}
\newtheorem{definition}{Definition}
\newtheorem{lemma}{Lemma}
\newtheorem{theorem}{Theorem}
\newtheorem{example}{Example}
\begin{document}

\title{Iterated Regret Minimization in Game Graphs}

\author{Emmanuel Filiot \qquad Tristan Le Gall \qquad Jean-Fran\c{c}ois Raskin\\
 efiliot@ulb.ac.be\qquad
tlegall@ulb.ac.be\qquad
jraskin@ulb.ac.be \\
Universit\'e Libre de Bruxelles\\
}

\maketitle
\thispagestyle{empty}

\begin{abstract}
Iterated regret minimization has been
introduced recently by J.Y. Halpern and R. Pass in classical
strategic games. For many games of interest, this new solution concept
provides solutions that are judged more reasonable than solutions offered by traditional game concepts -- such as \textit{Nash equilibrium}
--. Although computing iterated regret on explicit matrix game is conceptually and computationally easy, nothing is known about computing 
the iterated regret on games whose matrices are defined implicitly using game tree, game \DAG or, more generally game graphs.
In this paper, we investigate iterated regret minimization for infinite duration two-player
quantitative non-zero sum games played on graphs. 

We consider reachability objectives that are
not necessarily antagonist. Edges are weighted by integers -- one for
each player --, and the payoffs are defined by the sum of the weights
along the paths. Depending on the class of graphs, we give either
polynomial or pseudo-polynomial time algorithms to compute a strategy that
minimizes the regret for a fixed player. We finally give algorithms
to compute the strategies of the two players that minimize the iterated
regret for trees, and for graphs with strictly positive weights only.
\end{abstract}


\vspace{-7mm}
\section{Introduction}
\vspace{-3mm}

The analysis of complex interactive systems like embedded systems or distributed
systems is a major challenge of computer aided verification. Zero-sum
games on graphs provide a good framework to model interactions
between a component and an environment as they are strictly
competitive. However in the context of
modern interactive systems, several components may interact and be
controlled independently. Non-zero sum games on graphs
are more accurate to model such systems, as the objectives are not
necessarily antagonist. There are initial results in this area
but a large number of questions are open. In this paper, we
adapt to game graphs a new solution concept of non-zero sum games
initially defined for strategic games.

In~\cite{HalpernPass09}, J.Y. Halpern and R. Pass defined the notion of \emph{iterated regret minimization}. This solution concept assumes that instead of trying to minimize what she has to pay, each player tries to minimize her {\em regret}. The regret is informally defined as the difference between what a player actually pays and what she could have payed if she knew the strategy chosen by the other player.  More formally, if $\ut_1(\lambda_1,\lambda_2)$ represents what Player 1 pays when the pair of strategies $(\lambda_1,\lambda_2)$ is played, $\reg_1(\lambda_1,\lambda_2) = \ut_1(\lambda_1,\lambda_2) - \min_{\lambda'_1} \ut_1(\lambda'_1,\lambda_2)$. 

Let us illustrate this on an example. Consider the strategic game defined by the matrix of figure~\ref{matrix-ex}. In the game underlying this matrix, Player 1 has two strategies $A_1$ and $B_1$ and Player 2 has two strategies $A_2$ and $B_2$. The two players choose a strategy at the same time and the pairs of strategies define what the two players have to pay\footnote{We could have considered rewards instead of penalties, everything is symmetrical.}. The regret of playing $A_1$ for Player 1 if Player 2 plays $A_2$ is equal to $1$ because $\ut_1(A_1,A_2)$ is $2$ when $\ut_1(B_1,A_2)$  is $1$. Knowing that Player 2 plays $A_2$, Player 1 should have played $B_1$.  

\begin{figure}
\label{matrix-ex}
\[
 \begin{array}{|c|c|c|}
    \hline
    & A_2 & B_2 \\
    \hline
    A_1 & (2,1) & (3,4) \\
    \hline
    B_1 & (1,2) & (4,3) \\
    \hline
  \end{array}
\]
\caption{A strategic game with explicit penalty matrix.}
\vspace{-7mm}
\end{figure}

As Players have to choose strategies before knowing how the adversary will play, we associate a regret with each strategy as follows. The regret of a strategy $\lambda_1$ of Player 1 is~: $\reg_1(\lambda_1) \; = \; \max_{\lambda_2} \reg_1(\lambda_1,\lambda_2)$.  In the example, the regret attached to strategy $A_1$ is equal to $1$, because when Player 2 plays $A_2$, Player 1 regret is $1$, and when Player 2 plays $B_2$ her regret is $0$. A rational player should minimize her regret. The regret for Player 1 is thus defined as $\reg_1 \; = \; \min_{\lambda_1}  \reg_1(\lambda_1)$, summarizing, we get $\reg_1 \; = \; \min_{\lambda_1} \max_{\lambda_2}  ( \ut_1(\lambda_1,\lambda_2) - \min_{\lambda'_1} \ut_1(\lambda'_1,\lambda_2))$. A symmetrical definition can be given for Player $2$'s regret.

Let us come back to the example. The regret attached to strategy $B_1$ is equal to $1$. So the two strategies of Player 1 are equivalent w.r.t. regret minimization. On the other hand, for Player 2, the regret of $A_2$ equals $0$, and the regret of $B_2$ equals $3$.  So, if Player 1 makes the hypothesis that Player 2 is trying to minimize her regret, then she must conclude that Player 2 will play $A_2$. Knowing that, Player 1 recomputes her regret for each action, and in this case, the regret of action $A_1$ is $1$ while the regret of $B_1$ is $0$. So rational players minimizing their regret should end up playing the pairs $(B_1,A_2)$ in this game.

Reasoning on rationality is formalized by Halpern and Pass by introducing a {\em delete operator} that erases strictly dominated strategies. This operator takes sets of strategies $(\Lambda_1,\Lambda_2)$ for each player and returns  $D(\Lambda_1,\Lambda_2)=(\Lambda_1',\Lambda_2')$ the strategies that minimize regret. Then $D(\Lambda_1',\Lambda_2')$ returns the strategies that minimize regret under the hypothesis that adversaries minimize their regret i.e., choose their strategies in $\Lambda_1'$ and $\Lambda_2'$ respectively. In the case of finite matrix games, this operator is monotone and converges on the strategies that minimize regrets for the two players making the assumption of rationality of the other player.
 
In this paper, we consider games where the matrix is not given explicitly but defined implicitly by a game graph. More precisely, we consider graphs where vertices are partitioned into vertices that belong to Player 1 and vertices that belong to Player 2. Each edge is annotated by a penalty for Player 1 and one for Player 2. Additionally, there are two designated sets of vertices, one that Player 1 tries to reach and the other one that Player 2 tries to reach. The game starts in the initial vertex of the graph and is played for an infinite number of rounds as follows.  In each round, the Player who owns the vertex on which the pebble is placed moves the the pebble to an adjacent vertex using an edge of the graph, and a new round starts. The infinite plays generate an infinite sequence of vertices and the amount that the players have to pay are computed as follows. Player 1 pays $+\infty$ if the sequence does not reach the target set assigned to Player 1, otherwise she pays the sum of edge costs assigned to her on the prefix up to the first visit to her target set. The amount to pay for Player 2 is defined symmetrically. Strategies in such games are functions from the set of histories of plays (sequences of visited vertices) to edges (choice of moves for the pebble).

Let us consider the game graph of Fig.~\ref{centri-ex}. This is a
formalization of the so-called Centipede game~\cite{Rosenthal82} in our game
graphs. We have considered a 5-round variant here, this game can be
generalized to any number of rounds. Initially, the pebble is on
vertex $A$. Player 1 owns the circle vertices and Player 2 owns the
square vertices. The target objective for the two players is the
same: they both want to reach vertex $S$. At each round, one of the players
has to choose either to stop the game and reach the target, or to let
the game continue for at least an additional round. The penalties attached to
edges are given as pairs of integers (the first for Player 1 and the
second for Player 2). Strategies here are as follows. For each circle
vertex, Player 1 must decide either to continue or to go to the target
$S$, and symmetrically for Player 2. It can be shown (and computed by
our algorithms) that the strategy of Player 1 that survives iterated
regret minimization is the strategy that stops the game only in
position $E$ and the strategy for Player 2 is the strategy that
continue the game to vertex $D$. This pair of strategy has a penalty of
$(1,3)$. This is an interesting and rather nice joint
behavior of the two players in comparison of what Nash equilibrium is
predicting for this example. Indeed, the only Nash
equilibrium\footnote{A Nash equilibrium is a pair of strategies where no player has an incentive to change her strategy if the other
  player keeps playing her strategy} in that game is the pair of strategies where the two players decide to stop directly the game and so they have to pay $(5,7)$. This is a 5-round example but the difference between the penalty of the Nash equilibrium and the iterated regret grows as the number of rounds increases.

\vspace{-3mm}
\begin{figure}[h]
\label{centri-ex}
\centering{
  \unitlength=0.65mm

\begin{picture}(84,71)(0,-71)
\node[NLangle=0.0,Nmr=0.0](n9)(20.0,-32.0){B}

\node[NLangle=0.0](n10)(32.0,-60.0){C}

\node[NLangle=0.0,Nmr=0.0](n11)(64.0,-60.0){D}

\node[NLangle=0.0](n12)(76.0,-32.0){E}

\node[NLangle=0.0,Nmarks=r](n13)(48.0,-36.0){S}

\drawedge[ELdist=0.3,ELside=r,curvedepth=-4.48](n9,n10){$0/0$}

\drawedge[ELdist=0.3,ELside=r,curvedepth=-4.23](n10,n11){$0/0$}

\drawedge[ELdist=0.3,ELside=r,curvedepth=-4.45](n11,n12){$0/0$}

\drawedge[ELdist=0.3](n9,n13){$6/4$}

\drawedge[ELdist=0.3](n10,n13){$3/5$}

\drawedge[ELdist=0.3](n11,n13){$4/2$}

\drawedge[ELdist=0.3](n12,n13){$1/3$}

\node[NLangle=0.0,ilength=7.41,iangle=0.0,Nmarks=i](n8)(48.0,-8.0){A}

\drawedge[ELdist=0.3,ELside=r,curvedepth=-7.7](n8,n9){$0/0$}

\drawedge[ELdist=0.5](n8,n13){$5/7$}

\end{picture}
\vspace{-4.5mm}
  \caption{\label{fig:centipede} Centipede Game}
}
\vspace{-6mm}
\end{figure}

\paragraph{Contributions}
We investigate iterated regret minimization for
infinite duration two-player quantitative non-zero sum games played on
graphs. We focus on reachability objectives that are not necessarily
antagonist.

We first consider target-weighted arenas, where the payoff function is
defined for each state of the objectives. We give a PTIME algorithm to
compute the regret by reduction to a min-max game.

We then consider edge-weighted arenas. Each edge is labeled by a pair of integers -- one for each player
--, and the payoffs are defined by the sum of the weights along the
path until the first visit to an objective. We give a
pseudo-PTIME algorithm to compute the regret in an edge-weighted
arena, by reduction to a target-weighted arena.

We also study the problem of iterated regret minimization. 
We provide a \emph{delete operator} that removes strictly dominated
strategies. We show how to compute the effect of iterating this
operator on tree arenas and strictly positive
edge-weighted arenas. In the first case, we provide a quadratic time algorithm and in the second case,
a pseudo-exponential time algorithm.

\vspace{-2mm}

\paragraph{Related works}
Several notions of equilibria have been proposed
in the literature for reasoning on 2-players non-zero-sum games,
for instance \emph{Nash equilibrium}, \emph{sequential equilibrium},
\emph{perfect equilibrium} - see~\cite{Osborne94} for an overview.
Those equilibria formalize notions of rational behavior by defining
optimality criteria for pairs of strategies. As we have seen in the
Centipede game example~\cite{Rosenthal82}, or as it can be shown
for other examples like the \emph{Traveller's dilemma}~\cite{Basu94},
Nash equilibria sometimes suggest pairs of strategies that are rejected by common sense. Regret
minimization is an alternative solution concept that sometimes
proposes more intuitive solutions and requires more cooperation between
players. Recently, non-zero sum games played on graphs have attracted a
lot of attention. There have been several papers that study Nash
equilibria or particular classes of Nash equilibria 
\cite{Ummels08,Chatterjee04gameswith,Brihaye10,Kupf10}.

\textit{Proofs that are sketched or omitted in the paper are given in
  Appendix}.








\section{Weighted Games and Regret}

Given a cartesian product $A\times B$ of two sets, we
denote by $\proj_i$ the $i$-th projection, $i=1,2$. It is naturally
extended to sequence of elements of $A\times B$ by 
$\proj_i(c_1\dots c_n) = \proj_i(c_1)\dots\proj_i(c_n)$. 
For all $k\in\mathbb{N}$, we let $[k] = \{0,\dots,k\}$.


\paragraph{Reachability Games} Turn-based two-player games are played on game arenas by
two players. A (finite) \textit{game arena} is a tuple $G =
(S = S_1\uplus S_2,s_0,T)$ where $S_1,S_2$
are finite disjoint sets of player positions ($S_1$ for Player 1
and $S_2$ for Player $2$), $s_0\in S_1$ is the initial
position, and $T\subseteq S\times S$ is the
transition relation. A \textit{finite play} on $G$ of length $n$ is
a finite word $\pi = \pi_0\pi_1\dots\pi_n\in S^*$
such that $\pi_0 = s_0$ and for all $i=0,\dots,n-1$,
$(\pi_i,\pi_{i+1})\in T$. Infinite plays are defined
similarly. We denote by $\play_f(G)$ (resp. $\play_{\infty}(G)$) the
set of finite (resp. infinite) plays on $G$, and we let $\play(G) =
\play_f(G)\cup \play_{\infty}(G)$. For any node $s\in
S$, we denote by $(G,s)$ the arena $G$ where the initial position is
$s$.

Let $i\in\{1,2\}$. We let $\m i = 1$ if $i=2$ and $\m i=2$ if $i=1$.
A \textit{strategy} $\lambda_i : \play_f(G)\rightarrow S\cup \{\perp\}$ for Player $i$
is a mapping that maps any finite play $\pi$ whose
last position -- denoted $\last(\pi)$ -- is in $S_i$ to $\perp$ if
there is no outgoing edge from $\last(\pi)$, and to 
a position $s$ such that $(\last(\pi),s)\in T$ otherwise. The set of strategies of Player $i$ in
$G$ is denoted by $\Lambda_i(G)$. Given a strategy $\lambda_{\m i}\in\Lambda_{-i}(G)$,
the \textit{outcome} $\outcome^G(\lambda_i,\lambda_{\m i})$ is
a play $\pi = \pi_0\ldots \pi_n\ldots$ such that $(i)$ $\pi_0 = s_0$, $(ii)$ if $\pi$ is finite, 
then there is not outgoing edge from $\last(\pi)$, and $(iii)$ 
for all $0\leq j \leq |\pi|$ and all $\kappa=1,2$, if $\pi_j\in S_\kappa$, then
$\pi_{j+1} = \lambda_\kappa(\pi_0\dots \pi_j)$. We also define $\outcome^G(\lambda_i) =
\{ \outcome^G(\lambda_i,\lambda_{\m i})\ |\ \lambda_{\m i}\in
\Lambda_{\m i}(G)\}$.

A strategy $\lambda_i$ is \textit{memoryless} if for all play $\pi\in\play_f(G)$, 
$\lambda_i(\pi)$ only depends on $\last(\pi)$. Thus $\lambda_i$ can be seen as a function $S_i\mapsto S\cup \{\perp\}$.
It is \textit{finite-memory} if 
$\lambda_i(\pi)$ only depends on $\last(\pi)$ and on some state of a finite state set.
We refer the reader to \cite{GraedelTW02} for formal definitions.


A \textit{reachability winning condition (\rwc for short)} for Player $i$ is given by
a subset of positions $\target_i\subseteq S$ -- called the \textit{target set} --. A play $\pi\in \play(G)$ is winning 
for Player $i$ if some position of $\pi$ is in $\target_i$.
A strategy  $\lambda_i$ for Player
$i$ is \textit{winning} if all the plays of $\Outcome^G(\lambda_i)$ are winning.
In this paper, we often consider two target sets $\target_1,\target_2$ for Player 1 and 2 respectively.
We write $(S_1,S_2,s_0,T,\target_1,\target_2)$ to denote
the game arena $G$ extended with those target sets. Finally, let
 $\lambda_i\in\Lambda_i(G)$ be a winning strategy for Player $i$ and
$\lambda_{\m i}\in\Lambda_{\m i}(G)$. Let $\pi_0\pi_1\dots\in\play(G)$
be the outcome of $(\lambda_i,\lambda_{\m i})$. The outcome of
$(\lambda_i,\lambda_{\m i})$ up to $\target_i$ is defined by
$\outcome^{G,\target_i}(\lambda_i,\lambda_{\m i}) = \pi_0\dots \pi_n$
such that $n = \min \{ j\ |\ \pi_j\in \target_i\}$. We also extend
this notation to sets of plays $\outcome^{G,\target_i}(\lambda_i)$ naturally.


\paragraph{Weighted Games} We add weights on edges of arenas 
and include the target sets. A (finite) \textit{weighted game arena}
is a tuple $G = (S = S_1\uplus S_2,s_0,T,\mu_1,\mu_2,\target_1,\target_2)$ where
$(S,s_0,T)$ is a game arena, for all $i=1,2$, $\mu_i : T\rightarrow
\mathbb{N}$ is a weigth function for Player $i$ and $\target_i$ its
target set. We let $M_i^G$ be the maximal weight of Player $i$, i.e. 
$M_i^G = \max_{e\in T} \mu_i(e)$ and $M^G = \max (M_1^G,M_2^G)$.

$G$ is a \textit{target-weighted arena} (\twa for short) if only 
the edges leading to a target node are weighted by strictly positive
integers, and any two edges leading to the same node carry the same
weight. Formally, for all $(s,s')\in T$, if
$s'\not\in\target_i$, then $\mu_i(s,s') = 0$, otherwise for all $(s'',s')\in T$, $\mu_i(s,s') =
\mu_i(s'',s')$. Thus for target-weighted arenas, we assume
in the sequel that the weight functions map $\target_i$ to
$\mathbb{N}$.

Let $\pi = \pi_0\pi_1\dots \pi_n$ be a finite play in $G$. We extend the weight
functions to finite plays, so that for all $i=1,2$, $\mu_i(\pi) =
\sum_{j=0}^{n-1} \mu_i(\pi_j,\pi_{j+1})$. The \textit{utility} $\ut^G_i(\pi)$ of $\pi$ (for Player
$i$) is $+\infty$ if $\pi$ is not winning for Player $i$, and the sum of
the weights occuring along the edges defined by $\pi$ until the first
visit to a target position otherwise. Formally:


$$
\ut^G_i(\pi)\ =\ 
\left\{\begin{array}{llllllll}
+\infty \qquad\quad\,\,\,\, \text{ if } \pi \text{ is not winning for Player $i$} \\
\sum_{j=0}^{\min \{ k\ |\ \pi_k\in \target_i\}-1} 
\mu_i(\pi_j,\pi_{j+1}) \quad \text{ otherwise}
\end{array}\right.
$$ 


We extend this notion to the utility of two strategies
$\lambda_1,\lambda_2$ of Player 1 and 2 respectively:


$$
\ut^G_i(\lambda_1,\lambda_2)\ =\ \ut^G_i(\outcome^G(\lambda_1,\lambda_2))
$$


Let $i\in\{1,2\}$. Given a strategy $\lambda_i\in\Lambda_i(G)$, the \textit{best response}
of Player $\m i$ to $\lambda_i$, denoted by $\br^G_{\m i}(\lambda_i)$, is
the least utility Player $\m i$ can achieve against 
$\lambda_i$. Formally:


$$
\br^G_{\m i}(\lambda_i)\ =\ \min_{\lambda_{\m i}\in\Lambda_{\m i}(G)} \ut_{\m i}^G(\lambda_i,\lambda_{\m i})
$$

\paragraph{Regret} Let $i\in\{1,2\}$ and let 
$\lambda_1,\lambda_2$ be two strategies of Player 1 and 2 respectively. 
The \textit{regret} of Player $i$ is the difference 
between the utility Player $i$ achieves and the best response
to the strategy of Player $\m i$. Formally:


$$
\reg_i^G(\lambda_i,\lambda_{\m i})\ =\ \ut^G_i(\lambda_i,\lambda_{\m i}) - \br_i^G(\lambda_{\m i})
$$


Note that $\reg_i^G(\lambda_i,\lambda_{\m i})\geq 0$, since
$\br_i^G(\lambda_{\m i})\leq \ut^G_i(\lambda_i,\lambda_{\m i})$.
The regret of a strategy $\lambda_i$ for Player $i$ is the maximal regret she gets
for all strategies of Player $\m i$:


$$
\reg_i^G(\lambda_i)\ =\ \max_{\lambda_{\m i}\in \Lambda_{\m i}(G)} \reg_i^G(\lambda_i,\lambda_{\m i})
$$


Finally, the regret of Player $i$ in $G$ is the minimal regret she can
achieve:


$$
\reg^G_i = \min_{\lambda_i\in\Lambda_i(G)} \reg^G_i(\lambda_i)
$$


We let $+\infty - (+\infty) = +\infty$.

\begin{proposition}
For all $i=1,2$, $\reg_i^G<+\infty$ iff Player $i$ has a winning
strategy.
\end{proposition}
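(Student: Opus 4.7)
The plan is to establish each direction of the equivalence directly from the definitions, using memoryless determinacy of reachability games for the backward direction and the convention $+\infty - (+\infty) = +\infty$ for the forward direction.

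For ($\Leftarrow$), I would first invoke memoryless determinacy of reachability games (see~\cite{GraedelTW02}) to obtain a memoryless winning strategy $\lambda_i^*$ for Player $i$. The key step is then to bound $\ut_i^G(\lambda_i^*,\lambda_{\m i})$ \emph{uniformly} in $\lambda_{\m i}$: I claim that every outcome $\outcome^G(\lambda_i^*,\lambda_{\m i})$ reaches $\target_i$ within at most $|S|$ steps. Otherwise, a prefix of length $|S|+1$ avoiding $\target_i$ would revisit some state $s \notin \target_i$; since Player $i$'s moves in the $\lambda_i^*$-restricted arena are fixed, this creates a cycle outside $\target_i$ that Player $\m i$ can iterate forever, contradicting the winning property of $\lambda_i^*$. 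From this cycle argument we get $\ut_i^G(\lambda_i^*,\lambda_{\m i}) \leq |S|\cdot M^G$, and since $\br_i^G(\lambda_{\m i}) \geq 0$, also $\reg_i^G(\lambda_i^*,\lambda_{\m i}) \leq |S|\cdot M^G$ uniformly. Taking the max over $\lambda_{\m i}$ and then the min over $\lambda_i$ yields $\reg_i^G \leq |S|\cdot M^G < +\infty$.

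For ($\Rightarrow$), I would proceed by contrapositive. Assuming Player $i$ has no winning strategy, every $\lambda_i \in \Lambda_i(G)$ admits some $\lambda_{\m i} \in \Lambda_{\m i}(G)$ such that $\outcome^G(\lambda_i,\lambda_{\m i})$ never visits $\target_i$, i.e.\ $\ut_i^G(\lambda_i,\lambda_{\m i}) = +\infty$. A short case distinction on $\br_i^G(\lambda_{\m i})$ then suffices: if it is finite, $\reg_i^G(\lambda_i,\lambda_{\m i}) = +\infty$ by ordinary arithmetic; if it is $+\infty$, the convention $+\infty - (+\infty) = +\infty$ recalled just before the proposition gives the same conclusion. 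Thus $\reg_i^G(\lambda_i) = +\infty$ for every $\lambda_i$, and so $\reg_i^G = +\infty$.

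The only real (and mild) obstacle is the uniform bound in the ($\Leftarrow$) direction: merely knowing $\ut_i^G(\lambda_i,\lambda_{\m i}) < +\infty$ for each $\lambda_{\m i}$ separately is not enough, since $\reg_i^G(\lambda_i)$ takes a maximum over $\lambda_{\m i}$ which could a priori diverge. Choosing a \emph{memoryless} winning strategy and exploiting finiteness of $S$ via the cycle argument is what makes the bound independent of $\lambda_{\m i}$. Everything else is an immediate unfolding of the definitions.
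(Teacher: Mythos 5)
Your proof is correct, and the ($\Rightarrow$) direction coincides with the paper's. In the ($\Leftarrow$) direction you take a genuinely more careful route: the paper picks an \emph{arbitrary} winning strategy $\lambda_i$, notes that $\ut_i^G(\lambda_i,\lambda_{\m i})$ and $\br_i^G(\lambda_{\m i})$ are finite for each fixed $\lambda_{\m i}$, and immediately concludes $\reg_i^G(\lambda_i)<+\infty$ --- which, exactly as you observe, glosses over the fact that a supremum of finite quantities over the infinite set $\Lambda_{\m i}(G)$ need not be finite. You close that gap by invoking memoryless determinacy and the cycle argument (a repeated non-target state under a memoryless strategy yields a loop Player $\m i$ can iterate forever), obtaining the uniform bound $|S|\cdot M^G$. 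This is precisely the technique the paper itself uses later, in the proof of Lemma~\ref{lem:regretbounded}, to bound regret-minimizing strategies by $2M^G|S|$; you are in effect importing that argument one section early. What your version buys is an explicit, uniform bound and a fully rigorous maximization step; what the paper's shorter version buys is brevity, and it can be repaired even for non-memoryless winning strategies (the tree of $\lambda_i$-consistent finite plays that have not yet hit $\target_i$ is finitely branching with no infinite branch, hence finite by K\"onig's lemma, again giving a uniform bound). Either way the statement holds; your write-up is the more self-contained of the two.
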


\begin{proof}
  If Player $i$ has no winning strategy, then for all
  $\lambda_i\in\Lambda_i(G)$, there is $\lambda_{\m i}\in\Lambda_{\m i}(G)$ s.t. 
  $\ut_i^G(\lambda_i,\lambda_{\m i}) = +\infty$. Thus $\reg_i^G(\lambda_i,\lambda_{\m i}) = +\infty$. Therefore $\reg_i^G = +\infty$.

  \noindent If Player $i$ has a winning strategy $\lambda_i$, then for all
  $\lambda_{\m i}\in\Lambda_{\m i}(G)$, $\ut_i^G(\lambda_i,\lambda_{\m
    i}) < +\infty$ and $\br_i^G(\lambda_{\m i}) \leq
  \ut_i^G(\lambda_i,\lambda_{\m i}) < +\infty$. Thus
  $\reg_i^G \leq \reg_i^G(\lambda_i) < +\infty$. 
\end{proof}

\begin{figure}[h]
  \vspace{-25mm}
  \unitlength=0.7mm



\gasset{linewidth=0.14,Nw=8.0,Nh=8.0,Nmr=4.0,Nadjustdist=1.0,ilength=5.0,flength=5.0,rdist=0.7,loopdiam=8.0,AHdist=1.41,AHLength=1.5,AHlength=1.41,ELdist=1.0}
\begin{picture}(100,118)(0,-118)
\unitlength=0.7mm
\node[NLangle=0.0](n0)(64.0,-52.0){B}

\node[NLangle=0.0](n1)(40.44,-72.0){C}

\node[NLangle=0.0,Nmarks=r](n3)(24.22,-92.0){E}

\node[NLangle=0.0,Nmr=0.0](n5)(52.22,-92.0){F}

\node[NLangle=0.0,Nmr=0.0](n6)(84.06,-72.28){D}

\node[NLangle=0.0,Nmarks=r](n7)(36.0,-116.0){I}

\node[NLangle=0.0,Nmarks=r](n8)(68.0,-116.0){J}

\node[NLangle=0.0,Nmarks=r](n9)(68.0,-92.0){G}

\node[NLangle=0.0,Nmarks=r](n10)(100.0,-92.0){H}

\node[NLangle=0.0,Nmarks=i,Nmr=0.0](n11)(36.0,-44.0){A}

\drawedge[ELside=r,ELdist=2.0](n0,n1){0}

\drawedge[ELside=r,ELdist=2.0](n1,n3){3}

\drawedge[ELdist=2.0](n1,n5){0}

\drawedge[ELside=r,ELdist=2.0](n6,n9){3}

\drawedge[ELdist=2.0](n6,n10){0}

\drawedge[ELdist=2.0](n5,n8){0}

\drawedge[ELside=r,ELdist=2.0](n5,n7){4}

\drawedge[ELdist=2.0,curvedepth=-8.0](n11,n1){0}

\drawedge[ELdist=2.0,ELside=r,curvedepth=8.0](n11,n0){0}






\drawedge[ELdist=2.0](n0,n6){0}

\end{picture}
  \caption{\label{fig:memory} Graph arena with a common weight function.}
  \vspace{-5mm}
\end{figure}

\begin{example}\label{ex:centipede}
  Consider the game arena $G$ of Fig. \ref{fig:memory}. We omit Player $2$'s weights since
  we are interested in computing the regret of Player $1$. Player 1's positions are circle
  nodes and Player 2's positions are square nodes. The target nodes are represented by double circles.
  The initial node is $A$. 
  Let $\lambda_1$ be the memoryless strategy defined by
  $\lambda_1(B) = C$ and $\lambda_1(C) = E$.
  For all $\lambda_2\in\Lambda_2(G)$, $\outcome^G_1(\lambda_1,\lambda_2)$ is either 
  $ACE$ or $ABCE$, depending on whether Player $2$ goes directly to $C$ or passes by $B$.
  In both cases, the outcome is winning and $\ut_1^G(\lambda_1,\lambda_2) = 3$. 
  What is the regret of playing $\lambda_1$ for Player $1$? To compute $\reg_G^1(\lambda_1)$, we
  should consider all possible strategies of Player $2$, but a simple observation allows us to
  restrict this range. Indeed, to maximize the regret
  of Player $1$, Player $2$ should cooperate in subtrees where $\lambda_1$ prevents to go, i.e.
  in the subtrees rooted at $D$ and $F$. Therefore we only have to consider
  the two following memoryless strategies $\lambda_2$ and $\lambda'_2$:
  both $\lambda_2$ and $\lambda'_2$ move from $F$ to $J$ and from $D$ to $H$, but
  $\lambda_2(A) = B$ while $\lambda'_2(A) = C$.
  In both cases, going to $F$ is a best response to $\lambda_2$ and $\lambda'_2$ for Player $1$,
  i.e. $\br_1^G(\lambda_2) = \br_1^G(\lambda'_2) = 0$. Therefore we get
  $\reg_1^G(\lambda_1,\lambda_2) = \ut_1^G(\lambda_1,\lambda_2) - \br_1^G(\lambda_2) = 3-0 = 3$.
  Similarly $\reg_1^G(\lambda_1,\lambda'_2) = 3$. Therefore $\reg_1^G(\lambda_1) = 3$.

  As a matter of fact, the strategy $\lambda_1$ minimizes the regret of Player $1$. Indeed, 
  if she chooses to go from $B$ to $D$, then Player $2$ moves from $A$ to $B$ 
  and from $D$ to $G$ (so that Player $1$ gets a utility $3$) and cooperates in the subtree
  rooted at $C$ by moving from $F$ to $J$. The regret of Player $1$ is therefore $3$.
    If Player $1$ moves from $B$ to $C$ and from $C$ to $F$, then Player $2$ moves
  from $A$ to $C$ and from $F$ to $I$ (so that Player $1$ gets a utility $4$), and from $D$ to $H$, 
  the regret of Player $1$ being therefore $4$. Similarly, one can show that all other strategies
  of Player $1$ have a regret at least $3$. Therefore $\reg_1^G = 3$.

  Note that the strategy $\lambda_1$ does not minimize the regret in the subgame 
  defined by the subtree rooted at $C$. Indeed, in this subtree, Player $1$ has
  to move from $C$ to $F$, and the regret of doing this is $4-3 = 1$. However the
  regret of $\lambda_1$ in the subtree is $3$. This example illustrates a situation where
  a strategy that minimizes the regret in the whole game does not necessarily minimize the
  regret in the subgames. Therefore we cannot apply a simple backward algorithm to compute
  the regret. As we will see in the next section, we first have to propagate some information
  in the subgames.
\end{example}




\section{Regret Minimization on Target-Weighted Graphs}\label{sec:targetgraph}

In this section, our aim is to give an algorithm to compute the regret
for Player $i$. This is done by reduction to a min-max game, defined
in the sequel. We say that we \textit{solve} the regret minimization
problem (\RMP for short) if we can compute the minimal regret and
a (finite representation of a) strategy that achieves this value.

\paragraph{Minmax games} Let $G = (S=S_1\uplus S_2, s_0,
T,\mu_1,\mu_2,\target_1,\target_2)$ be a $\twa$ and $i=1,2$.
We define the value $\minmax_i^G$ as follows:
$$
\minmax_i^G =
\min_{\lambda_i\in\Lambda_i(G)}\max_{\lambda_{\m i}\in\Lambda_{\m i}(G)} \ut_i^G(\lambda_i,\lambda_{\m i})
$$

\begin{proposition}\label{prop:minmaxgame}
  Given a \twa $G =
  (S,s_0,T,\mu_1,\mu_2,\target_1,\target_2)$, $i\in\{1,2\}$ and $K\in
  \mathbb{N}$, one can decide in time $O(|S|+|T|)$  whether 
  $\minmax_i^G\leq K$. The value $\minmax_i^G$ and a memoryless
  strategy that achieves this value can be computed in time 
  $O(log_2(M_i^G)(|S|+|T|))$. 
\end{proposition}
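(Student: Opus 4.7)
The plan is to reduce $\minmax_i^G$ to a classical reachability game on $G$, which is known to be solvable in linear time with a memoryless winning strategy.

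First I would observe that in a \twa every edge entering a target state $t\in\target_i$ carries weight $\mu_i(t)$, and every other edge has weight $0$. Hence for any play $\pi$, either $\pi$ visits $\target_i$ for the first time at some $t\in\target_i$ and then $\ut_i^G(\pi)=\mu_i(t)$, or $\pi$ never visits $\target_i$ and $\ut_i^G(\pi)=+\infty$. So computing $\minmax_i^G$ amounts to deciding whether Player $i$ can force the first visit to $\target_i$ to land in a subset of small enough weight.

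For the decision problem, fix $K\in\mathbb{N}$ and let $T_K = \{t\in\target_i\mid \mu_i(t)\leq K\}$. Construct $G_K$ from $G$ by making every state of $\target_i$ absorbing (e.g.\ redirect its outgoing edges to a fresh self-looping trap $\bot\notin T_K$). I would then argue that $\minmax_i^G\leq K$ iff Player $i$ has a winning strategy in $G_K$ from $s_0$ for the reachability objective $T_K$. The forward direction is clear: a strategy of Player $i$ achieving utility $\leq K$ against every opponent in $G$ induces, via the prefix up to the first visit to $\target_i$, a strategy forcing $T_K$ in $G_K$. The converse holds because what happens after the first visit to $\target_i$ is irrelevant to $\ut_i^G$, so any $G_K$-strategy can be arbitrarily extended into a $G$-strategy with the same utility. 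Solving this reachability game takes $O(|S|+|T|)$ via the standard backward attractor computation, which moreover yields a memoryless winning strategy whenever one exists.

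For the optimization problem, I would first test in time $O(|S|+|T|)$ whether Player $i$ can force a visit to $\target_i$ at all; if not, $\minmax_i^G=+\infty$. Otherwise $\minmax_i^G\in[0,M_i^G]$, and I would binary-search this interval using the decision procedure above, giving $O(\log_2 M_i^G)$ calls and a total cost of $O(\log_2(M_i^G)(|S|+|T|))$. The optimal memoryless strategy is the one returned by the attractor computation for $K=\minmax_i^G$: it ensures that against any opponent the first target reached lies in $T_K$, hence $\ut_i^G\leq K$ from $s_0$. There is no genuine obstacle here; the only mild point to justify carefully is the equivalence between the minmax game on $G$ and the reachability game on $G_K$, which follows directly from the fact that only the first visit to $\target_i$ contributes to $\ut_i^G$ in a \twa.
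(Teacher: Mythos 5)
Your proposal is correct and takes essentially the same route as the paper: the paper's proof is precisely a linear-time attractor computation for the set of targets of weight at most $K$ (treating targets of weight greater than $K$ as blocking, which matches your making all of $\target_i$ absorbing), followed by a dichotomy over $[0,M_i^G]$ and extraction of a memoryless strategy from the attractor. Your phrasing as a reduction to a standard reachability game on a modified arena $G_K$ is just a packaging of the same idea, and your justification of the key point --- that only the first visit to $\target_i$ matters, so high-weight targets must absorb --- is exactly the subtlety the paper's fixpoint handles via $\target_i^{>K}$.
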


\begin{proof}

For all $j\geq 0$, we let $W_j\subseteq S$ be the set of positions
from which Player $i$ has a strategy to reach a position $s\in
\target_i$ in at most
$j$ steps, such that $\mu_i(s)\leq K$ and such that she does not pass by a position
$s'\in \target_i$ such that $\mu_i(s')>K$. Formally, we
denote by $\target_i^{>K}$ the set of positions $s\in \target_i$ 
s.t. $\mu_i(s)> K$. Then $W_0 = \target_i\backslash \target_i^{>K}$
 and for all $j>0$, $W_j = W_{j-1} \cup 
W_j^\exists\cup W_j^\forall$, where:
$$
\begin{array}{llllllll}
W_j^\exists & =&  \{ s\in S_i\backslash \target_i^{>K}\ |\ \exists s'\in W_{j-1},(s,s')\in T\} \\
W_j^\forall  & = &  \{ s\in S_{\m i}\backslash \target_i^{>K}\ |\ \forall
(s,s')\in T,\ s'\in W_{j-1}\}
\end{array}
$$

The sequence $W_0,W_1,\dots$ converges in at most $|S|$ steps to a set $W^*$, and
$\minmax_i^G\leq K$ iff $s_0\in W^*$. In order to compute
$W^*$ in time $O(|S|+|T|)$, we add counters to positions
that counts the number of their successors that are not already in the 
current set $W_j$. When adding a new node to $W_j$, we decrement the
counter of its predecessor by one (if it was not already $0$). Let $s$
be one of its predecessors and $c$ its counter value. If $s\in S_i$
and $c$ is strictly lesser than the number of its successors, $s$ will
be added to $W_{j+1}$. If $s\in S_{\m i}$ and $c=0$, then all the
successors of $s$ are in $W_j$, therefore $s$ will be added to $W_{j+1}$.
Now, in order to compute the value $\minmax_i^G$, we use the previous algorithm as the
building block of a dichotomy algorithm that starts with the maximal
finite value which can be achieve by Player $i$ if she has a winning strategy
to its target, i.e. $M^G_i$.

\noindent If $\minmax_i^G = +\infty$, then any strategy achieves this
value. Otherwise in order to extract a strategy, it suffices to keep
for each position $s\in W_j\cap S_i$, a pointer to a position
$s'\in W_{j-1}$ such that $(s,s')\in T$ when computing
the sequence of $W_j$'s. Note that this strategy is memoryless. 
\end{proof}

Since roles of the players are symmetric, without loss of generality we can 
focus on computing the regret of Player $1$ only.
Therefore we do not consider Player $2$'s targets and weights.
Let  $G = (S= S_1\uplus S_2,
s_0,T,\mu_1,\target_1)$ be a \twa (assumed to be fixed from now on). 
Let $\lambda_1\in\Lambda_1(G)$ be a winning
strategy of Player $1$ (if it exists). Player $2$ can enforce
Player $1$ to follow one of the paths of
$\outcome^{G,\target_1}(\lambda_1)$ by choosing a suitable strategy. When choosing a path
$\pi\in\outcome^{G,\target_i}(\lambda_1)$, in order to maximize the
regret of Player $1$, Player $2$ cooperates (i.e. she minimizes the utility) if
Player $1$ would have deviated from $\pi$. This leads to the notion of
\textit{best alternative} along a path. Informally, the best
alternative along $\pi$ is the minimal value Player $1$ could have achieved
if she deviated from $\pi$, assuming Player $2$ cooperates. 
Since Player $2$ can enforce one of the paths of
$\outcome^{G,\target_1}(\lambda_1)$, to maximize the regret of Player
$1$, she will choose the path $\pi$ with the
highest difference between $\ut_1^G(\pi)$ and the minimal best
alternative along $\pi$. As an example consider the \twa arena of Fig. \ref{fig:memory}.
In this example, if Player $1$ moves from
$C$ to $E$, then along the path $ACE$, the best alternative is $0$. Indeed, the other alternative
was to go from $C$ to $F$ and in this case, Player $2$ would have
cooperated.

We now formally define the notion of best alternative. 
Let $s\in S_1$. The best value that can be achieved from
$s$ by Player $1$ when Player $2$ cooperates is defined by:

$$
\best^G_1(s) = \min_{\lambda_1\in \Lambda_1(G,s)}
\min_{\lambda_2\in\Lambda_2(G,s)} \ut^{(G,s)}_1(\lambda_1,\lambda_2)
$$

Let $(s,s')\in T$. The \textit{best
  alternative} of choosing $s'$ from $s$ for Player $1$, denoted by $\ba^G_1(s,s')$, 
is defined as the minimal value she could have achieved by choosing 
another successor of $s$ (assuming Player $2$ cooperates). Formally:


$$
\ba^G_1(s,s') = \left\{\begin{array}{lllll}
+\infty & \text{ if } s\in S_{2} \\
\min_{(s,s'')\in T,s''\neq s'} \best^G_1(s'') & \text{ if } s\in S_{1} \\
\end{array}\right.
$$

with $\min\varnothing = +\infty$. Finally, the best
alternative of a path $\pi = s_0s_1\dots s_n$ is
defined as $+\infty$ if $n=0$ and as the minimal
best alternative of the edges of $\pi$ otherwise:

$$
\ba_G^1(\pi) = \min_{0\leq j<n} \ba^1_G(s_j,s_{j+1})
$$

We first transform the graph $G$ into a
graph $G'$ such that all the paths that lead to a node $s$ have the same
best alternative. This can be done since the number of best
alternatives is bounded by $|\target_1|$. The construction of $G'$ is
 done inductively by storing the best alternatives in the positions.

\begin{definition}\label{def:reduc1}
The graph of best alternatives of $G$ is the $\twa$
$G' = (S'=S'_1\uplus S'_2 ,s_0',T',\mu'_1,\target'_1)$ defined by: 

\begin{list}{\labelitemi}{\leftmargin=0.5em}
\item $S'_i = S_i \times ([M_1^G]\cup \{+\infty\})$, $i=1,2$ and $s_0' = (s_0,+\infty)$;
\item for all $(s,b_1),(s',b'_1)\in S'$, $((s,b_1),(s',b'_1))\in T'$
  iff $(s,s')\in T$ and 
  \vspace{-3mm}
  $$
  b'_1 = \left\{\begin{array}{llll} \min (b_1,\ba_1^G(s,s')) &\text{ if } s\in S_1 \\
        b_1 & \text{ if } s\in S_2\end{array}\right.
  $$
  \vspace{-3mm}

\item $\target_1' = S'_1\cap (C_1\!\times\! [M_1^G])$ and
$\forall (s,b)\!\in\! \target'_1,\mu'_1(s,b) = \mu_1(s)$.
\end{list}
\end{definition}

\begin{proposition}\label{prop:ba}
  For all $(s,b)\in S'$ and all finite path $\pi$ in $G'$ from $(s_0,+\infty)$
  to $(s,b)$, $\ba_1^{G'}(\pi) = b$. 
\end{proposition}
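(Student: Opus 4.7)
The plan is to do induction on the length of $\pi$, after first establishing a key equality that relates best alternatives computed in $G'$ to those computed in $G$.

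First I would observe that there is a natural projection $\proj_1 : S' \rightarrow S$ that is a bijection on outgoing edges: from any $(s,b) \in S'$, the successors in $G'$ are exactly the pairs $(s', b')$ where $(s,s') \in T$ and $b'$ is the uniquely determined value given by Definition~\ref{def:reduc1}. In particular the weight of a target $(s,b) \in \target'_1$ only depends on $s$, and a play in $G'$ is winning iff its projection on $S$ is winning, with the same utility. Consequently, for every $s\in S$ and every $b$ such that $(s,b)\in S'$,
\[
\best_1^{G'}(s,b) \;=\; \best_1^G(s),
\]
because the two $\min$'s range over sets of play outcomes in bijection with identical utilities. This is the only non-trivial step; the rest is bookkeeping.

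From this I would derive that for any edge $((s,b),(s',b'))\in T'$,
\[
\ba_1^{G'}\bigl((s,b),(s',b')\bigr) \;=\; \ba_1^G(s,s').
\]
If $s\in S_2$, both sides equal $+\infty$ by definition. If $s\in S_1$, the alternative successors $(s'',b'')$ of $(s,b)$ in $G'$ are in bijection (via $\proj_1$) with the alternative successors $s''$ of $s$ in $G$, and by the previous equality $\best_1^{G'}(s'',b'') = \best_1^G(s'')$, so the two $\min$'s coincide.

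Finally I would prove the proposition by induction on $n = |\pi|$. Write $\pi = (s_0, +\infty)(s_1,b_1)\dots (s_n,b_n)$ with $b_n = b$. For $n=0$ both sides equal $+\infty$. For $n>0$, the inductive hypothesis applied to the prefix of length $n-1$ gives $b_{n-1} = \min_{0\leq j < n-1} \ba_1^G(s_j,s_{j+1})$. Using the transition rule of Definition~\ref{def:reduc1} (and the fact that $\ba_1^G(s_{n-1},s_n) = +\infty$ when $s_{n-1}\in S_2$, which makes the two cases of the rule uniform), we get $b_n = \min(b_{n-1}, \ba_1^G(s_{n-1},s_n)) = \min_{0\leq j<n} \ba_1^G(s_j,s_{j+1})$. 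Combining this with the edge-wise equality above yields $\ba_1^{G'}(\pi) = \min_{0\leq j<n}\ba_1^G(s_j,s_{j+1}) = b_n = b$, as required.
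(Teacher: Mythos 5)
Your proof is correct and follows essentially the same route as the paper's: an induction on the length of $\pi$, combined with the key observation that $\best_1^{G'}(s,b)=\best_1^G(s)$ and hence $\ba_1^{G'}\bigl((s,b),(s',b')\bigr)=\ba_1^G(s,s')$ edge by edge (the paper's step $(\star)$), which you justify in slightly more detail via the bijection of outgoing edges and plays.
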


Because the number of best alternatives is bounded by $|\target_1|$,
the game $G'$ can be constructed in polynomial time:

\begin{proposition}\label{prop:construction_G'}
  $G'$ can be constructed in time
  $O\left((|\target_1|+\log_2(M_1^G))\times(|S|+|T|)\right)$.
\end{proposition}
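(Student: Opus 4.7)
The plan is to build $G'$ in three stages: first precompute $\best_1^G$ at every node, then derive $\ba_1^G$ for every edge, and finally enumerate the reachable part of $G'$ by a forward exploration from $(s_0,+\infty)$.

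First I would compute $\best_1^G(s)$ for every $s\in S$. Since this is a min--min quantity, both players cooperate to minimise utility, so $\best_1^G(s)$ is simply the smallest $\mu_1(t)$ over targets $t\in\target_1$ reachable from $s$ in the underlying graph (and $+\infty$ if no such $t$ exists). I would adapt the dichotomy of Proposition~\ref{prop:minmaxgame} by replacing its min--max reachability subroutine by a plain reachability one -- both players now minimise, so there is no real adversary -- which yields all these values in time $O(\log_2(M_1^G)(|S|+|T|))$.

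Next, $\ba_1^G(s,s')$ can be derived edge-by-edge from $\best_1^G$. For $s\in S_2$ we have $\ba_1^G(s,s')=+\infty$ by definition. For $s\in S_1$, I precompute once, for each such $s$, the two smallest values of $\best_1^G$ among its successors; then $\ba_1^G(s,s')$ equals the minimum if $\best_1^G(s')$ strictly exceeds it, and the second minimum otherwise. This stage costs $O(|S|+|T|)$ in total.

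The main obstacle -- and essentially the only non-routine point -- is to bound the size of the reachable part of $G'$. The key observation is that $\best_1^G$, and therefore $\ba_1^G$, takes values in $\{\mu_1(t)\mid t\in \target_1\}\cup\{+\infty\}$, a set of cardinality at most $|\target_1|+1$. The update rule of Definition~\ref{def:reduc1} either keeps $b_1$ unchanged or replaces it by $\min(b_1,\ba_1^G(s,s'))$, so this property propagates along every path starting from $(s_0,+\infty)$, and every reachable second coordinate in $G'$ stays in that same set. Consequently the reachable part of $G'$ has at most $(|\target_1|+1)|S|$ states and $(|\target_1|+1)|T|$ transitions, and a forward BFS from $(s_0,+\infty)$ builds it in time proportional to its size, i.e.\ $O(|\target_1|(|S|+|T|))$. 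Summing the three stages gives the announced bound $O((|\target_1|+\log_2(M_1^G))(|S|+|T|))$.
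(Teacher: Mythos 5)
Your proposal is correct and follows essentially the same three-stage decomposition as the paper's proof: compute $\best_1^G$ on all nodes (via a shortest-path/dichotomy argument in $O(\log_2(M_1^G)(|S|+|T|))$), derive $\ba_1^G$ edge-by-edge in linear time, and bound the reachable part of $G'$ by $O(|\target_1|)$ copies of $G$ built by a fixpoint/forward exploration. The extra details you supply (the two-smallest-successor-values trick and the explicit observation that best alternatives range over $\{\mu_1(t)\mid t\in\target_1\}\cup\{+\infty\}$) are exactly the facts the paper leaves implicit.
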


Since the best alternative information depends only on the paths, the paths of $G$ and those of $G'$
are in bijection. This bijection can be extended to strategies. In
particular, we define two mappings $\Phi_i$ from $\Lambda_i(G)$ to
$\Lambda_i(G')$, for all $i=1,2$. For all path $\pi = s_0s_1\dots$ in $G$ (finite
  or infinite), we denote by $B(\pi)$ the path of $G'$ defined by
  $(s_0,b_0) (s_1,b_1)\dots$ where $b_0= +\infty$ and for all $j>0$,
  $b_j = \ba_G^1(s_0\dots s_{j-1})$. The mapping $B$ is bijective, and
  its inverse corresponds to $\proj_1$.

  The mapping $\Phi_i$ maps any strategy
  $\lambda_i\in\Lambda_i(G)$ to a strategy $\Phi_i(\lambda_i)\in
  \Lambda_i(G')$ such that $\Phi_i(\lambda_i)$ behaves as $\lambda_i$
  on the first projection of the play and adds the best alternative
  information to the position. Let $h\in S'^*$ such that
  $\last(h)\in S_i'$. Let $s = \lambda_i(\proj_1(h))$. Then 
  $\Phi_i(\lambda_i)(h) = (s, \ba_G^1(\proj_1(h).s))$. 
  The inverse mapping $\Phi_i^{-1}$ just projects the best alternative
  information away. In particular, for all
  $\lambda'_i\in\Lambda_i(G')$, and all $h\in S^*$ such that
  $\last(h)\in S_i$, $\Phi_i^{-1}(\lambda_i)(h) =
  \proj_1(\lambda_i(B(h)))$.

\noindent Then, $\Phi_i$'s are bijective and $\Phi_1$ preserves the regret values:

\begin{lemma}\label{lem:bij}
  \mbox{$\forall \lambda_1\!\in\!\Lambda_1(G),\reg_1^G(\lambda_1)\!=
  \!\reg^{G'}_1(\Phi_1(\lambda_1))$}.
\end{lemma}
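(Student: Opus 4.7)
The plan is to reduce the identity on regret values to two simpler identities: equality of utilities under corresponding strategy pairs, and equality of best responses. Everything hinges on the bijection $B$ between paths of $G$ and paths of $G'$ (whose inverse is $\proj_1$): the construction in Definition~\ref{def:reduc1} deterministically records the running minimum of best alternatives in the second coordinate, so a path in $G$ starting at $s_0$ lifts uniquely to a path in $G'$ starting at $(s_0,+\infty)$, and conversely projecting away the second coordinate gives a $G$-path.

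First I would verify that for every pair $(\lambda_1,\lambda_2)\in\Lambda_1(G)\times\Lambda_2(G)$, the outcome commutes with $B$, namely
\[
\outcome^{G'}\bigl(\Phi_1(\lambda_1),\Phi_2(\lambda_2)\bigr) \;=\; B\bigl(\outcome^G(\lambda_1,\lambda_2)\bigr).
\]
This is a straightforward induction on the length of prefixes, using the definitions of $\Phi_1,\Phi_2$ (which act as $\lambda_1,\lambda_2$ on the first coordinate and annotate the best alternative on the second). Since $\target_1' = \target_1\times[M_1^G]$ and $\mu'_1(s,b)=\mu_1(s)$ on targets, the first visit to $\target_1$ in the $G$-outcome and the first visit to $\target_1'$ in the $G'$-outcome occur at the same index and have equal weight contribution, yielding
\[
\ut_1^G(\lambda_1,\lambda_2) \;=\; \ut_1^{G'}\bigl(\Phi_1(\lambda_1),\Phi_2(\lambda_2)\bigr).
\]

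Next I would use the bijectivity of $\Phi_1$ (stated just above the lemma) to transfer the minimum in the best response. For any $\lambda_2\in\Lambda_2(G)$,
\[
\br_1^{G'}(\Phi_2(\lambda_2)) = \min_{\lambda'_1\in\Lambda_1(G')} \ut_1^{G'}(\lambda'_1,\Phi_2(\lambda_2)),
\]
and since $\Phi_1$ is a bijection onto $\Lambda_1(G')$, I can reindex $\lambda'_1 = \Phi_1(\lambda_1)$ and apply the utility identity to obtain $\br_1^{G'}(\Phi_2(\lambda_2)) = \br_1^G(\lambda_2)$. Combining this with the utility identity gives $\reg_1^G(\lambda_1,\lambda_2) = \reg_1^{G'}(\Phi_1(\lambda_1),\Phi_2(\lambda_2))$.

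Finally, I would take the maximum over $\lambda_2$. Bijectivity of $\Phi_2$ from $\Lambda_2(G)$ to $\Lambda_2(G')$ allows the reindexing
\[
\max_{\lambda_2\in\Lambda_2(G)} \reg_1^G(\lambda_1,\lambda_2) \;=\; \max_{\lambda'_2\in\Lambda_2(G')} \reg_1^{G'}(\Phi_1(\lambda_1),\lambda'_2),
\]
which is exactly $\reg_1^G(\lambda_1) = \reg_1^{G'}(\Phi_1(\lambda_1))$. The only delicate point is checking bijectivity of $\Phi_1$ and $\Phi_2$; this reduces to the observation that the $b$-coordinate of a reachable state in $G'$ is functionally determined by the $G$-history via $B$, so two $G'$-strategies that agree after $\Phi_i^{-1}$ must agree on every reachable history. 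The rest is essentially bookkeeping.
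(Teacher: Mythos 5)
Your proposal is correct and follows essentially the same route as the paper: establish the outcome/utility correspondence $\ut_1^G(\lambda_1,\lambda_2)=\ut_1^{G'}(\Phi_1(\lambda_1),\Phi_2(\lambda_2))$ via the path bijection $B$, then use bijectivity of $\Phi_1$ and $\Phi_2$ to reindex the $\min$ in the best response and the $\max$ over adversary strategies. The paper merely compresses your last two steps into a single chain of max/min reindexings, so there is no substantive difference.
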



The best alternative information is crucial to compute the regret. This is a global information
that allows us to compute the regret locally, as stated by the next lemma. For all $(s,b)\in \target'_1$, we let 
$\nu_1(s,b) = \mu_1(s) - \min (\mu_1(s), b)$. We extend
$\nu_1$ to pairs of strategies as usual  -- $\nu_1(\lambda_1,\lambda_2)$
being infinite if $\lambda_1$ is losing --.

\begin{lemma}\label{lem:maxmax}
  \mbox{$\forall \lambda_1\in\Lambda_1(G')$,
  $\reg^{G'}_1(\lambda_1)\! =\!\!\!\!\! \max\limits_{\lambda_2\in\Lambda_2(G')}
  \nu_1(\lambda_1,\lambda_2)$}.
\end{lemma}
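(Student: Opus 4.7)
The plan is to prove the equality by reducing the quantifier over $\lambda_2$ to a canonical class of strategies. Given any $\lambda_2 \in \Lambda_2(G')$, I would construct a \emph{cooperative extension} $\lambda_2^c$ that mimics $\lambda_2$ along $\outcome^{G'}(\lambda_1, \lambda_2)$ and, off this path, always picks a successor that minimizes Player~1's achievable utility from the current position (i.e., witnesses $\best_1$). By construction, $\outcome^{G'}(\lambda_1, \lambda_2^c) = \outcome^{G'}(\lambda_1, \lambda_2)$; let $(s_n, b_n) \in \target_1'$ denote its terminal state (the winning case; the losing case with $\ut_1 = +\infty$ can be handled separately and degenerates into $+\infty = +\infty$ on both sides via our convention on $\nu_1$). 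In particular $\ut_1^{G'}(\lambda_1, \lambda_2^c) = \mu_1(s_n) = \ut_1^{G'}(\lambda_1, \lambda_2)$ and $\nu_1(\lambda_1, \lambda_2^c) = \nu_1(\lambda_1, \lambda_2)$.

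The crux is to show $\br_1^{G'}(\lambda_2^c) = \min(\mu_1(s_n), b_n)$. A best response for Player~1 against $\lambda_2^c$ either follows $\lambda_1$ and achieves $\mu_1(s_n)$, or branches off at some Player~1 position $(s_j, b_j)$ along the path to a different successor $(s'_j, b'_j)$; from that point on $\lambda_2^c$ cooperates, so Player~1 can realize $\best_1^G(s'_j)$ thanks to the path bijection between $G$ and $G'$. The minimum over all such branches is exactly $\ba_1^G(\proj_1(\outcome^{G'}(\lambda_1, \lambda_2)))$, which by Proposition~\ref{prop:ba} equals $b_n$. Combining these two options yields $\br_1^{G'}(\lambda_2^c) = \min(\mu_1(s_n), b_n)$, hence $\reg_1^{G'}(\lambda_1, \lambda_2^c) = \mu_1(s_n) - \min(\mu_1(s_n), b_n) = \nu_1(\lambda_1, \lambda_2^c)$. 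A separate easy observation is that $\br_1^{G'}(\lambda_2^c) \leq \br_1^{G'}(\lambda_2)$, because any deviation strategy of Player~1 against $\lambda_2$ is still available against $\lambda_2^c$ and is never worse (Player~2's off-path cooperation only helps Player~1). Therefore $\reg_1^{G'}(\lambda_1, \lambda_2) \leq \reg_1^{G'}(\lambda_1, \lambda_2^c)$.

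These ingredients close the argument in both directions. For the $\leq$ inequality, $\reg_1^{G'}(\lambda_1, \lambda_2) \leq \reg_1^{G'}(\lambda_1, \lambda_2^c) = \nu_1(\lambda_1, \lambda_2^c) \leq \max_{\lambda_2'} \nu_1(\lambda_1, \lambda_2')$; taking the max over $\lambda_2$ yields $\reg_1^{G'}(\lambda_1) \leq \max_{\lambda_2} \nu_1(\lambda_1, \lambda_2)$. For the $\geq$ direction, $\nu_1(\lambda_1, \lambda_2) = \nu_1(\lambda_1, \lambda_2^c) = \reg_1^{G'}(\lambda_1, \lambda_2^c) \leq \reg_1^{G'}(\lambda_1)$, which gives the reverse inequality after taking the max. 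The main obstacle is the identification of $\br_1^{G'}(\lambda_2^c)$ with $\min(\mu_1(s_n), b_n)$: it hinges on realizing that the $G'$ construction stores exactly the value Player~1 can recoup by an optimal deviation, provided her opponent is willing to cooperate after the deviation. Once this is made precise via Proposition~\ref{prop:ba} and the definition of $\best_1^G$ (whose infimum is attained by some memoryless strategy pair in the reachability setting), the rest of the proof is a straightforward chaining of (in)equalities.
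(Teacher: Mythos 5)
Your proposal is correct and follows essentially the same route as the paper's proof: construct an off-path cooperative modification of $\lambda_2$, identify $\br_1^{G'}$ of that modified strategy with $\min(\mu'_1(\last(\pi)),\ba_1^{G'}(\pi))$ via Proposition~\ref{prop:ba}, and chain the resulting (in)equalities in both directions. The only cosmetic difference is that you apply the cooperative-extension construction uniformly to an arbitrary $\lambda_2$ rather than only to the maximizers of $\reg_1^{G'}(\lambda_1,\cdot)$ and of $\nu_1(\lambda_1,\cdot)$ as the paper does, which does not change the substance of the argument.
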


\begin{proof}(Sketch)
    It is clear if $\lambda_1$ is losing. If it is winning, then
    let $\lambda_2$ which maximizes $\reg^{G'}_1(\lambda_1)$ and 
    $\pi = \outcome^{G',\target_1}(\lambda_1,\lambda_2)$. 
    Without changing the regret values, we can assume that $\lambda_2$
    cooperates if Player $1$ would have deviated from $\pi$, i.e.
    $\lambda_2$ minimizes the utility in the subgames $(G,s)$ where
    $s$ is not the successor of some element of $\pi$. The best response
    to $\lambda_2$ is either the value $\ut^{G'}_1(\lambda_1,\lambda_2)$, i.e.
    $\mu_1(\last(\pi))$, or the minimal best alternative along $\pi$.
    By Proposition \ref{prop:ba}, this minimal best alternative along $\pi$ is
    exactly $\proj_2(\last(\pi))$. Therefore $\br^{G'}_1(\lambda_2) = \min(\mu'_1(\last(\pi)),\ba_{G'}^1(\pi))$
    and $\reg^{G'}_1(\lambda_1) = \nu^1(\last(\pi)) = \nu^1(\lambda_1,\lambda_2)$.
    Conversely, for any strategy $\lambda_2$ which maximizes $\nu^1(\lambda_1,\lambda_2)$,
    we can also assume without changing the value $\nu^1(\lambda_1,\lambda_2)$ that
    $\lambda_2$ cooperates if Player $1$ would have deviated
    from $\outcome^{G'}(\lambda_1,\lambda_2)$, and we therefore have
    $\reg^{G'}_1(\lambda_1,\lambda_2) = \nu^1(\lambda_1,\lambda_2)$.
\end{proof}

We can now reduce the $\RMP$ to a min-max problem~:

\begin{lemma}\label{lemma:regretminmax}
  Let $H = (S',s'_0,T',\nu^1,\target_1')$ where $S',s'_0,T',\target_1'$ are defined in Definition \ref{def:reduc1}. 
  Then
  $$
  \reg_1^G = \minmax^{H}_1
  $$


\end{lemma}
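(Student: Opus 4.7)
The plan is to derive this identity by a straightforward chain of equalities, using Lemmas \ref{lem:bij} and \ref{lem:maxmax} as black boxes and unfolding the definitions of $\reg_1^G$ and $\minmax_1^H$.

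First, I would start from the definition $\reg_1^G = \min_{\lambda_1 \in \Lambda_1(G)} \reg_1^G(\lambda_1)$. By Lemma \ref{lem:bij}, for every $\lambda_1 \in \Lambda_1(G)$ we have $\reg_1^G(\lambda_1) = \reg_1^{G'}(\Phi_1(\lambda_1))$. Since $\Phi_1 : \Lambda_1(G) \to \Lambda_1(G')$ is a bijection (as noted right before Lemma \ref{lem:bij}, via the path bijection $B$), the minimum over $\lambda_1 \in \Lambda_1(G)$ can be replaced by a minimum over $\lambda'_1 \in \Lambda_1(G')$, yielding
\[
\reg_1^G \;=\; \min_{\lambda'_1 \in \Lambda_1(G')} \reg_1^{G'}(\lambda'_1).
\]

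Next I would apply Lemma \ref{lem:maxmax} to rewrite $\reg_1^{G'}(\lambda'_1) = \max_{\lambda_2 \in \Lambda_2(G')} \nu_1(\lambda'_1, \lambda_2)$, giving
\[
\reg_1^G \;=\; \min_{\lambda'_1 \in \Lambda_1(G')} \max_{\lambda_2 \in \Lambda_2(G')} \nu_1(\lambda'_1, \lambda_2).
\]
It remains to observe that this is exactly $\minmax_1^H$. By construction, $H$ shares the arena $(S', s'_0, T')$ with $G'$, and its weight function on Player 1's targets is $\nu^1$, with $\target'_1$ as target set. The utility $\ut_1^H$ is therefore $+\infty$ on non-winning plays and equals $\nu^1$ evaluated on the unique target vertex otherwise; lifted to pairs of strategies this is precisely the function denoted $\nu_1(\lambda'_1, \lambda_2)$ in the statement of Lemma \ref{lem:maxmax} (with the convention that losing strategies have value $+\infty$). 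Hence $\min_{\lambda'_1} \max_{\lambda_2} \nu_1(\lambda'_1, \lambda_2) = \minmax_1^H$ by definition of the min-max value.

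There is no real obstacle here: the whole content has been packed into Lemmas \ref{lem:bij} and \ref{lem:maxmax} and into the construction of $G'$. The only point that requires a sentence of justification is the bijectivity of $\Phi_1$, which follows from the fact that $B$ is a bijection on plays and that $\Phi_1$ simply annotates histories with the best-alternative component already determined by their first projection.
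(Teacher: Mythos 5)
Your proof is correct and follows exactly the paper's own argument: unfold the definition of $\reg_1^G$, transfer via Lemma~\ref{lem:bij} and the bijectivity of $\Phi_1$ to a minimum over $\Lambda_1(G')$, apply Lemma~\ref{lem:maxmax}, and identify the resulting $\min\max$ of $\nu_1$ with $\minmax_1^H$ by the definition of $H$. The extra sentence you add identifying $\ut_1^H$ with $\nu_1$ is a reasonable (and implicit in the paper) clarification, not a deviation.
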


\noindent \textit{Proof}
It is a consequence of Lemmas \ref{lem:bij} and \ref{lem:maxmax}:

$$
\begin{array}{lllllll}
\reg_1^G  & = & \min\limits_{\lambda_1\in\Lambda_1(G)}

\reg_1^G(\lambda_1) & \text{(definition)} \\

& = & \min\limits_{\lambda_1\in\Lambda_1(G)} 
\reg^{H}_1(\Phi_1(\lambda_1)) & \text{(Lemma \ref{lem:bij})}\\ 

& = & \min\limits_{\lambda_1\in\Lambda_1(H)}
\reg^{H}_1(\lambda_1) & \text{(Lemma \ref{lem:bij})} \\

& = & \min\limits_{\lambda_1\in\Lambda_1(H)}
\max\limits_{\lambda_2\in\Lambda_2(H)} \nu^1(\lambda_1,\lambda_2) &
\text{(Lemma \ref{lem:maxmax})} \hfill\square  \\ 
\end{array}
$$

As a consequence of Propositions~\ref{prop:minmaxgame},
\ref{prop:construction_G'} and Lemma~\ref{lemma:regretminmax}, we can
solve the \RMP on \twa's. We first compute the graph of best
alternatives and solve a $\minmax$ game. This gives us a memoryless
strategy that achieves the minimal regret in the graph of best
alternatives. To compute a strategy in the original graph,
we apply the inverse mapping $\Phi_1^{-1}$: this gives a finite-memory strategy 
whose memory is exactly the best alternative seen
along the current finite play. Therefore the needed memory is bounded by
the number of best alternatives, which is bounded by $|C_1|$.

\begin{theorem}\label{thm:rmp:twa}
  The \RMP on a \twa $G = (S,s_0,T,\mu_1,\target_1)$ can be solved in
  $O\left(|\target_1|\cdot log_2(M_1^G)\cdot(|S|+|T|)\right)$.
\end{theorem}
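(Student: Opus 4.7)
The plan is to combine the three earlier ingredients of this section in a pipeline: build the graph of best alternatives, turn it into a min-max arena by substituting the weight $\nu_1$ for $\mu_1'$, and then solve that min-max game.

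First, I would construct $G' = (S',s_0',T',\mu_1',\target_1')$ via Definition \ref{def:reduc1}. By Proposition \ref{prop:construction_G'} this takes time $O((|\target_1|+\log_2(M_1^G))(|S|+|T|))$. The key observation to keep in mind is that, although $S' \subseteq S \times ([M_1^G]\cup\{+\infty\})$ naively suggests a state space of size $\Theta(M_1^G\cdot|S|)$, only the $|\target_1|+1$ values ever assumed by the best-alternative component along plays starting from $s_0'$ are actually reachable (namely $+\infty$ and the $\mu_1$-values of target positions), so effectively $|S'| = O(|\target_1|\cdot|S|)$ and $|T'| = O(|\target_1|\cdot|T|)$.

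Second, I would form the min-max arena $H = (S',s_0',T',\nu_1,\target_1')$ as in Lemma \ref{lemma:regretminmax}. Since $\nu_1$ is computable in constant time per target position from $\mu_1'$ and the best-alternative coordinate, building $H$ adds no asymptotic cost. I would then invoke Proposition \ref{prop:minmaxgame} on $H$ with $i=1$: because $\nu_1 \leq \mu_1' \leq M_1^G$, the maximal weight in $H$ is still bounded by $M_1^G$, so the proposition yields both the value $\minmax_1^H$ and a memoryless strategy $\lambda_1'$ realising it in time $O(\log_2(M_1^G)(|S'|+|T'|))$, which is $O(|\target_1|\cdot\log_2(M_1^G)\cdot(|S|+|T|))$. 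By Lemma \ref{lemma:regretminmax} this value equals $\reg_1^G$; this second cost dominates the construction cost of $G'$, giving the announced overall bound.

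Finally, to recover a strategy in $G$ itself, I would apply the inverse mapping $\Phi_1^{-1}$ to $\lambda_1'$. The resulting strategy $\Phi_1^{-1}(\lambda_1') \in \Lambda_1(G)$ inherits its regret value from $\lambda_1'$ by Lemma \ref{lem:bij}, and is finite-memory, the memory being precisely the best-alternative value currently tracked along the play, so the memory size is at most $|\target_1|+1$. The only non-routine point in the whole argument is the remark that the effective size of $G'$ scales with $|\target_1|$ rather than with $M_1^G$; everything else is a direct composition of results already proved, so I do not expect any significant obstacle beyond careful bookkeeping of the complexity.
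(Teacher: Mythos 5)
Your proposal is correct and follows essentially the same route as the paper: construct the graph of best alternatives, replace $\mu_1'$ by $\nu_1$ to obtain the min-max arena $H$, solve it via Proposition~\ref{prop:minmaxgame} on the $O(|\target_1|\cdot|S|)$-state, $O(|\target_1|\cdot|T|)$-edge graph, and pull the memoryless strategy back through $\Phi_1^{-1}$ to get a finite-memory strategy whose memory is the current best alternative. Your explicit remark that the reachable part of $S'$ scales with $|\target_1|$ rather than $M_1^G$ is exactly the observation the paper relies on.
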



\section{Regret Minimization in Edge-Weighted Graphs}\label{sec:graphs}

In this section, we give a pseudo-polynomial time algorithm to solve
the \RMP in weighted arenas (with weights on edges). In a first step,
we prove that if the regret is finite, the strategies minimizing the regret generates outcomes
whose utility is bounded by some value which depends on the graph.
This allows us to reduce the problem to the \RMP in a \twa, which can
then be solved by the algorithm of the previous section.

Let $G = (S = S_1\uplus S_2,s_0,T,\mu_1,\target_1)$ be a weigthed game arena with
objective $\target_1$. As in the previous section, we assume that we want to minimize the regret
of Player 1, so we omit the weight function and the target of Player $2$.

\begin{definition}[Bounded strategies]
Let $B\in\mathbb{N}$ and $\lambda_1\in\Lambda_1(G)$. The strategy $\lambda_1$ is bounded by $B$ if for all
$\lambda_{2}\in \Lambda_{2}(G)$, $\ut^G_1(\lambda_1,\lambda_{2})\leq B$.
\end{definition}

Note that a bounded strategy is necessarily winning, since by definition, the utility of some outcome 
is infinite iff it is loosing. 
The following lemma states that the winning strategies that minimize the regret of Player $1$ are bounded.

\begin{lemma}\label{lem:regretbounded}
    For all weighted arena $G = (S,s_0,T,\mu_1,\target_1)$ and for any strategy $\lambda_1\in\Lambda_1(G)$ winning in $G$ for Player $1$ that minimizes her regret,
    $\lambda_1$ is bounded by $2 M^G |S|$.
\end{lemma}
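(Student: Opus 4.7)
\medskip

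\noindent\textbf{Proof plan.} The plan is to argue by contradiction: assume $\lambda_1$ is winning and minimizes regret but is not bounded by $2M^G|S|$, and exhibit a competing strategy with strictly smaller regret.

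First I would produce a cheap benchmark strategy. Since $\lambda_1$ is winning, Player $1$ has a winning strategy for the reachability objective $\target_1$, and hence a memoryless attractor strategy $\lambda_1^\ast\in\Lambda_1(G)$: at every position in the attractor of $\target_1$, $\lambda_1^\ast$ strictly decreases the attractor rank. Consequently, every play consistent with $\lambda_1^\ast$ reaches $\target_1$ in at most $|S|$ steps, so for every $\lambda_2\in\Lambda_2(G)$ the outcome $\outcome^{G,\target_1}(\lambda_1^\ast,\lambda_2)$ has length at most $|S|$ and hence utility at most $M^G\cdot|S|$. This immediately gives
\[
\reg_1^G(\lambda_1^\ast)\;=\;\max_{\lambda_2}\bigl(\ut_1^G(\lambda_1^\ast,\lambda_2)-\br_1^G(\lambda_2)\bigr)\;\leq\;M^G|S|.
\]

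Next I would assume for contradiction that $\lambda_1$ is not bounded by $2M^G|S|$. Then there exists $\lambda_2\in\Lambda_2(G)$ with $\ut_1^G(\lambda_1,\lambda_2)>2M^G|S|$. The key observation is that against this same $\lambda_2$, Player $1$ could have deviated to $\lambda_1^\ast$ and obtained utility at most $M^G|S|$, hence $\br_1^G(\lambda_2)\leq \ut_1^G(\lambda_1^\ast,\lambda_2)\leq M^G|S|$. Combining:
\[
\reg_1^G(\lambda_1)\;\geq\;\reg_1^G(\lambda_1,\lambda_2)\;=\;\ut_1^G(\lambda_1,\lambda_2)-\br_1^G(\lambda_2)\;>\;2M^G|S|-M^G|S|\;=\;M^G|S|,
\]
which together with the benchmark bound yields $\reg_1^G(\lambda_1)>\reg_1^G(\lambda_1^\ast)$, contradicting the assumption that $\lambda_1$ minimizes the regret of Player $1$. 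Therefore $\lambda_1$ is bounded by $2M^G|S|$, as claimed.

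The only non-routine step is the existence and the $|S|$-bound on the attractor strategy $\lambda_1^\ast$, but both are classical for reachability games on finite arenas and follow from the standard attractor construction. Everything else is an inequality chase using the definition of regret.
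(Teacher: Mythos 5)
Your proof is correct and follows essentially the same route as the paper's: both arguments rest on a memoryless winning strategy whose outcomes reach $\target_1$ within $|S|$ steps, which simultaneously bounds the minimal regret by $M^G|S|$ and every best-response value $\br_1^G(\lambda_2)$ by $M^G|S|$, and then combine the two bounds. The only difference is presentational — you argue by contradiction where the paper chains the inequalities directly.
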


\begin{proof}
    Since we consider reachability games, it is well-known that if there is a winning strategy for Player $1$, there is a memoryless strategy
    $\gamma_1$ winning for Player $1$ (see for instance \cite{GraedelTW02}). In particular, for all $\lambda_{2}\in\Lambda_{2}(G)$, 
    $\outcome^{G,\target_1}(\gamma_1,\lambda_{2})$ does not contain twice the same position. Indeed, if there is a loop, since the strategy is memoryless, 
    Player $2$ can enforce Player $1$ to take this loop
    infinitely many times. Therefore for all $\lambda_{2}\in \Lambda_{2}(G)$, 
    $\ut^G_1(\gamma_1,\lambda_{2})\leq M^G |S|$. Therefore the
    following holds: $(\star)\ \forall \lambda_{2}\in \Lambda_{2}(G), \br_1^G(\lambda_{2})\leq M^G |S|$.
    Moreover, $\reg_1^G(\gamma_1)\leq M^G |S|$. Indeed, let $\lambda_{2}$ which
    maximizes $\reg_1^G(\gamma_1,\lambda_{2})$. Then $\reg_1^G(\gamma_1) = \ut^G_1(\gamma_1,\lambda_{2}) - \br_1^G(\lambda_{2})$.
    Since $\ut_1^G(\gamma_1,\lambda_{2})\leq M^G |S|$ and $0\leq \br_1^G(\lambda_{2}) \leq M^G |S|$, we get
    $\reg_1^G(\gamma_1)\leq M^G |S|$.  Thus $(\star\star)\ \reg_1^G\leq M^G |S|$.

    \noindent Finally let $\lambda_1$ be a winning strategy which minimizes the regret of Player $1$, and $\lambda_{2}\in\Lambda_{2}(G)$.
    We have $\reg_1^G(\lambda_1,\lambda_{2})\leq M^G |S|$ (by
    $(\star\star)$), therefore
    $\ut_1^G(\lambda_1,\lambda_{2}) - \br_1^G(\lambda_{2})\leq M^G |S|$, which gives
    $\ut_1^G(\lambda_1,\lambda_{2})\leq 2 M^G |S|$ (by $(\star)$).
\end{proof}

Let $B = 2M^G |S|$. 
Thanks to Lemma \ref{lem:regretbounded} we can reduce the $\RMP$ in a weighted
arena into the $\RMP$ in a \twa. Indeed, it suffices to enrich every position
of the arena with the sum of the weights occuring along the path used to reach this position.
A position may be reachable by several paths, therefore it will be duplicated as many times as they are
different path utilities. This may be unbounded, but Lemma \ref{lem:regretbounded} ensures that it is sufficient to sum the weights 
up to $B$ only. This may results in a larger graph, but
its size is still pseudo-polynomial (polynomial in the maximal weight
and the size of the graph). 

\begin{definition}\label{def:reduc2}
Let $G = (S=S_1\uplus S_2,s_0,T,\mu_1,\target_1)$ be a weigthed game arena. The graph of utility
is the \twa  $G' = (S'=S'_1\uplus S'_2,s'_0,T',\mu'_1,\target_1')$ defined by: 

\begin{list}{\labelitemi}{\leftmargin=1em}
\item $S'_i \subseteq S_i\times [B]$, $i=1,2$ and $s'_0 = (s_0,0)$;
\item for all $(s,u),(s',u')\in S'$, $((s,u),(s',u'))\in T'$ iff
  $(s,s')\in T$ and $u' =  u+\mu_1(s,s')$ ;
\item $\target'_1 = (\target_1\times [B])\cap S'$ and $\forall (s,u)\in \target'_1$,
$\mu'_1(s,u) = u$.
\end{list}
\end{definition}

We now prove that $\reg_1^G = \reg^{G'}_1$. 
The utility information added to the nodes of $G$ is uniquely determined 
by the path used to reach the current position. Therefore the strategies
of both players in $G$ can naturally be mapped to strategies in $G'$.
More formally, we define a mapping $\Phi$ from $\Lambda_1(G)\cup \Lambda_2(G)$ into
$\Lambda_1(G')\cup \Lambda_2(G')$. Let $i\in\{1,2\}$ and $\lambda_i\in\Lambda_i(G)$. 
Let $h\in\play_f(G')$ such that $\last(h)\in S'_i$. Let $s = \lambda_i(\proj_1(h))$ and $u = \mu_1(\proj_1(h).s)$.
Then

$$
\Phi(\lambda_i)(h) = 
\left\{\begin{array}{llllllll}
\perp & \text{if } u>B \\
(s,u) & \text{otherwise}
\end{array}\right.
$$

The mapping $\Phi$ is surjective, but not necessarily injective. Indeed, two strategies that behave
similarly up to an utility $B$ are mapped to the same strategy in $G'$.
Let $\lambda'_i\in\Lambda_i(G')$. Any strategy $\lambda_i\in
\Lambda_i(G)$ that behaves like $\lambda'_i$ (on the first
projections of plays) while the utility of the play is
bounded by $B$ is a preimage of $\lambda'_i$. Formally, 
for all $h = s_0s_1\dots s_n \in\play(G)$, we let $\tilde{h} =
(s_0,u_0) (s_1,u_1)\dots (s_n,u_n)$ where for all $j$, $u_i =
\mu_1(s_0\dots s_j)$. Then, any strategy
$\lambda_i\in\Lambda_i(G)$
is a preimage of $\lambda'_i$ iff for all finite play
$h\in\play(G)$ such that $\last(h)\in S_i$, all $s\in S$, 
all $u\in\mathbb{N}$, if $\lambda'_i(\tilde{h})$ is defined
and equal to $(s,u)$, then $\lambda_i(h) = s$. 

\begin{lemma}\label{lem:surjection}
  For all $i=1,2$, $\Phi(\Lambda_i(G)) = \Lambda_i(G')$.
\end{lemma}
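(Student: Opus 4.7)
The plan is to prove the two inclusions separately. The inclusion $\Phi(\Lambda_i(G)) \subseteq \Lambda_i(G')$ is immediate from the definition of $\Phi$: for any $\lambda_i \in \Lambda_i(G)$ and any $h \in \play_f(G')$ with $\last(h) \in S'_i$, the value $\Phi(\lambda_i)(h)$ is either $\perp$ (when $\lambda_i(\proj_1(h))$ has no successor in $G$ or when the updated utility would exceed $B$) or a pair $(s,u)$ which, by the definition of $T'$ in Definition~\ref{def:reduc2}, is a legitimate successor of $\last(h)$ in $G'$. Hence $\Phi(\lambda_i)$ satisfies the requirements of a strategy.

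For the reverse inclusion, fix $\lambda'_i \in \Lambda_i(G')$. The construction rests on the observation that the map $h \mapsto \tilde{h}$ is a bijection between finite plays of $G$ whose prefix utilities never exceed $B$ and finite plays of $G'$, with inverse given by $\proj_1$. I would then define a preimage $\lambda_i \in \Lambda_i(G)$ by case analysis on $h \in \play_f(G)$ with $\last(h) \in S_i$: if $\tilde{h} \in \play_f(G')$ and $\lambda'_i(\tilde{h}) \neq \perp$, set $\lambda_i(h) := \proj_1(\lambda'_i(\tilde{h}))$; otherwise, set $\lambda_i(h)$ to be an arbitrary successor of $\last(h)$ in $G$, and $\perp$ if no such successor exists. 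This defines a valid strategy in $\Lambda_i(G)$.

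The verification that $\Phi(\lambda_i) = \lambda'_i$ proceeds pointwise: given $h \in \play_f(G')$ with $\last(h) \in S'_i$, let $\bar{h} = \proj_1(h)$, so that $\widetilde{\bar{h}} = h$. If $\lambda'_i(h) = (s, u')$, then $\lambda_i(\bar{h}) = s$ by construction, and unrolling the utility recurrence of Definition~\ref{def:reduc2} gives $u' = \mu_1(\bar{h}\cdot s)$, so $\Phi(\lambda_i)(h) = (s,u') = \lambda'_i(h)$. If $\lambda'_i(h) = \perp$, it is because $\last(h)$ has no successor in $G'$, which occurs in two sub-cases: either $\last(\bar{h})$ already has no successor in $G$, in which case $\lambda_i(\bar{h}) = \perp$ forces $\Phi(\lambda_i)(h) = \perp$; or every successor of $\last(\bar{h})$ in $G$ would push the accumulated utility strictly above $B$, in which case the overflow clause in the definition of $\Phi$ yields $\perp$ regardless of the arbitrary successor picked for $\lambda_i(\bar{h})$.

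The main obstacle is exactly this disambiguation of the two reasons why $\lambda'_i$ may take the value $\perp$ at a node of $G'$, and checking that the arbitrary completion of $\lambda_i$ off the $B$-bounded plays of $G$ is consistent with $\Phi$ in both sub-cases. Once this case split is handled, the equality $\Phi(\lambda_i) = \lambda'_i$ follows uniformly, yielding surjectivity.
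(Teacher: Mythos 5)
Your proof is correct and matches the paper's (largely implicit) argument: the paper justifies this lemma only by describing, in the text preceding it, exactly the preimage construction you formalize via the encoding $h \mapsto \tilde h$ of $B$-bounded plays, and your pointwise verification of $\Phi(\lambda_i)=\lambda'_i$, including the two sub-cases behind $\lambda'_i(h)=\perp$, is the right way to make that sketch precise. The only caveat — inherited from the paper's definitions rather than introduced by you — is that in the forward inclusion $\Phi(\lambda_i)$ may return $\perp$ at a node of $G'$ that still has an outgoing edge (when the successor chosen by $\lambda_i$ overflows $B$ but another successor does not), which strictly violates the paper's requirement that a strategy return $\perp$ only at dead ends; the paper glosses over this in exactly the same way, and only the surjectivity direction is actually used later.
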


\noindent We denote by $\Lambda_1^{\leq B}(G)$ the set of strategies bounded by
$B$. The mapping $\Phi$ preserves the regret values for bounded strategies:

\begin{lemma}\label{lem:preserveregret}
  $\forall \lambda_1\in\Lambda_1^{\leq B}(G)$, $\reg_1^G(\lambda_1) =
  \reg^{G'}_1(\Phi(\lambda_1))$.
\end{lemma}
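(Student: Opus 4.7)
The plan is to push the computation of $\reg_1^{G'}(\Phi(\lambda_1))$ back to $G$ by exploiting two correspondences: the bijection between outcomes in $G$ and $G'$ induced by the $\widetilde{\,\cdot\,}$ decoration, and the surjectivity of $\Phi$ (Lemma \ref{lem:surjection}). Fix $\lambda_1 \in \Lambda_1^{\leq B}(G)$. The first step is to observe that for every $\lambda_2 \in \Lambda_2(G)$, the outcome $\pi = \outcome^G(\lambda_1,\lambda_2)$ is winning and its accumulated weight stays $\leq B$ along the way up to the first visit to $\target_1$ (since $\lambda_1$ is bounded by $B$). Consequently, $\widetilde{\pi}$ never leaves $S'$, and a routine induction on play prefixes shows $\widetilde{\pi} = \outcome^{G'}(\Phi(\lambda_1),\Phi(\lambda_2))$. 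By the definition of $\mu'_1$ one then gets $\ut_1^{G'}(\Phi(\lambda_1),\Phi(\lambda_2)) = \ut_1^G(\lambda_1,\lambda_2)$.

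The second and main step is to prove the identity $\br_1^{G'}(\Phi(\lambda_2)) = \br_1^G(\lambda_2)$ for every $\lambda_2 \in \Lambda_2(G)$. Because $\lambda_1$ is bounded, $\br_1^G(\lambda_2) \leq \ut_1^G(\lambda_1,\lambda_2) \leq B$. I first claim that for any $\mu_1 \in \Lambda_1(G)$,
$$
\ut_1^{G'}(\Phi(\mu_1),\Phi(\lambda_2)) \;=\;
\begin{cases}
\ut_1^G(\mu_1,\lambda_2) & \text{if } \ut_1^G(\mu_1,\lambda_2) \leq B, \\
+\infty & \text{otherwise},
\end{cases}
$$
which follows from the same decoration argument: as long as the running weight in the $G$-outcome stays $\leq B$, the $G'$-outcome mimics it, and once it would exceed $B$ the outgoing edge in $G'$ simply does not exist, so the play gets stuck before reaching $\target'_1$. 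Combined with Lemma \ref{lem:surjection}, which lets me range over $\rho \in \Lambda_1(G')$ by taking preimages $\mu_1 \in \Lambda_1(G)$ with $\Phi(\mu_1)=\rho$, I obtain $\br_1^{G'}(\Phi(\lambda_2)) = \min_{\mu_1 \in \Lambda_1(G)} \ut_1^{G'}(\Phi(\mu_1),\Phi(\lambda_2))$. The minimum is witnessed by any optimal best response in $G$, whose utility is $\leq B$, so both minima coincide.

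The final step glues everything together. Using surjectivity of $\Phi$ on $\Lambda_2$, I rewrite the maximum in the definition of $\reg_1^{G'}(\Phi(\lambda_1))$ as a maximum indexed by $\lambda_2 \in \Lambda_2(G)$, then substitute the two identities established in the previous steps:
$$
\reg_1^{G'}(\Phi(\lambda_1)) = \max_{\lambda_2 \in \Lambda_2(G)} \bigl[\ut_1^{G'}(\Phi(\lambda_1),\Phi(\lambda_2)) - \br_1^{G'}(\Phi(\lambda_2))\bigr] = \max_{\lambda_2} \bigl[\ut_1^G(\lambda_1,\lambda_2) - \br_1^G(\lambda_2)\bigr] = \reg_1^G(\lambda_1).
$$

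The main obstacle I anticipate is the second step, and specifically the need to argue carefully about what happens when the preimage $\mu_1$ of a $G'$-strategy is \emph{not} itself bounded by $B$: one must verify that the $G'$-outcome truly becomes losing (the play reaches a dead end strictly before $\target'_1$) rather than somehow detouring, and that this causes no loss for the minimization because the optimal $G$-strategy witnessing $\br_1^G(\lambda_2)$ already stays within the bound $B$. Everything else is a direct unfolding of the definitions and a transport along the $\widetilde{\,\cdot\,}$ correspondence already used implicitly in Lemma \ref{lem:surjection}.
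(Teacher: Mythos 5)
Your proof is correct, and it rests on the same core fact as the paper's (the ``decoration'' correspondence between outcomes in $G$ and $G'$), but it is organized differently. The paper proves an auxiliary lemma stating only that $\ut_1^G(\lambda_1,\lambda_2)=\ut_1^{G'}(\Phi(\lambda_1),\Phi(\lambda_2))$ \emph{under the hypothesis} that the $G'$-outcome is winning or the $G$-utility is $\leq B$, and then derives the two inequalities $\reg_1^G(\lambda_1)\leq\reg_1^{G'}(\Phi(\lambda_1))$ and $\geq$ separately, each time instantiating a maximizing $\lambda_2$ and a best-response witness and comparing against $\br$ by inequality only. You instead prove the stronger pointwise case-distinction formula (equality below $B$, $+\infty$ above) and deduce the exact identity $\br_1^{G'}(\Phi(\lambda_2))=\br_1^G(\lambda_2)$ for every $\lambda_2$, after which the result follows by termwise substitution under the max. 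The extra ingredient you need --- that the $G'$-outcome is genuinely losing whenever the $G$-utility exceeds $B$ --- is exactly the obstacle you flag, and your justification is sound: weights are nonnegative, so the running sum is monotone, the projection of the $G'$-play is a prefix of the $G$-play, and a first visit to $\target'_1$ in $G'$ at weight $\leq B$ would force $\ut_1^G(\mu_1,\lambda_2)\leq B$; otherwise $\Phi$ returns $\perp$ before $\target'_1$ is reached. Your version buys a cleaner, more symmetric argument (and incidentally shows best responses are preserved exactly, not just bounded in the right direction); the paper's version gets away with a weaker auxiliary lemma at the cost of a two-sided argument. One cosmetic caveat: you reuse $\mu_1$ both for a Player~1 strategy and for the weight function of the arena; rename the strategy to avoid the clash.
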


\begin{proof}(Sketch) This lemma is supported by the following result: for all $\lambda_1\in\Lambda_1(G)$ and
  $\lambda_2\in\Lambda_2(G)$. If $\outcome^{G'}(\Phi(\lambda_1),\Phi(\lambda_2))$ is winning for
  Player $1$ in $G'$ or $\ut_{1}^G(\lambda_1,\lambda_2)\leq B$, then $\ut^G_1(\lambda_1,\lambda_2) =
  \ut^{G'}_1(\Phi(\lambda_1),\Phi(\lambda_2))$. 
\end{proof}

Note that any strategy $\lambda_1\in \Lambda_1(G)$ is bounded by $B$ iff 
$\Phi(\lambda_1)$ is winning in $G'$ for Player $1$. We can now prove
the correctness of the reduction:

\begin{lemma}\label{lemma:sameregrets_G_G':graphs}
    $\reg^G_1  = \reg^{G'}_1$
\end{lemma}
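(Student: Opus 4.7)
The plan is to establish the equality by proving the two inequalities $\reg^{G'}_1 \leq \reg^G_1$ and $\reg^G_1 \leq \reg^{G'}_1$, leveraging the mapping $\Phi$ together with Lemmas \ref{lem:regretbounded}, \ref{lem:surjection} and \ref{lem:preserveregret}.

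First, I will dispose of the trivial case where Player $1$ has no winning strategy in $G$. By construction of $G'$, any outcome of any pair of strategies in $G'$ projects via $\proj_1$ to an outcome in $G$, so Player $1$ cannot win in $G'$ either; by the proposition on finite regret, both $\reg^G_1$ and $\reg^{G'}_1$ are then $+\infty$ and we are done.

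For the interesting case, assume Player $1$ has a winning strategy in $G$. For $\reg^{G'}_1 \leq \reg^G_1$, I will invoke Lemma \ref{lem:regretbounded} to obtain a regret-minimizing strategy $\lambda_1^* \in \Lambda_1^{\leq B}(G)$ where $B = 2M^G|S|$. Lemma \ref{lem:preserveregret} applies and gives $\reg^{G'}_1(\Phi(\lambda_1^*)) = \reg^G_1(\lambda_1^*) = \reg^G_1$, so $\reg^{G'}_1 \leq \reg^{G'}_1(\Phi(\lambda_1^*)) = \reg^G_1$. For the converse, I pick any $\lambda_1' \in \Lambda_1(G')$ achieving $\reg^{G'}_1$; the inequality is trivial if $\reg^{G'}_1 = +\infty$, so I assume $\lambda_1'$ is winning in $G'$. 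By Lemma \ref{lem:surjection} there is some $\lambda_1 \in \Lambda_1(G)$ with $\Phi(\lambda_1) = \lambda_1'$. I will then argue that any such preimage is automatically in $\Lambda_1^{\leq B}(G)$: fix any $\lambda_2\in\Lambda_2(G)$ and set $\lambda_2' = \Phi(\lambda_2)$; since $\lambda_1'$ is winning, $\outcome^{G'}(\lambda_1',\lambda_2')$ reaches $\target_1' = (\target_1\times [B])\cap S'$, and by the very definition of $G'$ and $\Phi$ this forces $\outcome^G(\lambda_1,\lambda_2)$ to reach $\target_1$ with accumulated utility at most $B$. Hence $\lambda_1$ is bounded by $B$, and Lemma \ref{lem:preserveregret} gives $\reg^G_1 \leq \reg^G_1(\lambda_1) = \reg^{G'}_1(\lambda_1') = \reg^{G'}_1$.

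The main obstacle is the last step: arguing that a regret-minimizing strategy in $G'$ lifts to a \emph{bounded} strategy in $G$. This is exactly where the specific definition of $\target_1'$ (restricting the second coordinate to $[B]$) pays off, by converting the abstract property ``$\lambda_1'$ is winning in $G'$'' into the concrete uniform bound on the utilities of the outcomes of its preimage in $G$, making Lemma \ref{lem:preserveregret} applicable.
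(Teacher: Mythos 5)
Your proof is correct and follows essentially the same route as the paper's: the same case split on whether Player $1$ can win, the same use of Lemma \ref{lem:regretbounded} to get a $B$-bounded regret-minimizing strategy in $G$, and the same use of Lemma \ref{lem:surjection} plus the observation that a preimage of a winning strategy of $G'$ is bounded by $B$ (which the paper states as a remark just before the lemma), so that Lemma \ref{lem:preserveregret} applies in both directions.
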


\begin{proof}
    Suppose that $\reg^G_1 = +\infty$. If $\reg^{G'}_1<+\infty$, then
    there is a strategy $\lambda'_1\in \Lambda_1(G')$ winning in $G'$ for Player $1$. 
    By Lemma \ref{lem:surjection}, $\lambda'_1 = \Phi(\lambda_1)$ for some $\lambda_1\in \Lambda_1(G)$.
    Since $\Phi(\lambda_1)$ is winning, $\lambda_1$ is bounded by $B$, and \textit{a fortiori} winning. Thus $\reg^G_1 < +\infty$,
    which is a contradiction. Therefore $\reg^{G'}_1 = \reg^G_1 = +\infty$.

    \noindent Suppose that $\reg^G_1 < +\infty$. Thus there is a winning strategy $\lambda_1$ which minimizes the regret.
    By Lemma \ref{lem:regretbounded}, $\lambda_1$ is bounded $B$. By Lemma \ref{lem:preserveregret}, $\reg^G_1(\lambda_1) = \reg^{G'}_1(\Phi(\lambda_1))$.
    Thus $\reg^G_1 = \reg^G_1(\lambda_1) = \reg^{G'}_1(\Phi(\lambda_1)) \geq \reg^{G'}_1$. Conversely, since
    $\Phi(\lambda_1)$ is winning in $G'$, there is a winning strategy $\gamma'_1\in \Lambda_1(G')$ minimizing the regret.
    By Lemma \ref{lem:surjection}, $\gamma'_1 = \Phi(\gamma_1)$ for some $\gamma_1\in \Lambda_1(G)$. Since $\Phi(\gamma_1)$ is winning, 
    $\gamma_1$ is bounded by $B$, and by Lemma \ref{lem:preserveregret}, $\reg^G_1(\gamma_1) = \reg^{G'}_1(\gamma'_1)$.
    So $\reg_1^G \leq \reg_1^G(\gamma_1) = \reg^{G'}_1(\gamma'_1) = \reg^{G'}_1$.
\end{proof}

To solve the \RMP for a weighted arena $G$, we first construct the graph of utility $G'$, and
then apply Theorem \ref{thm:rmp:twa}, since $G'$ is a \twa. Correctness is ensured
by Lemma \ref{lemma:sameregrets_G_G':graphs}. This returns a finite-memory strategy of $G'$ that
minimizes the regret, whose memory is the best alternative seen so far.
To obtain a strategy of $G$ minimizing the regret, one applies the inverse
mapping $\Phi^{-1}$ defined previously. This gives us a finite-memory strategy whose
memory is the utility of the current play up to $M^G$ and the best alternative
seen so far.

\begin{theorem}\label{thm:rmp:graphs}
  The \RMP on a weighted arena $G = (S=S_1\uplus S_2,s_0,T,\mu_1,\target_1)$
  can be solved in time $O\left((M^G)^2\cdot log_2(|S|\cdot M^G)\cdot |S|\cdot
  \target_1\cdot(|S|+|T|)\right)$.
\end{theorem}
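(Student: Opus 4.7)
The plan is to assemble the theorem from ingredients already prepared in this section. Given the weighted arena $G$, first I construct the graph of utility $G'$ according to Definition~\ref{def:reduc2}; this is a $\twa$ in which every state carries, in addition to a position of $S$, the cumulative weight along the path used to reach it, truncated at $B = 2M^G|S|$. Then I apply Theorem~\ref{thm:rmp:twa} to $G'$ to obtain $\reg_1^{G'}$ together with a finite-memory strategy of $G'$ achieving it. Correctness of the pipeline is exactly Lemma~\ref{lemma:sameregrets_G_G':graphs}: $\reg_1^G = \reg_1^{G'}$. To produce a strategy of $G$ realizing the minimal regret, I lift the strategy of $G'$ through the inverse of the mapping $\Phi$ (well-defined on $\Lambda_1(G')$ by Lemma~\ref{lem:surjection}); regret preservation of this lifting for bounded strategies is given by Lemma~\ref{lem:preserveregret}, and Lemma~\ref{lem:regretbounded} ensures that restricting to strategies bounded by $B$ loses no regret-minimizing behavior.

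The non-trivial part is the complexity bookkeeping, which I would carry out as follows. Each reachable position of $G'$ has the form $(s,u)$ with $u\in[B]$, so $|S'| = O(M^G|S|^2)$ and $|T'| = O(M^G|S|\cdot|T|)$ (each edge of $T$ is duplicated once per cumulative utility value). The target set satisfies $|\target'_1| = O(M^G|S|\cdot|\target_1|)$, and the maximum weight of the $\twa$ is $M_1^{G'} \leq B = O(M^G|S|)$. Substituting these into the bound of Theorem~\ref{thm:rmp:twa}, namely $O(|\target'_1|\cdot\log_2(M_1^{G'})\cdot(|S'|+|T'|))$, and factoring out the common $M^G|S|$ from $|S'|+|T'|$, yields the stated pseudo-polynomial bound $O((M^G)^2 \cdot \log_2(|S|\cdot M^G)\cdot |S|\cdot |\target_1|\cdot(|S|+|T|))$.

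The strategy returned by this procedure is finite-memory: Theorem~\ref{thm:rmp:twa} produces a strategy on $G'$ whose memory is the best alternative seen so far (bounded by $|\target'_1|$), and applying $\Phi^{-1}$ adds the cumulative utility, bounded by $B$, as additional memory. The main obstacle I foresee is purely the size analysis: one has to verify that only reachable $(s,u)$ pairs are counted, that the factor $\log_2(M_1^{G'}) = O(\log_2(M^G|S|))$ absorbs correctly into the stated logarithm, and that the two layers of memory composition (best-alternative tracking from the $\twa$ step plus utility tracking from the reduction) do not blow up the pseudo-polynomial budget. Everything else is a direct application of the machinery developed in Sections~\ref{sec:targetgraph} and~\ref{sec:graphs}.
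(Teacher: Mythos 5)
Your pipeline is exactly the paper's: the paper gives no separate proof of this theorem beyond the paragraph preceding it, which describes precisely the reduction you use (build the graph of utility $G'$ of Definition~\ref{def:reduc2}, invoke Theorem~\ref{thm:rmp:twa} on it, justify correctness by Lemma~\ref{lemma:sameregrets_G_G':graphs}, and pull a regret-minimizing strategy back through $\Phi^{-1}$ using Lemmas~\ref{lem:regretbounded}, \ref{lem:surjection} and~\ref{lem:preserveregret}). The qualitative part of your argument is therefore fine and matches the source.

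The complexity bookkeeping, however, does not close as you state it. With your own estimates $|\target'_1| = O(M^G\cdot|S|\cdot|\target_1|)$ and $|S'|+|T'| = O\left(M^G\cdot|S|\cdot(|S|+|T|)\right)$, the product $|\target'_1|\cdot\log_2(M_1^{G'})\cdot(|S'|+|T'|)$ is $O\left((M^G)^2\cdot|S|^2\cdot|\target_1|\cdot\log_2(M^G|S|)\cdot(|S|+|T|)\right)$, which carries an extra factor $|S|$ compared with the bound in the statement; your claim that the substitution ``yields the stated bound'' is not arithmetic that checks out. Note also that the factor $|\target_1|$ in Theorem~\ref{thm:rmp:twa} really stands for the number of distinct best-alternative values, and for $G'$ this is bounded by $B+1 = O(M^G|S|)$ (all target weights of $G'$ lie in $[B]$), which is a better count than the crude $|\target'_1|$ when $|\target_1|$ is large --- but even with it one obtains $O\left((M^G)^2\cdot|S|^2\cdot\log_2(M^G|S|)\cdot(|S|+|T|)\right)$, again with $|S|^2$ where the theorem has $|S|\cdot|\target_1|$. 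Since $|\target_1|\le|S|$, what you derive is only a weaker bound than the one claimed; to match the statement you would need an additional argument bounding the number of distinct best alternatives of $G'$ by $O(M^G\cdot|\target_1|)$, which does not follow from anything you wrote. The discrepancy may well originate in looseness of the paper's own statement, but as presented the final step of your complexity analysis is a genuine gap.
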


\section{Iterated Regret Minimization (IRM)}\label{sec:iterate}

In this section, we show how to compute the iterated
regret for tree arenas and for weighted arenas where weights are strictly
positive (by reduction to a tree arena).

Let $G = (S=S_1\uplus S_2,s_0,T,\mu_1,\mu_2, \target_1, \target_2)$ 
be a weighted arena. Let $i\in\{1,2\}$, $P_i\subseteq \Lambda_i(G)$
and $P_{\m i}\subseteq \Lambda_{\m i}(G)$. The regret of Player $i$ when she plays 
strategies of $P_i$ and when Player $\m i$ plays strategies of $P_{\m i}$ is defined by:

$$
\begin{array}{lllllll}
\reg^{G,P_i,P_{\m i}}_i & = & \min\limits_{\lambda_i\in P_i}\max\limits_{\lambda_{\m i}\in P_{\m i}} \ut_i^G(\lambda_i,\lambda_{\m i}) - \br^{G,P_i}_i(\lambda_{\m i})\\
\br^{G,P_i}_i(\lambda_{\m i}) & = & \min_{\lambda^*_i\in P_i} \ut_{i}^G(\lambda^*_i,\lambda_{\m i})
\end{array}
$$

For all $\lambda_i\in P_i$ and $\lambda_{\m i}\in P_{\m i}$, we define $\reg_i^{G,P_i,P_{\m i}}(\lambda_i)$ and
$\reg_i^{G,P_i,P_{\m i}}(\lambda_i,\lambda_{\m i})$ accordingly. We
now define the strategies of rank $j$, which are the one that survived
$j$ times the deletion of strictly dominated strategies.
The strategies of rank $0$ for Player $i$ is $\Lambda_i(G)$.
The strategies of rank $1$ for both players are 
those which minimize their regret against strategy of rank $0$.  More generally,
the strategies of rank $j$ for Player $i$ are the strategies of rank
$j-1$  which minimize her regret against Player $\m i$'s strategies of
rank $j-1$. Formally, strategies of rank $j$ are obtained via a
\textit{delete} operator $D : 2^{\Lambda_1(G)}\times
2^{\Lambda_2(G)}\rightarrow 2^{\Lambda_1(G)}\times
2^{\Lambda_2(G)}$ such that for all $P_1\subseteq \Lambda_1(G)$ and
all $P_2\subseteq \Lambda_2(G)$, 

$$
\begin{array}{lcc}
  & & \{ \lambda_1\in P_1 | \reg_1^{G,P_1,P_2} = \reg_1^{G,P_1,P_2}(\lambda_1) \} \\
D(P_1,P_2) & = &  \times  \\
& &  \{ \lambda_2\in P_2 | \reg_2^{G,P_2,P_1} = \reg_2^{G,P_2,P_1}(\lambda_2) \}
\end{array}
$$

\noindent We denote by $D^j$ the composition of $D$ $j$ times. 



\begin{definition}[$j$-th regret]
    Let $j\geq 0$. The set of strategies of rank $j$ for Player $i$ is $P^j_i = \proj_i(D^j(\Lambda_1(G),\Lambda_2(G)))$.
    The $j+1$-th regret for Player $i$ is defined by $\reg^{G,j+1}_{i}
    = \reg^{G,P^j_i,P_{\m i}^j}_i$. In particular, $\reg^{G,1}_{i} =
    \reg^G_i$.
\end{definition}

\begin{proposition}\label{lem:decreaseregret}
  Let $i\in\{1,2\}$. For all $j\geq 0$, $P^{j+1}_i\subseteq P^j_i$ and 
  $\reg^{G,j+1}_i\leq \reg^{G,j}_i$.
\end{proposition}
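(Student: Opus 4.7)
The plan is to handle the two claims in sequence, doing the inclusion first and using it to drive the inequality. The set inclusion $P^{j+1}_i\subseteq P^j_i$ follows directly from unpacking the definition of the delete operator: by construction $D(P_1,P_2)$ is a product of subsets of $P_1$ and $P_2$ respectively (the operator keeps only those strategies that already belong to the input sets and, additionally, achieve the minimal regret with respect to these sets). Applying this with $P_i = P^j_i$ gives $D(P^j_1,P^j_2)\subseteq P^j_1\times P^j_2$, and projecting on the $i$-th coordinate yields $P^{j+1}_i\subseteq P^j_i$.

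For the regret inequality I would pick any witness $\lambda_i\in P^j_i$ and track how its regret changes between rounds. By the definition of $P^j_i$, such a $\lambda_i$ satisfies $\reg_i^{G,P^{j-1}_i,P^{j-1}_{\m i}}(\lambda_i) = \reg_i^{G,j}$. I then want to compare this quantity with $\reg_i^{G,P^j_i,P^j_{\m i}}(\lambda_i)$. Two monotonicity observations do the job. First, since $P^j_{\m i}\subseteq P^{j-1}_{\m i}$ (by the already-established inclusion applied to the opponent), the outer maximum over $\lambda_{\m i}$ is taken over a smaller set and can only decrease. Second, since $P^j_i\subseteq P^{j-1}_i$, for every $\lambda_{\m i}$ we have
$$
\br_i^{G,P^j_i}(\lambda_{\m i}) \;=\; \min_{\lambda^*_i\in P^j_i} \ut_i^G(\lambda^*_i,\lambda_{\m i}) \;\geq\; \min_{\lambda^*_i\in P^{j-1}_i} \ut_i^G(\lambda^*_i,\lambda_{\m i}) \;=\; \br_i^{G,P^{j-1}_i}(\lambda_{\m i}),
$$
so the pointwise inner quantity $\ut_i^G(\lambda_i,\lambda_{\m i}) - \br_i^{G,P^j_i}(\lambda_{\m i})$ is no larger than $\ut_i^G(\lambda_i,\lambda_{\m i}) - \br_i^{G,P^{j-1}_i}(\lambda_{\m i})$. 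Both effects push the maximum in the same direction.

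Combining them yields $\reg_i^{G,P^j_i,P^j_{\m i}}(\lambda_i)\leq \reg_i^{G,P^{j-1}_i,P^{j-1}_{\m i}}(\lambda_i) = \reg_i^{G,j}$, and minimizing the left-hand side over $\lambda_i\in P^j_i$ gives $\reg_i^{G,j+1}\leq \reg_i^{G,j}$. I expect no real obstacle here; the only subtlety is verifying that $P^j_i$ is nonempty so a witness $\lambda_i$ exists and the outer $\min$ is actually attained at each stage. This is handled by a straightforward induction on $j$ using the fact that on a finite reachability arena the inner $\min$s and $\max$s defining $\reg_i^{G,P^j_i,P^j_{\m i}}$ range over finite-image quantities (it is enough to consider finite-memory strategies, as in Theorems \ref{thm:rmp:twa} and \ref{thm:rmp:graphs}) and are therefore attained. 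The case $j=0$ is automatic: $P^1_i\subseteq P^0_i=\Lambda_i(G)$ by definition, and the regret inequality becomes vacuous (or uses the convention $\reg^{G,0}_i = +\infty$) while the induction properly takes off from $j=1$.
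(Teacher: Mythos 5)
Your proof is correct and follows essentially the same route as the paper's: the inclusion comes directly from the definition of the delete operator $D$, and the regret inequality from the two monotonicity observations that shrinking $P_i$ can only increase the best-response value $\br^{G,P_i}_i(\lambda_{\m i})$ while shrinking $P_{\m i}$ can only decrease the outer maximum. Your extra remarks on nonemptiness, attainment of the optima, and the $j=0$ boundary case are reasonable added care but do not change the argument.
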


\begin{proof}(Sketch)
  $P^{j+1}_i\subseteq P^j_i$ is by definition the operator $D$.
  For all $\lambda_{\m i}\in P^j_{\m i}$,
  $\br^{G,P^{j+1}_i}(\lambda_{\m i})\geq \br^{G,P^j_i}(\lambda_{\m i})$ (because
  we minimize over less strategies).
  Thus for all $\lambda_i\in P^j_i$ and $\lambda_{\m i}\in
  P^j_{\m i}$, $\reg^{G,j+1}_i(\lambda_i,\lambda_{\m i}) \leq
  \reg^{G,j}_i(\lambda_i,\lambda_{\m i})$. 
  Since $P^{j+1}_{\m i}\subseteq P^j_{\m i}$, 
  $\reg^{G,j+1}_i(\lambda_i) \leq \reg^{G,j}_i(\lambda_i)$ (because we
  maximize over less strategies).  Therefore $\reg^{G,j+1}_i\leq
  \reg^{G,j+1}_i(\lambda_i) \leq \reg^{G,j}_i(\lambda_i) =
  \reg^{G,j}_i$.
\end{proof}

Clearly, the sequence of regrets converges:

\begin{proposition}\label{lem:stable}
  There is an integer $\star \geq 1$ such that for all $j\geq \star$,
  for all $i\in\{1,2\}$, $\reg^{G,j}_i = \reg^{G,\star}_i$.
\end{proposition}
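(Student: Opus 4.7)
The plan is to observe that each of the two sequences $(\reg^{G,j}_i)_{j\geq 1}$ for $i\in\{1,2\}$ takes values in $\mathbb{N}\cup\{+\infty\}$ (since utilities are finite sums of integer weights or $+\infty$, and regrets are non-negative differences of such values) and is non-increasing by Proposition~\ref{lem:decreaseregret}. The conclusion will follow from the elementary fact that any non-increasing sequence in $\mathbb{N}\cup\{+\infty\}$ must stabilize, plus a coordination step to pick a single index $\star$ that works for both players.

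First, I would fix $i\in\{1,2\}$ and argue about the sequence $(\reg^{G,j}_i)_j$ alone. Two cases: either $\reg^{G,j}_i = +\infty$ for every $j$, in which case trivially any index works; or there exists some $j_0$ with $\reg^{G,j_0}_i < +\infty$. In the latter case, by Proposition~\ref{lem:decreaseregret}, for every $j\geq j_0$ we have $\reg^{G,j}_i \leq \reg^{G,j_0}_i < +\infty$, so from index $j_0$ onwards the sequence lies in $\{0,1,\dots,\reg^{G,j_0}_i\}$ and is non-increasing. A non-increasing sequence of non-negative integers is eventually constant, so there exists $\star_i \geq j_0$ such that $\reg^{G,j}_i = \reg^{G,\star_i}_i$ for all $j\geq \star_i$.

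Finally, I would set $\star = \max(\star_1,\star_2)$; this is well-defined and $\geq 1$, and by monotonicity the equalities $\reg^{G,j}_i = \reg^{G,\star_i}_i = \reg^{G,\star}_i$ hold for every $j\geq \star$ and each $i$, which is exactly the statement.

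The argument is essentially a finiteness observation, so there is no real obstacle; the only subtlety worth spelling out is the $+\infty$ case, where one must note that once the regret becomes finite at some rank it stays finite thereafter (again thanks to the monotonicity of Proposition~\ref{lem:decreaseregret}), which is what makes the ``eventually in $\mathbb{N}$'' argument legitimate.
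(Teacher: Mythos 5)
Your argument is correct and is exactly the justification the paper leaves implicit (it only asserts ``Clearly, the sequence of regrets converges'' after Proposition~\ref{lem:decreaseregret}): the regrets form non-increasing sequences in $\mathbb{N}\cup\{+\infty\}$, so each stabilizes, and taking the maximum of the two stabilization indices gives $\star$. Your care with the $+\infty$ case and with choosing a single index valid for both players is appropriate and introduces no gap.
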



\begin{definition}[iterated regret]
  For all $i=1,2$, the iterated regret of Player $i$ is
  $\reg^{G,\star}_i$.
\end{definition}

\begin{example}
  As we already saw in the Centipede Game depicted on
  Fig.~\ref{fig:centipede}, the Player $1$'s strategy minimizing
  her regret is to stop at the last step (move from $A$ to
  $B$, from $C$ to $D$ and from $E$ to $S$). Its regret value is $1$.
  The Player $2$'s strategy minimizing her regret is also to stop
  at the last step, i.e. to move from $B$ to $C$ and from $D$ to $E$,
  her regret being $1$. Therefore $\reg^G_1 = \reg^{G,1}_1 = 1$ and
  $\reg^G_2 = \reg^{G,1}_2 = 1$.  If Player $1$ knows that Player $2$
  will ultimately move to $E$, she can play the same strategy as
  before, and her regret $\reg^{G,2}_1$ is $0$. Similarly
  $\reg^{G,2}_2 = 0$.  Therefore $\reg^{G,\star}_1 = \reg^{G,\star}_2
  = 0$.
\end{example}

\subsection{IRM in Tree Arenas}\label{sec:iterate:trees}

In this section, we let $i\in\{1,2\}$ and $G = (S=S_1\uplus
S_2,s_0,T,\mu_1,\mu_2,\target_1,\target_2)$ be a finite
edge-weighted tree arena. We can transform $G$ into a target-weighted tree arena such that
$\target_1 = \target_2$ (denoted by $\target$ in the sequel) is the set of leaves of the tree, if we allow
the functions $\mu_i$ to take the value $+\infty$. This transformation
results in a new target-weighted tree arena $G' = (S=S_1\uplus
S_2,s_0,T,\mu'_1,\mu'_2,\target)$ with the same set of states
and transitions as $G$ and for all leaf $s\in \target$, 
$\mu'_i(s) = \ut^{G'}_i(\pi)$, where $\pi$ is the root-to-leave
path leading to $s$. The time
complexity of this transformation is $O(|S|)$.


We now assume that $G = (S=S_1\uplus
S_2,s_0,T,\mu_1,\mu_2,\target)$ is a target-weighted tree arena where
$\target$ is the set of leaves. Our goal is to define a delete
operator $D$ such that $D(G)$ is a subtree of $G$ such that
for all $i=1,2$, $\Lambda_i(D(G))$ are the strategies of $\Lambda_i(G)$ that minimize
$\reg^G_i$. In other words, any pairs of subsets of strategies for
both players in $G$ can be represented by a subtree of $G$. This is
possible since all the strategies in a tree
arena are memoryless.
A set of strategies $P_i\subseteq \Lambda_i(G)$ is therefore represented by removing from $G$ all the edges
$(s,s')$ such that there is no strategy $\lambda_i\in P_i$ such that $\lambda_i(s) = s'$.
In our case, one first compute the sets of strategies that minimize
regret. This is done as in Section \ref{sec:targetgraph} by constructing
the tree of best alternatives $H$ (but in this case with the best alternative of both players) and by solving
a min-max game. From $H$ we delete all edges 
that are not compatible with a strategy that minimize the $\minmax$ value of some player.
We obtain therefore a subtree $D(H)$ of $H$ such that any strategy of $H$ 
is a strategy of $D(H)$ for Player $i$
iff it minimizes the $\minmax$ value in $H$ for Player $i$. By projecting away the best alternative information
in $D(H)$, we obtain a subtree $D(G)$ of $G$ such that any Player $i$'s strategy of $G$ is a strategy of $D(G)$
iff it minimizes Player $i$'s regret in $G$. We can iterate this process to compute the iterated regret, and
we finally obtain a subtree $D^*(G)$ such that any strategy of $G$ minimizes the iterated regret
for Player $i$ iff it is a Player $i$'s strategy in $D^*(G)$.


\begin{definition}\label{def:reduc1:tree}
  The tree of best alternatives of $G$ is the tree $H = (S'=S'_1\uplus
  S'_2 ,s_0',T',\mu'_1,\mu'_2,\target')$ defined by:
\vspace{-1mm}
\begin{list}{\labelitemi}{\leftmargin=0.5em}
\vspace{-1mm}
\item $S'_i = \{ (s,b_1,b_2) \; | \; s\in S_i, b_\kappa= \ba^G_\kappa(\pi_s),\kappa=1,2\}$, where $\pi_s$ is the path from the root $s_0$ to $s$;
\vspace{-1mm}
\item $s_0' = (s_0,+\infty,,+\infty)$;
\vspace{-1mm}
\item $\forall s,s'\in S'$, $(s,s')\in T'$ iff $(\proj_1(s),\proj_1(s'))\in T$
\vspace{-1mm}
\item $\target' = \{s\in S'\ |\ \proj_1(s)\in\target\}$;
\vspace{-1mm}
\item \mbox{$\forall (s,b_1,b_2)\!\in\! \target',\mu_i'(s,b_1,b_2) = \mu_i(s) - \min (\mu_i(s),b_i)$}.
\end{list}
\end{definition}

Note that $H$ is isomorphic to $G$. There is indeed a one-to-one
mapping $\Phi$ between the states of $G$ and the states of $H$: for
all $s\in S$, $\Phi(s)$ is the only state $s'\in S'$ of the form
$s'=(s,b_1,b_2)$. Moreover, this mapping is naturally extended to
strategies. Since all strategies are memoryless, any strategy
$\lambda_i\in\Lambda_i(G)$ is a function $S_i \to S$. Thus, for all
$s'\in S'_i$, $\Phi(\lambda_i)(s')=
\Phi\left(\lambda_i(\Phi^{-1}(s'))\right)$. Without loss of generality
and for a technical reason, we assume that any strategy $\lambda_i$ is
only defined for states $s\in S_i$ that are compatible with this
strategy, i.e. if $s$ is not reachable under $\lambda_i$ then the
value of $\lambda_i$ does not need to be defined. The lemmas of
Section~\ref{sec:targetgraph} still hold for the tree $H$:

\begin{lemma}\label{lem:mapstwo:tree}
For all $i\in\{1,2\}$, $\Phi(\Lambda_i(G)) = \Lambda_i(H)$ and any strategy $\lambda_i\in \Lambda_i(G)$ minimizes
$\reg_i^G$ iff $\Phi(\lambda_i)$ minimizes $\minmax_i^{H}$. Moreover $\reg_i^G = \minmax_i^{H}$.
\end{lemma}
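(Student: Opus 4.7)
The plan is to mirror the proof of Lemma~\ref{lemma:regretminmax} from Section~\ref{sec:targetgraph}, now specialized to trees and lifted to both players simultaneously. Since $H$ is built by decorating each node $s$ of $G$ with the pair of best alternatives $(b_1,b_2)$ computed along the unique root-to-$s$ path in the tree $G$, these decorations are uniquely determined by $s$; hence $\Phi$ is a bijection on states. Because all strategies in a tree arena are memoryless, i.e.\ partial functions of the form $S_i\to S$, this bijection lifts directly to $\Phi(\Lambda_i(G)) = \Lambda_i(H)$ for $i=1,2$.

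Second, I would prove the tree analogue of Proposition~\ref{prop:ba}: for every finite play $\pi'$ of $H$ from $s'_0$ to a state $(s,b_1,b_2)$ and every $\kappa\in\{1,2\}$, $\ba_\kappa^H(\pi') = b_\kappa$. This follows by a straightforward induction on the length of $\pi'$, using the fact that $H$'s transition relation mirrors that of $G$ while updating the $\kappa$-th coordinate only at nodes owned by Player~$\kappa$. Armed with this, I would establish the two-player analogue of Lemma~\ref{lem:maxmax}: for each $i\in\{1,2\}$ and $\lambda_i\in\Lambda_i(H)$,
$$\reg_i^H(\lambda_i) \;=\; \max_{\lambda_{\m i}\in\Lambda_{\m i}(H)} \ut_i^H(\lambda_i,\lambda_{\m i}).$$
The argument is the one sketched for Lemma~\ref{lem:maxmax}: without changing the value of the regret, the adversary may be assumed to play cooperatively in every subtree she is not forced into by $\lambda_i$, so her best response to $\lambda_i$ along the chosen outcome $\pi$ equals $\min(\mu_i(\last(\pi)),\ba_i^H(\pi))$. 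By the previous step this equals $\min(\mu_i(s),b_i)$ at the terminating leaf $(s,b_1,b_2)$, and subtracting from $\mu_i(s)$ recovers exactly the target weight $\mu'_i(s,b_1,b_2)$ of Definition~\ref{def:reduc1:tree}.

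Finally, since $\Phi$ is an isomorphism it carries outcomes to outcomes path-wise, so $\reg_i^G(\lambda_1,\lambda_2) = \reg_i^H(\Phi(\lambda_1),\Phi(\lambda_2))$ for every pair of strategies, and hence $\reg_i^G(\lambda_i) = \reg_i^H(\Phi(\lambda_i))$. Chaining the equalities gives
$$\reg_i^G \;=\; \min_{\lambda_i\in\Lambda_i(G)} \reg_i^G(\lambda_i) \;=\; \min_{\lambda'_i\in\Lambda_i(H)}\max_{\lambda'_{\m i}\in\Lambda_{\m i}(H)} \ut_i^H(\lambda'_i,\lambda'_{\m i}) \;=\; \minmax_i^H,$$
and the very same chain shows that $\lambda_i$ achieves the minimum on the left iff $\Phi(\lambda_i)$ achieves the minimum on the right, which is the ``iff'' part of the statement. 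I expect the main obstacle to be a fully rigorous treatment of the cooperative-adversary step in the analogue of Lemma~\ref{lem:maxmax}, together with the edge cases where $\lambda_i$ is losing (both sides equal $+\infty$) or where the best alternative along the outcome stays $+\infty$ (so $\mu'_i$ coincides with $\mu_i$ at the leaf); the tree structure should keep these cases clean, because distinct subtrees form independent subgames in which the adversary can freely realize the $\best_\kappa$ values.
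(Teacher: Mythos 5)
Your proposal is correct and follows essentially the same route as the paper: the paper's own proof just observes that projecting away Player $\m i$'s best-alternative coordinate in $H$ recovers verbatim the construction of Section~\ref{sec:targetgraph} (with the weight change of Lemma~\ref{lemma:regretminmax} already folded into $\mu'_i$), and then invokes Proposition~\ref{prop:ba} and Lemmas~\ref{lem:bij}, \ref{lem:maxmax} and \ref{lemma:regretminmax} --- exactly the steps you re-derive by hand. The only caution is notational: your tree analogue of Proposition~\ref{prop:ba} and the best-response computation must be read with the original weights $\mu_i$ transported to $H$ (not the modified target weights $\mu'_i$ of Definition~\ref{def:reduc1:tree}), which is what you implicitly do when you reduce the best response to $\min(\mu_i(s),b_i)$.
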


As in Section~\ref{sec:targetgraph}, the \RMP on a tree arena can 
be solved by  min-max game. For all $s\in S'$, 
we define $\minmax_i^{H}(s) = \minmax_i^{(H,s)}$ and compute this value
by a backward induction algorithm. In particular, $\minmax_i^{H} = \minmax_i^{H}(s'_0)$ and for all $s\in S'$:

\vspace{-2mm}
$$
\minmax_i^{H}(s)= \left\{\begin{array}{llllll}
\mu_i'(s) & \!\!\!\text{if $s\in\target'$} \\
\min_{(s,s')\in T'} \minmax_i^{H}(s') & \!\!\!\text{if } s\in S'_i \\
\max_{(s,s')\in T'} \minmax_i^{H}(s') & \!\!\!\text{if } s\in S'_{\m i} \\
\end{array}\right.
$$
\vspace{-4mm}
 

\begin{theorem}\label{thm:rmp:tree}
  The \RMP on a tree arena $G = (S,s_0,T,\mu_1,\mu_2,\target)$ can be solved in
  $O\left(|S|\right)$.
\end{theorem}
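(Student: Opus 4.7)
The plan is to invoke Lemma~\ref{lem:mapstwo:tree}, which reduces the \RMP in $G$ to computing $\minmax_i^{H}$ together with a memoryless strategy attaining it in the tree of best alternatives $H$. Since $H$ is isomorphic to $G$ (as noted right after Definition~\ref{def:reduc1:tree}), I do not need to construct $H$ explicitly; instead, I will perform a constant number of depth-first passes over $G$ itself, annotating each node with the additional data carried by its counterpart in $H$, and argue that every pass runs in time linear in $|S|$.

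The computation proceeds as follows. First, if the input carries weights on edges, I apply the edge-to-target transformation described at the beginning of the subsection by a single top-down traversal that accumulates path sums to each leaf, in $O(|S|)$ time. Second, a bottom-up pass computes $\best_\kappa^G(s)$ for every $s\in S$ and every $\kappa\in\{1,2\}$, using the cooperative recurrence $\best_\kappa^G(s)=\min_{(s,s')\in T}\best_\kappa^G(s')$ at internal nodes (irrespective of ownership, since both players cooperate in the definition of $\best$). During that same pass, I record at every node the two smallest values of $\best_\kappa^G$ among its children, so that $\ba_\kappa^G(s,s')$ can be read off for each edge in $O(1)$ via the standard leave-one-out minimum. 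Third, a top-down pass labels each node $s$ with the pair $(b_1,b_2)$ of best alternatives accumulated along the unique path from $s_0$ to $s$, following the update rule of Definition~\ref{def:reduc1:tree} (the player owning the current node determines which coordinate is updated). Finally, a bottom-up pass evaluates the backward recurrence for $\minmax_i^{H}(s)$ stated just above the theorem, using the modified leaf values $\mu_i(s)-\min(\mu_i(s),b_i)$, and simultaneously extracts a memoryless regret-minimizing strategy by recording, at each $s\in S_i$, a successor attaining the minimum.

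Each of these passes touches every node and edge of $G$ a constant number of times and performs $O(1)$ arithmetic per edge, so the total running time is $O(|S|)$ (the tree has $|T|=|S|-1$ edges). By Lemma~\ref{lem:mapstwo:tree}, the value returned equals $\reg_i^G$ and the extracted strategy minimizes Player~$i$'s regret, which establishes the theorem. The only mildly delicate point is computing all best alternatives in amortized $O(1)$ per edge, for which the leave-one-out trick on the children's $\best$ values suffices; no genuine obstacle arises, as strategies on tree arenas are memoryless and the tree structure forces every recurrence to be evaluated exactly once per node.
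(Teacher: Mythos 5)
Your proposal is correct and follows essentially the same route as the paper: transform to a target-weighted tree, build the (isomorphic) tree of best alternatives of Definition~\ref{def:reduc1:tree}, and evaluate the backward $\minmax$ recurrence with the modified leaf values $\mu_i(s)-\min(\mu_i(s),b_i)$, invoking Lemma~\ref{lem:mapstwo:tree} for correctness. The only addition is your explicit leave-one-out bookkeeping for computing all best alternatives in amortized constant time per edge, which is a sound implementation detail the paper leaves implicit.
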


The backward algorithm not only allows us to compute $\minmax_i^{H}$
for all $i\in\{1,2\}$, but also to compute a subtree $D(H)$ that
represents all the Player $i$'s strategies that achieve this value. We
actually define the operator $D$ in two steps. First, we remove the
edges $(s,s')\in T'$, such that $s\in S'_i$ and $\minmax_i^{H}(s') >
\minmax_i^{H}$ for all $i=1,2$. We obtain a new graph $H'$ consisting
of several disconnected tree components. In particular, 
there are some states no longer reachable from the root $s_0'$. Then
we keep the connected component that contains $s'_0$ and obtain
a new tree $D(H)$.

Player $i$'s strategies in $D(H)$ are not in the stricter sense strategies of
$H$, as they do not specify what to play when Player $\m i$ leads
Player $i$ to a position that is not in $D(H)$. 
More formally, let $\lambda_i$ be a strategy of Player $i$ defined on
$D(H)$ and $\lambda_{\m i}$ a strategy of Player $\m i$ on $H$. 
If there is a position $s$ of $D(H)$ owned by Player $\m i$ such that
$\lambda_{\m i}$ leads to $s$ when Player $i$ plays $\lambda_i$, and
if $\lambda_{\m i}(s) = s'$ for some position $s'$ not in $D(H)$, 
then $\lambda_i(s')$ is undefined. This never happens when $\lambda_i$ is
opposed to a strategy $\lambda_{\m i}$ of $D(H)$, but may happen when
opposed to a strategy $\lambda_{\m i}$ of $H$. For this reason, we
define the strategies $\lambda_i$ of $D(H)$ for Player $i$ as
the strategies of $H$ such that for all $s\in S'_i$, $(s,\lambda_i(s))$ 
is an edge of $H'$. We denote again by $\Lambda_i(D(H))$ this
set of strategies. With this definition, any strategy $\lambda_i \in
\Lambda_i(D(H))$ is defined on its outcomes in $H$, but when opposed to
any strategy $\lambda_{\m i} \in \Lambda_{\m i}(D(H))$, its outcomes
are in $D(H)$. Thus, when we iterate this operator, we do not need to
remember $H'$ and we can consider only the tree $D(H)$.
The tree $D(H)$ represents the strategy of $H$ that minimize the
regret in the following sense:

\begin{lemma}\label{prop:correctness:tree}
  Let $i\in\{1,2\}$. Let $\lambda_i \in \Lambda_i(H)$;
  $\minmax^{H}_i(\lambda_i) = \minmax^{H}_i$ iff $\lambda_i
  \in \Lambda_i(D(H))$.
\end{lemma}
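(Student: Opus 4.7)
The plan is to prove both implications by leveraging the backward-induction characterization of $\minmax_i^H(s)$ recalled just above the lemma: $\mu'_i(s)$ at leaves, $\min$ over children at Player $i$ nodes, and $\max$ over children at Player $\m i$ nodes.

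For the forward direction ($\Rightarrow$), I would argue by contraposition. Assume $\lambda_i \notin \Lambda_i(D(H))$, so there is a Player $i$ node $s$ reachable under $\lambda_i$ with $(s,\lambda_i(s)) \notin H'$, i.e.\ $\minmax_i^H(\lambda_i(s)) > \minmax_i^H$. Since $s$ is reachable, some Player $\m i$ strategy $\lambda_{\m i}^0$ drives the play to $s$ against $\lambda_i$. From $\lambda_i(s)$ onwards, the definition of $\minmax_i^H(\lambda_i(s))$ provides a Player $\m i$ response $\lambda_{\m i}^1$ that, against $\lambda_i$ restricted to the subtree rooted at $\lambda_i(s)$, secures utility at least $\minmax_i^H(\lambda_i(s))$. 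Stitching $\lambda_{\m i}^0$ and $\lambda_{\m i}^1$ together yields a Player $\m i$ strategy $\lambda_{\m i}$ witnessing $\minmax_i^H(\lambda_i) \geq \minmax_i^H(\lambda_i(s)) > \minmax_i^H$, which completes the contrapositive.

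For the backward direction ($\Leftarrow$), assume $\lambda_i \in \Lambda_i(D(H))$. The key step is the following invariant, which I would prove by induction on the depth of $s$ in the tree $H$: every state $s$ reachable from $s_0'$ under $\lambda_i$ satisfies $\minmax_i^H(s) \leq \minmax_i^H$. The base case $s=s_0'$ is an equality by definition. For the inductive step, let $s''$ be the (reachable) parent of $s$; if $s'' \in S'_i$ then $\lambda_i(s'') = s$ and membership in $\Lambda_i(D(H))$ forces $(s'',s)\in H'$, giving $\minmax_i^H(s) \leq \minmax_i^H$ directly, while if $s'' \in S'_{\m i}$ the backward equation gives $\minmax_i^H(s) \leq \minmax_i^H(s'')$, to which the inductive hypothesis applies. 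Any outcome $\outcome^H(\lambda_i,\lambda_{\m i})$ ends at a leaf $s$ that is reachable under $\lambda_i$, and at leaves $\mu'_i(s) = \minmax_i^H(s)$; hence $\ut_i^H(\lambda_i,\lambda_{\m i}) \leq \minmax_i^H$ for every $\lambda_{\m i}$, yielding $\minmax_i^H(\lambda_i) \leq \minmax_i^H$. The reverse inequality is immediate from the definition of $\minmax_i^H$ as a minimum over Player $i$'s strategies.

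The main obstacle I expect is choosing the right invariant for the induction in the $(\Leftarrow)$ direction. The natural guess that $\minmax_i^H(\lambda_i,s) \leq \minmax_i^H(s)$ on the subtree at $s$ is false in general, because $\lambda_i$ at a Player $i$ node is only required to pick a successor whose $\minmax_i^H$ value is at most the \emph{global} threshold $\minmax_i^H$, not a locally optimal one attaining $\minmax_i^H(s)$. Working directly with the global bound $\leq \minmax_i^H$ sidesteps this, precisely because the edge-removal rule defining $H'$ is phrased in terms of the same global threshold that the induction propagates down the tree.
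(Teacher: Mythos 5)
Your proof is correct and follows essentially the same route as the paper's: your forward direction is the paper's argument stated contrapositively (an opponent drives the play to the offending node $s$ and then secures utility at least $\minmax_i^{H}(\lambda_i(s)) > \minmax_i^{H}$), and your backward direction replaces the paper's proof by contradiction along a single witnessing play with an equivalent top-down induction propagating the bound $\minmax_i^{H}(s)\leq \minmax_i^{H}$ to all nodes reachable under $\lambda_i$. Both arguments rest on the same key observation you highlight at the end, namely that the edge-removal rule defining $H'$ is phrased in terms of the global threshold $\minmax_i^{H}$ rather than the local values $\minmax_i^{H}(s)$.
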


Since there is a one-to-one correspondence between 
the strategies minimizing the regret in $G$ and
the strategies minimizing the minmax value in $H$, we
can define $D(G)$ by applying to  $D(H)$ the isomorphism
$\Phi^{-1}$, in other words by projecting the best alternatives away,
and by restoring the functions $\mu_i$. The set of strategies
$\Lambda_i(D(G))$ of $D(G)$ is defined as
$\Phi^{-1}(\Lambda_i(D(H)))$ (in other words, these are the strategies
of $D(H)$ where we project the best alternatives away). 
Let $\lambda_i\in\Lambda_i(G)$, by
Lemma~\ref{lem:mapstwo:tree}, it minimizes $\reg_i^G$ iff
$\Phi(\lambda_i)$ minimizes $\minmax_i^H$, and by
Lemma~\ref{prop:correctness:tree}, iff
$\Phi(\lambda_i)\in\Lambda_i(D(H))$, and finally, iff
$\lambda_i\in \Lambda_i(D(G))$. $D(G)$ represents the strategies of
$G$ minimizing the regret in the following sense:

\vspace{-2mm}
\begin{lemma}\label{lem:onestep:tree}
  Let $i\in\{1,2\}$. Let $\lambda_i \in \Lambda_i(G)$;
  $\reg^{G}_i(\lambda_i) = \reg^{G}_i$ iff $\lambda_i
  \in \Lambda_i(D(G))$.
\end{lemma}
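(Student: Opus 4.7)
The plan is to prove this by assembling a short chain of biconditionals using the machinery already established: the isomorphism $\Phi$ between $G$ and $H$, Lemma~\ref{lem:mapstwo:tree} (which converts regret minimization in $G$ to minmax minimization in $H$), Lemma~\ref{prop:correctness:tree} (which characterizes minmax-optimal strategies in $H$ as strategies of $D(H)$), and finally the definition $\Lambda_i(D(G)) = \Phi^{-1}(\Lambda_i(D(H)))$.

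Concretely, let $\lambda_i\in\Lambda_i(G)$ be fixed. First I would invoke Lemma~\ref{lem:mapstwo:tree}: this lemma states that $\Phi$ maps $\Lambda_i(G)$ bijectively onto $\Lambda_i(H)$ and, crucially, that $\lambda_i$ minimizes $\reg_i^G$ if and only if $\Phi(\lambda_i)$ minimizes $\minmax_i^H$. In symbols, $\reg_i^G(\lambda_i)=\reg_i^G$ iff $\minmax_i^H(\Phi(\lambda_i)) = \minmax_i^H$. Next I would apply Lemma~\ref{prop:correctness:tree} to the strategy $\Phi(\lambda_i)\in \Lambda_i(H)$: this yields that $\minmax_i^H(\Phi(\lambda_i)) = \minmax_i^H$ iff $\Phi(\lambda_i)\in \Lambda_i(D(H))$. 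Finally, since $\Lambda_i(D(G))$ is defined precisely as $\Phi^{-1}(\Lambda_i(D(H)))$ and $\Phi$ is a bijection on strategies, $\Phi(\lambda_i)\in \Lambda_i(D(H))$ iff $\lambda_i\in \Lambda_i(D(G))$. Chaining the three equivalences gives the statement.

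There is essentially no hard step here: all the work has been done in the preceding lemmas and in the definitions of $H$, $D(H)$, and $D(G)$. The only point that requires mild care is making sure that $\Phi$ respects the chosen convention that strategies are defined only on reachable positions (so that $\Phi^{-1}$ and $\Phi$ are genuine inverses on the relevant sets), and that the projection of $\Lambda_i(D(H))$ through $\Phi^{-1}$ coincides with the set of strategies in $G$ compatible with the subtree $D(G)$ obtained by erasing the best-alternative component. Both are immediate from the construction in Definition~\ref{def:reduc1:tree} and the discussion of $D(H)$ right before Lemma~\ref{prop:correctness:tree}, so the proof reduces to citing the three facts in the order above.
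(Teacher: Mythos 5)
Your proposal is correct and matches the paper's own argument exactly: the paper proves this lemma in the paragraph immediately preceding its statement by chaining the same three equivalences (Lemma~\ref{lem:mapstwo:tree}, Lemma~\ref{prop:correctness:tree}, and the definition $\Lambda_i(D(G)) = \Phi^{-1}(\Lambda_i(D(H)))$) in the same order.
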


We obtain a new tree $D(G)$ whose Player $i$'s strategies minimize the 
regret of Player $i$, for all $i=1,2$. We can iterate the regret computation on $D(G)$
and get the Player $i$'s strategies that minimize the regret of rank
$2$ of Player $i$, for all $i=1,2$. We continue iteration  we get a tree $G'$ such that $D(G') = G'$. 
We let $D^0(G)=G$ and $D^{j+1}(G)=D(D^j(G))$. Remind that 
$P_i^j$ are Player $i$'s strategies of $G$ that minimize the $j$-th regret.

\begin{proposition}\label{lem:iteratedregret:tree}
  Let $i\in\{1,2\}$ and $j>0$. We have $\reg^{G,j}_i =
  \reg^{D^{j-1}(G)}_i$ and $P^j_i = \Lambda_i(D^j(G))$.
\end{proposition}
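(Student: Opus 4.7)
The proposition will be proved by induction on $j \geq 1$, using Lemma~\ref{lem:onestep:tree} as the engine at each step and using the subtree structure of $D^{j-1}(G)$ inside $G$ to line up the two regret notions. The base case $j=1$ is immediate: $D^0(G) = G$ so $\reg^{D^0(G)}_i = \reg^G_i = \reg^{G,1}_i$, and by definition $P^1_i = \{\lambda_i \in \Lambda_i(G) \mid \reg^G_i(\lambda_i) = \reg^G_i\}$, which equals $\Lambda_i(D(G))$ by Lemma~\ref{lem:onestep:tree}.

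For the inductive step, assume the statement holds for $j-1$, so that $P^{j-1}_i = \Lambda_i(D^{j-1}(G))$ for both $i=1,2$. The crucial technical observation is that, because $D^{j-1}(G)$ is a subtree of $G$ sharing the same target set and weight functions, whenever $\lambda_i \in \Lambda_i(D^{j-1}(G))$ and $\lambda_{\m i} \in \Lambda_{\m i}(D^{j-1}(G))$, the outcome $\outcome^G(\lambda_i,\lambda_{\m i})$ stays entirely inside $D^{j-1}(G)$ (neither player plays an edge removed by the operator), so
\[
\ut_i^G(\lambda_i,\lambda_{\m i}) \;=\; \ut_i^{D^{j-1}(G)}(\lambda_i,\lambda_{\m i}).
\]
Consequently, for every $\lambda_{\m i} \in P^{j-1}_{\m i}$,
\[
\br^{G,P^{j-1}_i}_i(\lambda_{\m i}) \;=\; \br^{D^{j-1}(G)}_i(\lambda_{\m i}),
\]
and therefore $\reg_i^{G,P^{j-1}_i,P^{j-1}_{\m i}}(\lambda_i,\lambda_{\m i}) = \reg_i^{D^{j-1}(G)}(\lambda_i,\lambda_{\m i})$ pointwise on $P^{j-1}_i \times P^{j-1}_{\m i} = \Lambda_i(D^{j-1}(G)) \times \Lambda_{\m i}(D^{j-1}(G))$. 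Taking min over $\lambda_i$ and max over $\lambda_{\m i}$ yields $\reg^{G,j}_i = \reg^{D^{j-1}(G)}_i$.

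For the second equality, unfolding the definition of the delete operator $D$ gives
\[
P^j_i \;=\; \{\lambda_i \in P^{j-1}_i \mid \reg_i^{G,P^{j-1}_i,P^{j-1}_{\m i}}(\lambda_i) = \reg^{G,j}_i\},
\]
which by the pointwise identity above and the inductive hypothesis equals
\[
\{\lambda_i \in \Lambda_i(D^{j-1}(G)) \mid \reg^{D^{j-1}(G)}_i(\lambda_i) = \reg^{D^{j-1}(G)}_i\}.
\]
Applying Lemma~\ref{lem:onestep:tree} to the tree arena $D^{j-1}(G)$ identifies this set with $\Lambda_i(D(D^{j-1}(G))) = \Lambda_i(D^j(G))$, closing the induction.

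\textbf{Main obstacle.} The only delicate point is the outcome-preservation claim: that strategies of $D^{j-1}(G)$, when viewed as strategies of $G$ (recall they are defined as strategies of $G$ whose choices at reachable Player-$i$ positions lie in $D^{j-1}(G)$), produce identical outcomes and hence identical utilities in $G$ and $D^{j-1}(G)$. This hinges on the fact that both players' strategies are drawn from $\Lambda(D^{j-1}(G))$, so no outcome ever strays outside the subtree; if only one player were restricted, the other could lead play into the pruned region and break the equality. Once this is clearly stated, the rest is a mechanical unfolding of definitions plus one invocation of Lemma~\ref{lem:onestep:tree} on the smaller tree.
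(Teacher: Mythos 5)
Your proof is correct and follows essentially the same route as the paper: induction on $j$, with the base case from the definitions and the inductive step reducing the rank-$j$ regret over $P^{j-1}_i\times P^{j-1}_{\m i}$ to the plain regret in the subtree $D^{j-1}(G)$ before invoking Lemma~\ref{lem:onestep:tree} on that smaller tree. The outcome-preservation point you flag as the main obstacle is exactly what the paper relies on implicitly (via its convention that strategies of $D(H)$ opposed to strategies of $D(H)$ have outcomes inside $D(H)$), so making it explicit is a welcome clarification rather than a deviation.
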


\begin{proof}(sketch) By induction on $j$. It is clear 
  for $j=1$ and Lemma~\ref{lem:onestep:tree} ensures the
  correctness of the induction.
\end{proof}

\begin{theorem}
  Let $G = (S=S_1\uplus S_2,s_0,T,\mu_1,\mu_2,\target)$ be a tree
  arena. For all $i=1,2$, the iterated regret of Player $i$,
  $\reg^{G,\star}_i$, can be computed in $O(|S|^2)$.
\end{theorem}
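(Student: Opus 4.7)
The plan is to compute $\reg^{G,\star}_i$ by iterating the delete operator $D$ introduced in Section~\ref{sec:iterate:trees} until a fixed point is reached, and then to bound both the cost per iteration and the total number of iterations.

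By Proposition~\ref{lem:iteratedregret:tree}, for every $j \geq 1$ we have $\reg^{G,j}_i = \reg^{D^{j-1}(G)}_i$, and by Proposition~\ref{lem:stable} the sequence of regrets stabilizes at some finite rank $\star$. In particular, as soon as $D^{j}(G) = D^{j-1}(G)$, the iteration has reached a fixed point and $\reg^{G,\star}_i = \reg^{D^{j-1}(G)}_i$. The algorithm therefore applies $D$ repeatedly starting from $G$, detecting equality between two consecutive trees, and finally computes $\reg^{G'}_i$ on the final tree $G'$ by Theorem~\ref{thm:rmp:tree}.

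For the complexity, each iteration costs $O(|S|)$. Indeed, computing $D(G')$ for a subtree $G'$ of $G$ decomposes into three linear-time steps: constructing the tree of best alternatives (Definition~\ref{def:reduc1:tree}), running the backward induction that evaluates $\minmax_i^{H}$ for both players, and pruning the edges of $S'_i$ whose $\minmax_i$ value exceeds $\minmax_i^{H}$ and then restricting to the connected component of $s_0$. The central argument is then to bound the number of iterations by $O(|S|)$. The key observation is that if $D^{j+1}(G) \neq D^j(G)$, then $D^{j+1}(G)$ is a \emph{proper} subtree of $D^j(G)$: at least one edge is removed (either because it does not achieve the $\minmax_i$ value for some player, or because it becomes unreachable from $s_0$ after such a deletion). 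Since any subtree of $G$ has at most $|S|-1$ edges, strict descent can occur at most $|S|$ times, so a fixed point is reached after at most $|S|$ iterations. Combining the two bounds gives the required $O(|S|^2)$ total complexity.

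The main obstacle I expect is the verification of two conceptually small but technically important points. First, I need to show that when $D(G') = G'$ we really have $D^k(G') = G'$ for all $k$, so that the fixed point detected algorithmically coincides with the iterated-regret fixed point of Definition of $\reg^{G,\star}_i$; this follows by applying Lemma~\ref{lem:onestep:tree} to $G'$ viewed as a tree arena in its own right, since then $\Lambda_i(D(G')) = \Lambda_i(G')$ means no further strategy is strictly dominated. Second, I need to check that at every step $D^j(G)$ is still a legitimate tree arena to which Definition~\ref{def:reduc1:tree} and Lemma~\ref{lem:mapstwo:tree} apply; this is ensured by always restricting to the connected component of the root after pruning, which preserves both the tree structure and the property that $\target$ consists of leaves.
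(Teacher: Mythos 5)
Your proposal is correct and follows essentially the same argument as the paper: iterate $D$, bound the number of iterations by $|S|$ because each non-trivial application removes at least one edge of the tree, and charge $O(|S|)$ per iteration via the linear-time construction of the tree of best alternatives and the backward $\minmax$ computation. The extra care you take about fixed-point detection and about $D^j(G)$ remaining a valid tree arena is a welcome elaboration of details the paper leaves implicit, but it does not change the route.
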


\begin{proof}
  By Propositions~\ref{lem:stable} and~\ref{lem:iteratedregret:tree},
  there is an integer $j$ such that $\reg^{G,\star}_i =
  \reg^{D^j(G)}_i$. According to the definition of $D(G)$, $j\leq |S|$
  because we remove at least one edge of the tree at each step. Since
  $|D(G)|$ can be constructed in $O(|S|)$, the whole time complexity
  is $O(|S|^2)$.
\end{proof}

\subsection{IRM in Positive Weighted Arenas}
\label{sec:iterate-positive}

A weighted arena $G$ is said to be \textit{positive} if all
edges are weighted by strictly positive weights only. In this
section, we let $G = (S=S_1\uplus
S_2,s_0,T,\mu_1,\mu_2,\target_1,\target_2)$ be a positive weighted
arena. Remind that $P_i^j(G)$ is the set of strategies that minimize
$\reg_i^{G,j}$, for all $j\geq 0$ and $i=1,2$.

\begin{definition}[$j$-winning and $j$-bounded strategies]
  $\qquad$ Let $i\in\{1,2\}$ and $\lambda_i\in\Lambda_i(G)$. The strategy $\lambda_i$ is \textit{$j$-winning} if
  for all $\lambda_{\m i}\in P_{\m i}^j(G)$,
  $\outcome^G(\lambda_i,\lambda_{\m i})$ is winning. It is \textit{$j$-bounded} by some $B\geq 0$ if
  it is $j$-winning, and for all
  $\lambda_{\m i}\in P_{\m i}^j(G)$ and all $\kappa\in\{i,\m i\}$,  $\mu_\kappa(\outcome^{G,\target_i}(\lambda_i,\lambda_{\m i}))\leq
  B$.
\end{definition}

Note that $j$-boundedness differs from boundedness as we require
that the utilities of both players are bounded. We let $b^G = 6 (M^G)^3 |S|$.
We get a similar result than the boundedness of strategies
that miminize the regret of rank $1$, but for any rank:

\begin{lemma}\label{lem:boundedstrat}
    For all $i=1,2$ and  all $j\geq 0$, all $j$-winning strategies of Player $i$ 
    which minimize the $(j+1)$-th regret are $j$-bounded by $b^G$.
\end{lemma}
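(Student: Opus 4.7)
The proof extends Lemma~\ref{lem:regretbounded} to the iterated setting. A first useful observation, specific to positive-weight arenas, is that since every edge weight is at least $1$, the length of any outcome up to $\target_i$ is bounded by Player $i$'s utility on that outcome; consequently, if $\ut_i^G(\lambda_i,\lambda_{\m i})\leq B$ for some bound $B$, then the utility of Player $\m i$ along the same prefix is at most $M^G\cdot B$. It therefore suffices to bound Player $i$'s utility uniformly.

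I would proceed by induction on $j$. For the base case $j=0$, $P_{\m i}^0=\Lambda_{\m i}(G)$, so Lemma~\ref{lem:regretbounded} directly yields $\ut_i^G(\lambda_i,\lambda_{\m i})\leq 2M^G|S|$ for every $\lambda_{\m i}$; the observation above then bounds Player $\m i$'s utility by $2(M^G)^2|S|\leq b^G$.

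For the induction step, assume the claim for $j-1$ and fix a $j$-winning $\lambda_i$ minimizing $\reg_i^{G,j+1}$. The crux is to exhibit a \emph{baseline} strategy $\gamma_i\in P_i^j$ whose outcomes against strategies in $P_{\m i}^j$ have utility uniformly bounded by some $B_0$ independent of $j$. Such a baseline simultaneously caps $\br_i^{G,P_i^j}(\lambda_{\m i})\leq B_0$ and $\reg_i^{G,j+1}\leq\reg_i^{G,j+1}(\gamma_i)\leq B_0$, so for every $\lambda_{\m i}\in P_{\m i}^j$,
\[\ut_i^G(\lambda_i,\lambda_{\m i})=\reg_i^{G,j+1}(\lambda_i,\lambda_{\m i})+\br_i^{G,P_i^j}(\lambda_{\m i})\leq 2B_0,\]
and Player $\m i$'s utility along the prefix up to $\target_i$ is at most $2M^GB_0$.

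The main obstacle is the construction of $\gamma_i$ with $B_0$ not blowing up with $j$: feeding the induction hypothesis alone only gives $B_0\leq b^G$, which would produce $2M^Gb^G\gg b^G$ after the observation above. I would close this gap by choosing $\gamma_i$ as the strategy in $G$ induced by a memoryless min-max strategy on the expanded utility graph of Section~\ref{sec:graphs}, arguing that $\gamma_i$ survives iterated regret deletion and that, because memoryless winning strategies on the expanded graph cannot revisit an expanded state, outcomes under $\gamma_i$ have length polynomial in $|S|$ and $M^G$. This yields $B_0\leq 3(M^G)^2|S|$ and, finally, $2M^GB_0\leq 6(M^G)^3|S|=b^G$, closing the induction.
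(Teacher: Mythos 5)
Your proposal has the right skeleton (a uniform baseline strategy in $P_i^j$ caps both $\br_i^{G,P_i^j}(\lambda_{\m i})$ and $\reg_i^{G,j+1}$, hence caps $\ut_i^G(\lambda_i,\lambda_{\m i})$ by $2B_0$, and positivity of the weights converts that into a length bound and a bound on the other player's utility), and you correctly identify the crux: you need $B_0$ independent of $j$, which a naive induction does not give. But the step you use to close the gap is exactly the step that is missing. You assert that the strategy $\gamma_i$ induced by a memoryless min-max strategy on the expanded utility graph \emph{survives iterated regret deletion}, i.e.\ lies in $P_i^j$ for the relevant $j$. Nothing supports this: a min-max strategy need not minimize regret even at rank $1$ (regret minimization and min-max are different criteria precisely because of the best-alternative term), so there is no reason it belongs to $P_i^1$, let alone to $P_i^j$. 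Without membership of the baseline in $P_i^j$, the inequality $\reg_i^{G,j+1}\leq\reg_i^{G,j+1}(\gamma_i)$ is unavailable and the whole bound collapses.

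The paper sidesteps this issue with a different idea. For the player whose rank-$1$ regret is finite, Lemma~\ref{lem:regretbounded} plus nestedness of the $P_i^j$ already gives a $j$-independent bound $2(M^G)^2|S|$ at every rank. For the other player, one takes $j_0$ to be the \emph{first} rank at which her surviving set contains a winning strategy, and observes that $P_{\m i}^{j_0}=\Lambda_{\m i}(G)$ (all earlier regrets were $+\infty$, so nothing was deleted). At that rank any strategy is available as a baseline, so one is free to pick a $\gamma_{\m i}$ that wins against all of $P_i^{j_0}$ and is memoryless once Player $i$ has reached her target; a prefix case analysis on the two outcomes then bounds its utilities by $3(M^G)^2|S|$ without any claim about surviving deletion. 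The resulting bound at rank $j_0$ then transfers to all $j\geq j_0$ by nestedness. You should replace your min-max construction with this ``first winning rank'' argument (or otherwise prove that your candidate baseline genuinely lies in $P_i^j$); as written, the proof does not go through.
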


\begin{proof}(Sketch)
    First, if the regrets of first rank are infinite for both players, then 
    by definition of the iterated regret, $P_1^1=\Lambda_1(G)$ and $P_2^1=\Lambda_2(G)$ and
    thus their regrets are infinite at any rank. Therefore there is no winning strategy
    at any rank (otherwise one of the regrets would be finite).

    Suppose that the first regret of Player $i$ is finite for some $i=1,2$. By Lemma \ref{lem:regretbounded}, the winning strategies minimizing her first regret are bounded
    by $2M^G |S|$. Since the weights are strictly positive, the 
    lengths of the outcomes until $\target_i$ are bounded by $2M^G|S|$, which allows
    us to bound the utilities of Player $\m i$ until a first visit to $\target_i$ by $2(M^G)^2|S|$. Since $P_{i}^j(G)\subseteq P_i^1(G)$ for all 
    $j\geq 1$, the strategies of Player $i$ (which are necessarily winning as the regret is finite)
    at any rank are bounded by $2(M^G)^2|S|$. This bound is then used (non-trivially) to bound the winning strategies
    of Player $\m i$ by $6(M^G)^3|S|$. The full proof is in Appendix.
\end{proof}

Lemma \ref{lem:boundedstrat} allows us to reduce the problem to the iterated regret minimization
in a weighted tree arena, by unfolding the graph arena $G$ up to some
maximal utility value. Lemma \ref{lem:boundedstrat} suggests to take
$b^G$ for this maximal value. However the best responses to a strategy $j$-bounded by $b^G$ are not
necessarily bounded by $b^G$, but they are necessarily $j$-bounded by
$b^G\cdot M^G$, since the weights are strictly positive. Therefore we
let $B^G = b^G\cdot M^G$ and take $B^G$ as the
maximal value. Since the $j$-winning strategies are $j$-bounded by $b^G$ and the best responses
are $j$-bounded by $B^G$, we do not los
the set of finite plays $\pi$ of $G$ such that $\mu_G(\pi) \leq K$, for
all $i=1,2$. Note that $\play_K(G)$ is finite since $G$ has only
strictly positive weights. The unfolding of $G$ up to $B^G$ is naturally defined by
a tree weighted arena whose set of positions is $\play_{B^G}(G)$. 

\begin{definition}\label{def:reductutility}
    Let $G = (S=S_1\uplus
    S_2,s_0,T,\mu_1,\mu_2,\target_1,\target_2)$ be a positive weighted
    arena. The $B^G$-unfolding of $G$ is the weighted tree arena
    $G' = (S'=S'_1\uplus
    S'_2,s'_0,T',\mu'_1,\mu'_2,\target'_1,\target'_2)$
    such that $S'_i = \{ \pi\in\play_{B^G}(G)\ |\ \last(\pi)\in S_i\}$
    and for all $\pi,\pi'\in S'$, 
    $(\pi,\pi')\in T'$ iff $(\last(\pi),\last(\pi'))\in
    T$ and $\pi' = \pi.\last(\pi')$, and for all $i=1,2$, $\pi\in \target'_i$ iff
    $\last(\pi)\in\target_i$ and $\mu'_i(\pi,\pi') = 
    \mu_i(\last(\pi),\last(\pi'))$.
\end{definition}

We now prove that $\reg_{i}^{G,\star} = \reg_{i}^{G',\star}$, for all
$i=1,2$. As for edge-weighted arenas, this is done by defining
a surjective mapping $\Phi$ from $\Lambda_i(G)$ to
$\Lambda_i(G')$. For all $i=1,2$ and all $\lambda_i\in\Lambda_i(G)$,
and all $\pi\in \play_f(G)$ such that $\last(\pi)\in S_i$, 
$\Phi(\lambda_i)(\pi) = \perp$ if there is $\kappa\in\{1,2\}$ such
that $\mu_{\kappa}(\pi.\lambda_i(\pi)) > B^G$, and 
$\Phi(\lambda_i)(\pi) = \pi.\lambda_i(\pi)$ otherwise. 
This mapping is surjective, but not injective, since
two strategies that behave similarly up to some utility $B^G$ are
mapped to the same strategy.

\begin{lemma} For all $j\geq 1$, $\Phi(P_i^j(G)) = P_i^j(G')$ and 
  for all $\lambda_i\in P_i^j(G)$, $\reg_i^{G,j}(\lambda_i) =
      \reg_{i}^{G',j}(\Phi(\lambda_i))$.
\end{lemma}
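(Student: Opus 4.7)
The plan is to proceed by simultaneous induction on $j\geq 1$ for both players $i\in\{1,2\}$. The base case $j=1$ is essentially the content of Section~\ref{sec:graphs} generalized to track both players' weights: I first argue that $\Phi$ is surjective from $\Lambda_i(G)$ to $\Lambda_i(G')$ by lifting any tree strategy to a graph strategy (fixing arbitrary choices beyond the bounded region), then show that whenever a strategy $\lambda_i$ is bounded by $B^G$ in both players' utilities, $\Phi$ preserves the outcomes and hence the utilities and the regret. By Lemma~\ref{lem:boundedstrat} applied at rank $0$, the strategies in $P_i^1(G)$ are $0$-bounded by $b^G$, and since weights are strictly positive, any best response to such a strategy has utility at most $b^G$ and is itself bounded by $B^G=b^G\cdot M^G$. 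This yields $\reg_i^{G}(\lambda_i)=\reg_i^{G'}(\Phi(\lambda_i))$ for all $\lambda_i\in P_i^1(G)$ and, since $\Phi$ is surjective, $\Phi(P_i^1(G))=P_i^1(G')$.

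For the inductive step, assume the statement holds for rank $j$ and fix a candidate $\lambda_i\in P_i^j(G)$. By the induction hypothesis, $\Phi$ restricts to a surjection $P_{\m i}^j(G)\twoheadrightarrow P_{\m i}^j(G')$, so the maximization over $\lambda_{\m i}$ defining $\reg_i^{G,j+1}(\lambda_i)$ ranges over the same class (up to $\Phi$) as in $G'$. If $\lambda_i$ is a minimizer of $\reg_i^{G,j+1}$, then by Lemma~\ref{lem:boundedstrat} it is $j$-bounded by $b^G$, hence the outcome against any $\lambda_{\m i}\in P_{\m i}^j(G)$ stays within the truncation horizon and $\ut_i^G(\lambda_i,\lambda_{\m i})=\ut_i^{G'}(\Phi(\lambda_i),\Phi(\lambda_{\m i}))$. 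For the best response $\br_i^{G,P_i^j(G)}(\lambda_{\m i})$, observe that it is bounded above by $\ut_i^G(\lambda_i,\lambda_{\m i})\leq b^G$, so some minimizer $\lambda_i^{*}\in P_i^j(G)$ produces an outcome bounded by $B^G$ against $\lambda_{\m i}$; the induction hypothesis $\Phi(P_i^j(G))=P_i^j(G')$ then matches this best response in $G$ with the corresponding one in $G'$. Combining these observations gives $\reg_i^{G,j+1}(\lambda_i)=\reg_i^{G',j+1}(\Phi(\lambda_i))$, and taking the min over $P_i^j(G)$ (and using surjectivity to lift minimizers from $G'$ back to $G$) yields $\Phi(P_i^{j+1}(G))=P_i^{j+1}(G')$ together with the announced equality on regret values.

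The main obstacle is the handling of the best-response term $\br_i^{G,P_i^j(G)}$: unlike in Section~\ref{sec:graphs} where the best response is taken over all strategies, here it is taken over the restricted set $P_i^j(G)$, and one must ensure that the witnessing strategy is itself bounded so that the truncation in $G'$ does not change its value. The argument above resolves this by using $\lambda_i\in P_i^j(G)$ itself as a witness of bounded utility, forcing the actual best response to stay below $b^G$. A secondary subtlety is that $\Phi$ is not injective, so one must explicitly exhibit a $j$-bounded preimage of every $\lambda_i'\in P_i^{j+1}(G')$; this is done by prescribing the preimage arbitrarily on plays of $G$ whose prefix utility exceeds $B^G$, which does not affect any outcome against a strategy in $P_{\m i}^j(G)$.
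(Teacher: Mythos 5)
Your overall route is the same as the paper's: an induction on the rank $j$, whose step rests on (a) the boundedness of regret-minimizing strategies (Lemma~\ref{lem:boundedstrat}), (b) preservation of outcomes and utilities under $\Phi$ for plays bounded by $B^G$, and (c) a matching of the restricted best responses $\br_i^{G,P_i^j(G)}$ with $\br_i^{G',P_i^j(G')}$ via a bounded witness in $P_i^j(G)$. This is exactly the decomposition the paper carries out in Lemmas~\ref{lem:mapbounded}, \ref{lem:mapbr}, \ref{lem:equivbounded} and \ref{lem:induction}, and the two subtleties you single out (the best response being taken over the restricted set, and the non-injectivity of $\Phi$) are resolved the same way there.

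There is, however, a genuine gap: your argument silently assumes that the regret is finite at every rank. Lemma~\ref{lem:boundedstrat} only bounds the \emph{$j$-winning} minimizers; when $\reg_i^{G,j+1}=+\infty$ (for instance when Player $i$ has no winning strategy at all, as in the right-hand arena of Fig.~\ref{fig:ex3}, where the iteration is nevertheless meaningful), the set $P_i^{j+1}(G)$ equals $P_i^{j}(G)$ and contains unbounded and even losing strategies. Your base-case claim that ``the strategies in $P_i^1(G)$ are $0$-bounded by $b^G$'' is then false, and the same issue recurs at every rank, since the inductive step opens with ``if $\lambda_i$ is a minimizer then by Lemma~\ref{lem:boundedstrat} it is $j$-bounded.'' To close the statement you must (i) prove separately that $\reg_i^{G,j+1}=+\infty$ iff $\reg_i^{G',j+1}=+\infty$ --- which is not automatic, because the truncation at $B^G$ could a priori turn a winning strategy of $G$ into a losing one of $G'$, and so requires producing a \emph{bounded} winning witness and invoking Lemma~\ref{lem:equivbounded} --- and (ii) observe that in the infinite case the regret equality degenerates to $+\infty=+\infty$ while $\Phi(P_i^{j+1}(G))=\Phi(P_i^{j}(G))=P_i^{j}(G')=P_i^{j+1}(G')$. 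The paper isolates precisely this as items $(i)$ and $(ii).a$ of Lemma~\ref{lem:induction}; your proof needs the analogous case split both in the base case and in the inductive step.
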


This allows us to prove the correctness of the
reduction:

\begin{lemma}\label{lem:corres}
    For all $i = 1,2$, $\reg_{i}^{G,\star} = \reg_{i}^{G',\star}$.
\end{lemma}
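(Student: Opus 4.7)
The plan is to reduce Lemma~\ref{lem:corres} to the preceding lemma by proving equality of the regret sequences at each finite rank and then passing to the stabilization point. The key ingredient is the bijective, regret-preserving correspondence between $P_i^j(G)$ and $P_i^j(G')$ that the preceding lemma supplies via $\Phi$.

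First, I would show $\reg_i^{G,j} = \reg_i^{G',j}$ for every $j \geq 1$ and $i\in\{1,2\}$. Pick any $\lambda_i \in P_i^j(G)$; by definition of the delete operator, any such strategy attains the minimum $\reg_i^{G,j}$ (over $P_i^{j-1}(G)$ against $P_{\m i}^{j-1}(G)$). The preceding lemma gives $\Phi(\lambda_i) \in P_i^j(G')$ and $\reg_i^{G,j}(\lambda_i) = \reg_i^{G',j}(\Phi(\lambda_i))$, and the latter attains $\reg_i^{G',j}$ by the symmetric definitional argument applied inside $G'$. Hence the two minima coincide at every rank.

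With equality at every finite rank in hand, Proposition~\ref{lem:stable} finishes the job. Applied to $G$ and to $G'$, it yields stabilization ranks $\star_G$ and $\star_{G'}$; taking $\star = \max(\star_G,\star_{G'})$, the rank-$j$ equality at $j=\star$ gives $\reg_i^{G,\star} = \reg_i^{G',\star}$ for both players, which is precisely the iterated-regret equality claimed.

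The main obstacle lies entirely upstream, in the preceding lemma rather than in Lemma~\ref{lem:corres} itself. Justifying that truncating $G$ at utility $B^G = b^G \cdot M^G$ retains every strategy relevant to every rank of iterated regret rests on Lemma~\ref{lem:boundedstrat} (winning strategies that minimize the $(j{+}1)$-th regret are $j$-bounded by $b^G$) together with the observation that, in a strictly-positive-weighted arena, a best response to such a bounded strategy can overshoot $b^G$ by at most one additional $M^G$-weighted transition before reaching its target --- this is why the slack factor $M^G$ appears in $B^G$. Once this preservation is granted, the proof of Lemma~\ref{lem:corres} is essentially the outer $\min$ operator commuting with the bijection $\Phi$, combined with stabilization.
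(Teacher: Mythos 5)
Your proof is correct and follows essentially the same route as the paper: rank-by-rank equality $\reg_i^{G,j} = \reg_i^{G',j}$ obtained by pushing a minimizer $\lambda_i \in P_i^j(G)$ through $\Phi$ using the preceding lemma's two guarantees, then stabilization via Proposition~\ref{lem:stable}. One peripheral remark is slightly off: the slack factor in $B^G = b^G \cdot M^G$ does not come from a single overshooting transition, but from bounding the \emph{length} of the outcome by $b^G$ (strictly positive integer weights for Player $i$) and hence the other player's cumulative weight along the whole path by $b^G \cdot M^G$ --- though this concerns the upstream lemma rather than the argument at issue here.
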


\begin{proof}
  We prove that for all $j\geq 1$, $\reg_i^{G,j} = \reg_{i}^{G',j}$. Let $\lambda_i\in P_i^j(G)$. By definition of $P_i^j(G)$, $\lambda_i$ minimizes the
  $j$-th regret, so that $\reg_i^{G,j}(\lambda_i) = \reg_i^{G,j}$. By $(1)$, 
  $\reg_i^{G,j}(\lambda_i) = \reg_i^{G,j}(\Phi(\lambda_i))$. By $(2)$, 
  $\Phi(\lambda_i)\in P_i^j(G')$, therefore $\Phi(\lambda_i)$ minimizes the $j$-th regret in $G'$, so that
  $\reg_{i}^{G',j}(\Phi(\lambda_i)) = \reg_{i}^{G',j}$, from which we
  get $\reg_i^{G,j} = \reg_{i}^{G',j}$.
\end{proof}

By applying the algorithm of Section \ref{sec:iterate:trees} we get:

\begin{theorem}
    The iterated regret for both players
    in a positive weighted arena $G$ can
    be computed in pseudo-exponential time (exponential in $|S|$, $|T|$ and
    $M^G$).
\end{theorem}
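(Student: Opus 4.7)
(Plan.)
The plan is to combine Lemma~\ref{lem:corres} with the tree IRM algorithm of Section~\ref{sec:iterate:trees}. By Lemma~\ref{lem:corres}, it suffices to compute the iterated regret on the $B^G$-unfolding $G'$ of $G$ given by Definition~\ref{def:reductutility}. Since $G'$ is a weighted tree arena, applying the construction from Section~\ref{sec:iterate:trees} yields the iterated regret for both players in time $O(|S'|^2)$, where $S'$ is the set of states of $G'$. The correctness of this pipeline is established by Lemma~\ref{lem:corres}, so the only remaining task is to bound $|S'|$.

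The main step is the size analysis of $G'$. By definition, $S' = \{\pi \in \play_{B^G}(G) \mid \last(\pi) \in S\}$, i.e. $S'$ is in bijection with the finite plays of $G$ of cumulative weight (for either player) at most $B^G$. Since $G$ is positive weighted, every edge has weight at least $1$, hence any play in $\play_{B^G}(G)$ has length at most $B^G$. The number of plays of length at most $B^G$ is bounded by $|S| \cdot (\Delta)^{B^G}$, where $\Delta$ is the maximal out-degree of $G$ (bounded by $|S|$). Recalling $B^G = b^G \cdot M^G$ with $b^G = 6(M^G)^3 |S|$, we obtain $B^G = 6(M^G)^4 |S|$, so $|S'|$ is bounded by a function exponential in $|S|$, $|T|$, and $M^G$, i.e. pseudo-exponential.

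Finally, combining the pseudo-exponential bound on $|S'|$ with the quadratic running time $O(|S'|^2)$ of the tree IRM algorithm yields a pseudo-exponential time algorithm overall. The anticipated obstacle is not the construction itself, which is a direct unfolding, but rather checking that the bound $B^G$ used in Definition~\ref{def:reductutility} is precisely what Lemma~\ref{lem:boundedstrat} requires: the $j$-winning strategies that minimize the $(j+1)$-th regret are $j$-bounded by $b^G$, and a best response may push the utility to $b^G \cdot M^G = B^G$, so truncating the unfolding at $B^G$ does not cut off any relevant play. This ensures that the bijection between the strategies relevant for iterated regret computation in $G$ and in $G'$ (via the mapping $\Phi$ introduced above Lemma~\ref{lem:corres}) is preserved at every rank, which is the key ingredient making the reduction to the tree case sound.
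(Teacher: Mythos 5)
Your proposal is correct and follows essentially the same route as the paper: reduce to the $B^G$-unfolding $G'$ via Lemma~\ref{lem:corres}, bound $|S'|$ using the strict positivity of weights (plays in $\play_{B^G}(G)$ have length at most $B^G = 6(M^G)^4|S|$, so $|S'|$ is pseudo-exponential), and run the quadratic tree IRM algorithm of Section~\ref{sec:iterate:trees}. Your closing remark about why the truncation bound must be $B^G = b^G \cdot M^G$ rather than $b^G$ (to accommodate best responses) correctly identifies the one point the paper itself only states informally.
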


For all $i=1,2$, the procedure of Section \ref{sec:iterate:trees} returns a
finite-memory strategy $\lambda_i$ minimizing the iterated regret in $G'$
whose memory is the best alternatives seen so far by both players.
From $\lambda_i$ we can compute a finite-memory strategy in $G$
minimizing the iterated regret of Player $i$, the needed memory is
the best alternatives seen by both players and the current finite play
up to $B^G$. When the utility is greater than $B^G$, then any move 
is allowed. Therefore one needs to add one more bit of memory
expressing whether the utility is greater than $B^G$.

Finally, the unfolding of the graph arena up to $B^G$ is
used to finitely represent the (potentially infinite) sets of
strategies of rank $j$ in $G$. Finding such a representation 
is not obvious for the full class of weighted arenas, since before
reaching its objective, a player can take a $0$-cost loop finitely
many times without affecting her minimal regret. This suggests 
to add \textit{fairness} conditions on edges to compute the iterated
regret. This is illustrated by the following example.

\begin{figure}[h]
\unitlength=0.8mm
\vspace{-72mm}
\centering{



\gasset{linewidth=0.14,Nw=8.0,Nh=8.0,Nmr=4.0,Nadjustdist=1.0,ilength=5.0,flength=5.0,rdist=0.7,loopdiam=8.0,AHdist=1.41,AHLength=1.5,AHlength=1.41,ELdist=1.0}
\begin{picture}(105,153)(0,-153)
\node[NLangle=0.0,Nmarks=i](n0)(24.0,-104.0){A}

\node[NLangle=0.0,Nmr=0.0](n1)(52.0,-104.0){B}

\node[NLangle=0.0,Nmarks=r](n2)(36.0,-128.0){C}

\drawedge[ELdist=0.5,curvedepth=8.0](n0,n1){$0/0$}

\drawedge[ELdist=0.5,ELside=r,curvedepth=8.0](n1,n0){$0/0$}

\drawedge[ELside=r,ELdist=1.0](n0,n2){$5/0$}

\drawedge[ELdist=1.1](n1,n2){$0/5$}

\node[NLangle=0.0,Nmarks=i,Nmr=0.0](n7)(72.0,-104.0){D}

\node[NLangle=0.0](n8)(96.0,-104.0){E}

\node[NLangle=0.0,Nmarks=r](n9)(72.0,-128.0){F}

\drawedge[ELdist=0.5,curvedepth=8.0](n7,n8){$0/0$}

\drawedge[ELdist=0.5,ELside=r,curvedepth=8.0](n8,n7){$0/0$}

\drawedge[ELside=r,ELdist=1.0](n7,n9){$0/0$}

\end{picture}
}
\vspace{-20mm}
  \caption{\label{fig:ex3} Free loops}
\end{figure}


\begin{example}
Consider the left example of Fig.~\ref{fig:ex3}.
Player $1$'s strategies minimizing the regret are 
those that pass finitely many times by the edge $(A,B)$ and finally
move to $C$. The regret is therefore $5$. Similarly, the strategies
minimizing Player $2$'s regret are those that pass finitely many
times by $(B,A)$ and finally move to $C$. The regret is $5$ as well.
The regret of rank $2$ for Player $1$ is $5$ as well, and
the set of strategies minimizing it is also the same as before (and
similary for Player $2$). Indeed, the regret of a Player $1$'s
strategy that passes $K$ times by $(A,B)$ is $5$, since Player $2$ can
maximize her regret with a strategy that passes at least $K$ times by
$(B,A)$. Thus $\reg_1^{G,\star} = \reg_2^{G,\star} = 5$.

On the right example, Player $1$ has no winning strategy at the first
rank and her regret is $+\infty$. However the strategies of Player $2$
minimizing her regret are the ones that pass finitely many times
through the loop. Therefore all the strategies of Player $1$ are winning at rank $2$. The iterated
regret of both players is $0$.
\end{example}



\section{Conclusion} The theory of infinite qualitive non-zero sum
games over graphs is still in an initial development stage. We
adapted a new solution concept from strategic games to game graphs,
and gave algorithms to compute the regret and iterated regret. The
strategies returned by those algorithms have a finite memory.  One
open question is to know whether this memory is necessary. In other words,
are memoryless strategies sufficient to minimize the (iterated) regret
in game graphs? Another question is to determine the lower bound
on the complexity of (iterated) regret minimization. Iterated regret
minimization over the full class of graphs is still open. Finally, we
think that this work can easily be extended to an $n$-player setting. 


\bibliographystyle{plain}

\newpage
\section{Appendix}

\subsection{Missing Proofs of Section \ref{sec:targetgraph}}

\paragraph{Proposition \ref{prop:ba}}

\begin{proof}
Proof by induction on $|\pi|$. 

If $|\pi|= 0$, then $\ba_1^{G'}(\pi)= +\infty = b_0$.  

We now assume that the property is true for any finite path $\pi$ in
$G'$ from $s'_0$ to some $(s,b)$ of length $k$. Let
$\pi=(s_0,b_0)\dots(s_k,b_k)(s_{k+1},b_{k+1})$ be a path of length
$k+1$. We have:

$$
\begin{array}{rcl}
& & \ba_1^{G'}(\pi) \\
  &=&  \min_{0\leq j<k+1} \ba_1^{G'}(s'_j,s'_{j+1})\\
&=& \min  ( \min_{0\leq j<k} \ba_1^{G'}(s'_j,s'_{j+1}), \ba_1^{G'}(s'_k,s'_{k+1}))\\
&=& \min  (b_k, \ba_1^{G'}(s'_k,s'_{k+1})) \mbox{ by induction hypothesis} \\
&=& \min  (b_k, \ba_1^{G}(s_k,s_{k+1})) \mbox{ by ($\star$)} \\
&=& b_{k+1}  \mbox{ by definition of $G'$}
\end{array}
$$
($\star$) According to Definition~\ref{def:reduc1}, $\forall (s,b) \in
\target'_1 :\ \mu'_1(s,b) = \mu_1(s)$. Thus $\forall (s,b)\in
S',\best^{G'}_1((s,b)) =\best^G_1(s)$ and $\forall
\left((s,b),(s',b')\right)\in T',
\ba_1^{G'}\left((s,b),(s',b')\right) = \ba_1^{G}(s,s')$.
\end{proof}

\paragraph{Proposition \ref{prop:construction_G'}}

\begin{proof}
  Constructing $G'$ is done in three steps: 
  \begin{enumerate}

  \item compute all the values $\best_1^G(s)$, for all $s\in S$; this
    step is equivalent to looking for the shortest path to the
    objective and has a complexity of $O(log_2(M_1^G)(|S|+|T|))$.

  \item compute all the values $\ba^G_1(s,s')$, for all $(s,s')\in T$
    such that $s\in S_1$; it can be computed with a time complexity $O(|T|)$

  \item construct $G'$ by a fixpoint algorithm; this graph has at most $|\target_1|\times |S|$ states and $|\target_1|\times |T|$ transitions.
  \end{enumerate}
\end{proof}

\paragraph{Lemma \ref{lem:maxmax}}

\begin{proof}
  Let $\lambda_1\in\Lambda_1(G')$. If $\lambda_1$ is losing, there is
  a strategy $\lambda_2\in\Lambda_2(G')$ such that
  $\outcome^{G'}(\lambda_1,\lambda_2)$ is losing. 
  Therefore $\reg_1^{G'}(\lambda_1) = \reg_1^{G'}(\lambda_1,\lambda_2)
  = +\infty = \nu_1(\lambda_1,\lambda_2) =
  \max_{\lambda_2\in\Lambda_2(G')} \nu_1(\lambda_1,\lambda_2)$.

  Suppose that $\lambda_1$ is a winning strategy and let $\lambda_2\in\Lambda_2(G')$
  which maximizes $\reg_1^{G'}(\lambda_1,\lambda_2)$. Let $\pi =
  s_0s_1\dots s_n = \outcome^{G',\target_1}(\lambda_1,\lambda_2)$. We define a
  strategy $\lambda_2'$ that plays as $\lambda_2$ on $\pi$ and
  cooperates with Player $1$ if she would have deviated from $\pi$. Formally, for all $h\in
  \play(G')$ such that $\last(h)\in S'_2$, we let 
  $\lambda'_2(h) = s_{j+1}$ if there is $j<n$ such that $h =
  s_0s_1\dots s_j$. Otherwise we let 
  $\lambda'_2(h) = s$ such that $(\last(h),s)\in T'$ and
  $\best^{G'}_1(s)$ is minimal (among the successors of $\last(h)$).

  \noindent Clearly, $\pi = \outcome^{G',\target_1}(\lambda_1,\lambda'_2)$ and
  $\br^{G'}_1(\lambda_2')\leq \br^{G'}_1(\lambda_2)$.  Therefore
  $\reg_1^{G'}(\lambda_1,\lambda_2)\leq
  \reg_1^{G'}(\lambda_1,\lambda'_2)$. Since $\lambda_2$ maximizes the
  regret, we get $\reg_1^{G'}(\lambda_1,\lambda_2) =
  \reg_1^{G'}(\lambda_1,\lambda'_2)$.

  \noindent The best response to $\lambda'_2$ either deviates from $\pi$ or not. If the
  best response deviates from $\pi$ at a node $s_j$, $j<n$,
  i.e. chooses a node $s'$ such that $s'\neq s_{j+1}$, then the utility 
  of the best response, according to the definition of $\lambda'_2$,
  is $\best_{G'}^1(s')$. The best response to $\lambda'_2$ minimizes over all those
  possibilities, therefore $\br^{G'}_1(\lambda_2') = \min(\mu'_1(s_n),\min_{j<n,(s_j,s')\in T', s'\neq s_{j+1}}
  \best^{G'}_1(s'))$,
  i.e. $\min(\mu'_1(s_n),\ba^{G'}_1(\pi))$. By Proposition
  \ref{prop:ba}, $\ba^{G'}_1(\pi) = \proj_2(s_n)$. Therefore $\reg_1^{G'}(\lambda_1,\lambda_2) =
  \reg_1^{G'}(\lambda_1,\lambda'_2) = \mu'_1(s_n) -
  \min(\mu'_1(s_n),p_2(s_n)) = \nu_1(s_n)$. From which we get
  $\reg^{G'}_1(\lambda_1) \leq \max_{\lambda_2\in\Lambda_2(G')}
  \nu_1(\lambda_1,\lambda_2)$.

  Conversely, let $\lambda_2$ which maximizes
  $\nu^1(\lambda_1,\lambda_2)$. Since $\lambda_1$ is winning, 
  we can define $\pi =
  \outcome^{G',\target'_1}(\lambda_1,\lambda_2)$.
  Similarly as forth direction of the proof, one can construct
  a strategy $\lambda'_2$ that plays like $\lambda_2$ along $\pi$ and
  cooperates with Player $1$ when deviating from $\pi$. Clearly, this strategy
  has the same outcome as $\lambda_2$ and we get
  $\reg_1^{G'}(\lambda_1,\lambda'_2) =
  \nu_1(\lambda_1,\lambda_2)$. Finally we have
  $\reg_1^{G'}(\lambda_1) \geq \reg^{G'}_1(\lambda_1,\lambda'_2) =
  \nu_1(\lambda_1,\lambda_2) = \max_{\lambda_2}
  \nu_1(\lambda_1,\lambda_2)$.
\end{proof}

\paragraph{Lemma \ref{lem:bij}}

\begin{proof}
    The mapping $\Phi$ has been defined in the paper. 
    It remains to prove that $\reg_G^1(\lambda_1) = \reg^{G'}_1(\Phi_1(\lambda_1))$, for all
    $\lambda_1\in\Lambda_1(G)$.

    For all $\lambda_1\in\Lambda_1(G)$, all $\lambda_2\in\Lambda_2(G)$,
    $\outcome^G(\lambda_1,\lambda_2) = \proj_1(\outcome^{G'}(\Phi(\lambda_1),\Phi(\lambda_2)))$.
    Therefore $\ut_1^G(\lambda_1,\lambda_2) = \ut_1^{G'}(\Phi(\lambda_1),\Phi(\lambda_2))$.

    Finally:

    $$
    \begin{array}{lllllll}
        & &\!\!\!\! \reg_1^G(\lambda_1)\\
        & = &\!\!\!\! \max\limits_{\lambda_2\in\Lambda_2(G)} \ut_1^G(\lambda_1,\lambda_2) - \min\limits_{\lambda_1^*\in\Lambda_1(G)}\ut_1^G(\lambda_1^*,\lambda_2) \\
        & = &\!\!\!\! \max\limits_{\lambda_2\in\Lambda_2(G)} \ut_1^{G'}(\Phi(\lambda_1),\Phi(\lambda_2)) - \min\limits_{\lambda_1^*\in\Lambda_1(G)}\ut_1^{G'}(\Phi(\lambda_1^*),\Phi(\lambda_2)) \\
        & = &\!\!\!\! \max\limits_{\lambda_2\in\Lambda_2(G')} \ut_1^G(\Phi(\lambda_1),\lambda_2) - \min\limits_{\lambda_1^*\in\Lambda_1(G')}\ut_1^G(\lambda_1^*,\lambda_2) \\
        &  &\!\!\!\! \text{ (since $\Phi(\Lambda_i(G)) = \Lambda_i(G')$ for all $i=1,2$)} \\
        & = &\!\!\!\! \reg_1^{G'}(\Phi(\lambda_1))
    \end{array}
    $$

\end{proof}

\subsection{Missing Proofs of Section \ref{sec:graphs}}

\paragraph{Lemma \ref{lem:preserveregret}}

The proof of this lemma is supported by the following lemma, which says that
under certain conditions, the utility of the outcomes in $G$ and $G'$ are equal
modulo $\Phi$:

\begin{lemma}\label{lem:maputilities} Let $\lambda_1\in\Lambda_1(G)$ and
  $\lambda_2\in\Lambda_2(G)$. If $\outcome^{G'}(\Phi(\lambda_1),\Phi(\lambda_2))$ is winning for
  Player $1$ in $G'$ or $\ut_{1}^G(\lambda_1,\lambda_2)\leq B$, then $\ut^G_1(\lambda_1,\lambda_2) =
  \ut_1^{G'}(\Phi(\lambda_1),\Phi(\lambda_2))$.
\end{lemma}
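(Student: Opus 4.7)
}
The plan is to compare the two outcomes step by step and observe that they coincide modulo the utility annotation, as long as the cumulative weight along the prefix stays within the bound $B$. Let $\pi = \outcome^G(\lambda_1,\lambda_2) = s_0 s_1 \dots$ and $\pi' = \outcome^{G'}(\Phi(\lambda_1),\Phi(\lambda_2))$. For each $j\geq 0$, set $u_j = \mu_1(s_0 s_1 \dots s_j)$. I would first prove, by induction on $j$, the following invariant: for every $j$ such that $u_j \leq B$ and no position $s_k$ with $k<j$ lies in $\target_1$, the play $\pi'$ has a prefix of length $j$ equal to $(s_0,u_0)(s_1,u_1)\dots(s_j,u_j)$. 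The inductive step uses the construction of $T'$ (which forces the annotation to be exactly the cumulative weight) together with the definition of $\Phi$: since $u_{j+1}\leq B$, $\Phi(\lambda_\kappa)$ does not return $\perp$ at step $j$, so the play in $G'$ advances to $(s_{j+1},u_{j+1})$.

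Given this invariant, I would handle the two hypotheses. Assume first that $\ut^G_1(\lambda_1,\lambda_2)\leq B$. Then $\pi$ is winning for Player 1 in $G$, and there is a least index $n$ with $s_n \in \target_1$; moreover, $u_n = \ut^G_1(\lambda_1,\lambda_2) \leq B$, and since weights are non-negative, $u_j \leq u_n \leq B$ for every $j \leq n$. The invariant therefore applies for all $j\leq n$, so $(s_n, u_n)$ is a prefix position of $\pi'$, and $(s_n,u_n)\in\target'_1$. Hence $\pi'$ is winning in $G'$ for Player 1, and by the definition of $\mu'_1$ we obtain $\ut_1^{G'}(\Phi(\lambda_1),\Phi(\lambda_2)) = u_n = \ut^G_1(\lambda_1,\lambda_2)$.

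Now assume $\pi'$ is winning for Player 1 in $G'$. Then $\pi'$ reaches $\target'_1$, say at some position $(s,u)$. By the definition of $\Phi$ and $T'$, the play $\pi'$ can only progress as long as each intermediate utility is $\leq B$; in particular, the reached target position satisfies $u\leq B$ and is of the form $(s_n,u_n)$ for some index $n$, where $s_n \in \target_1$ and $u_n = \ut^G_1(\lambda_1,\lambda_2) \leq B$. This reduces the situation to the first case, and the equality of utilities follows.

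The only subtle point is ruling out the scenario where $\pi'$ halts prematurely (because $\Phi$ produced $\perp$) before hitting a target. Under either hypothesis this cannot happen: the first hypothesis forces the cumulative utility to remain $\leq B$ on the relevant prefix, and the second hypothesis asserts that $\pi'$ actually reaches $\target'_1$, which by the invariant prevents any premature $\perp$ along the way. Once this is established, the equality of utilities is immediate from the definition of $\mu'_1$ on target states in $G'$.
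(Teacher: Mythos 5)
Your proposal is correct and follows essentially the same route as the paper's proof: in both directions you identify the outcome in $G'$ with the utility-annotated version of the outcome in $G$ (the paper asserts this identification directly from the definitions of $T'$ and $\Phi$, while you make it explicit via an inductive invariant on prefixes). The extra care you take in ruling out a premature $\perp$ is a welcome elaboration of a step the paper leaves implicit, but it does not change the argument.
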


\begin{proof}
  If $\ut_1^G(\lambda_1,\lambda_2)\leq B$, then
  $\outcome^G(\lambda_1,\lambda_2)$ is winning, and we let 
  $\pi = \outcome^{G,\target_1}(\lambda_1,\lambda_2)$. We enrich $\pi$
  with the utilities of Player $1$ by defining a path $\pi' =
  (s_0,u_0)\dots (s_n,u_n)$ where $\pi = s_0\dots s_n$ and for all
  $j\leq n$, $u_j = \mu_1^G(s_0\dots s_j)$. Since $\pi$ is
  bounded, we have $u_j\leq B$ for all $j\leq n$, and by definition of
  $G'$, $\pi'$ is a path of $G'$.   By definition of $\Phi$ we
  clearly have $\pi' = \outcome^{G'}(\Phi(\lambda_1),\Phi(\lambda_2))$, from
  which we get $\ut_1^G(\lambda_1,\lambda_2) = 
  \ut_1^{G'}(\Phi(\lambda_1),\Phi(\lambda_n))$.

  If $\outcome^{G'}(\Phi(\lambda_1),\Phi(\lambda_2))$ is winning for
  Player $1$, we let $\pi' =   \outcome^{G',\target'_1}(\Phi(\lambda_1),\Phi(\lambda_2))$ and $\pi = \proj_1(\pi')$. Clearly, $\pi'$ is
  a winning play for Player $1$ in $G$ and by definition of $\Phi$,
  $\pi' = \outcome^{G,\target_1}(\lambda_1,\lambda_2)$, from which we
  get the equality of the utilities.
\end{proof}

We can now prove Lemma \ref{lem:preserveregret}:

\begin{proof}(Proof of Lemma \ref{lem:preserveregret})
  Let $\lambda_2\in \Lambda_2(G)$ which maximizes 
  $\reg_1^G(\lambda_1,\lambda_2)$, and $\lambda_1^*$ be the best
  response to $\lambda_2$. Therefore $\reg_1^G(\lambda_1) =
  \ut_1^G(\lambda_1,\lambda_2) - \ut_1^G(\lambda_1^*,\lambda_2)$. 
  Since $\lambda_1$ is bounded by $B$, we have
  $\ut_1^G(\lambda_1,\lambda_2)\leq B$ and
  $\ut_1^G(\lambda_1^*,\lambda_2)\leq B$ (since $\lambda_1^*$ is
  at least as good as $\lambda_1$). By Lemma \ref{lem:maputilities}, we get
  $\ut_1^G(\lambda_1,\lambda_2) =
  \ut_1^{G'}(\Phi(\lambda_1),\Phi(\lambda_2))$ and
  $\ut_1^G(\lambda^*_1,\lambda_2) =
  \ut_1^{G'}(\Phi(\lambda^*_1),\Phi(\lambda_2))$.

  By definition of the best response, $\br^{G'}_1(\Phi(\lambda_2))\leq
  \ut_1^{G'}(\Phi(\lambda^*_1),\Phi(\lambda_2))$.
  Therefore 

  $$
  \begin{array}{llllllll}
    \reg_1^G(\lambda_1) & = & \reg_1^G(\lambda_1,\lambda_2) \\
    & = & \ut_1^G(\lambda_1,\lambda_2) -
    \ut_1^G(\lambda_1^*,\lambda_2) \\
& = & \ut_1^{G'}(\Phi(\lambda_1),\Phi(\lambda_2)) -
    \ut_1^{G'}(\Phi(\lambda_1^*),\Phi(\lambda_2)) \\
    & \leq & \ut_1^{G'}(\Phi(\lambda_1),\Phi(\lambda_2)) -
    \br^{G'}_1(\Phi(\lambda_2)) \\
    & = & \reg_1^{G'}(\Phi(\lambda_1),\Phi(\lambda_2)) \\
    & \leq & \reg_1^{G'}(\Phi(\lambda_1))
  \end{array}
  $$

  Conversely, since $\Phi(\Lambda_2(G)) = \Lambda_2(G')$, there exists
  $\lambda_2\in\Lambda_2(G)$ such that $\Phi(\lambda_2)$ maximizes
  $\reg_1^{G'}(\Phi(\lambda_1),\Phi(\lambda_2))$. Similarly, there is
  $\lambda_1^*\in\Lambda_1(G)$ such that $\Phi(\lambda_1^*)$ is the
  best response to $\Phi(\lambda_2)$. Since $\lambda_1$ is bounded by
  $B$, we have $\ut_1^G(\lambda_1,\lambda_2)\leq B$, and by Lemma
  \ref{lem:maputilities}  we get $\ut_1^G(\lambda_1,\lambda_2) =
  \ut_1^{G'}(\Phi(\lambda_1),\Phi(\lambda_2))$. Since $\lambda_1$ is
  winning for Player $1$ and bounded by $B$ in $G$, $\Phi(\lambda_1)$ is also winning for
  Player $1$ in $G'$. Therefore $\Phi(\lambda_1^*)$ is also winning
  for Player $1$ in $G'$ (since it does at least as good as
  $\Phi(\lambda_1)$ against $\Phi(\lambda_2)$). Therefore
  $\outcome^{G'}(\Phi(\lambda_1),\Phi(\lambda_2))$ is winning for
  Player $1$ in $G'$, and by Lemma \ref{lem:maputilities} we get
  $\ut_1^G(\lambda_1^*,\lambda_2) =
  \ut_1^{G'}(\Phi(\lambda_1^*),\Phi(\lambda_2))$. Finally:

  $$
  \begin{array}{llllllll}
    & & \reg_1^{G'}(\Phi(\lambda_1)) \\
    & = & \reg_1^{G'}(\Phi(\lambda_1),\Phi(\lambda_2)) \\
    & = & \ut_1^{G'}(\Phi(\lambda_1),\Phi(\lambda_2)) -
    \ut_1^{G'}(\Phi(\lambda_1^*),\Phi(\lambda_2)) \\
    & = & \ut_{1}^G(\lambda_1,\lambda_2) -
    \ut_{1}^G(\lambda_1^*,\lambda_2) \\
    & \leq & \ut_{1}^G(\lambda_1,\lambda_2) -
    \br^{G}_1(\lambda_2) \\
    & = & \reg_{1}^G(\lambda_1,\lambda_2) \\
    & \leq & \reg_{1}^G(\lambda_1)
  \end{array}
  $$
\end{proof}

\subsection{Missing Proofs of Section \ref{sec:iterate}}




\paragraph{Lemma \ref{lem:mapstwo:tree}}

By projecting away the best alternatives of Player $\m i$ in $H$, we
get a tree isomorphic to $H$ which corresponds exactly to the
tree of best alternatives defined in Section~\ref{sec:targetgraph},
in which all the results stated in Lemma \ref{lem:mapstwo:tree} have
been already proved. Clearly, adding the best alternatives of the other player
does not change those results.

\paragraph{Lemma \ref{prop:correctness:tree}}

\begin{proof}
  Let $\lambda_i\in\Lambda_i(H)$ such that $\minmax^{H}_i(\lambda_i)
  = \minmax^{H}_i$. Let $s\in S'_i$ be a position of $D(H)$
  compatible with $\lambda_i$, i.e. such that there is
  $\lambda_{\m i}\in\Lambda_{\m i}(D(H))$ such that $s$
  occurs in $\outcome^{D(H)}(\lambda_i,\lambda_{\m i})$. 
  Let $s'=\lambda_i(s)$. We have to prove
  that $s'$ is a position of $D(H)$. We have
  $\minmax^{H}_i(s') \leq \minmax^{H}_i(\lambda_i)=
  \minmax^{H}_i$. Indeed, since Player $\m i$ is able
  to enforce Player $i$ to go to $s'$ when she plays $\lambda_i$, if $\minmax^{H}_i(s') > \minmax^{H}_i$, 
  then $\lambda_i$ does not minimize $\minmax^H_i$. 
  According to the definition of the delete operator,
  $(s,s')$ is an edge of $D(H)$. Thus $s'$ is a position of $D(H)$,
  and $\lambda_i\in\Lambda_i(D(H))$.

  Conversely, if $\lambda_i \in \Lambda_i(D(H))$. We proceed
  {\it reductio ad absurdum}.

  If $\minmax^{H}_i(\lambda_i)> \minmax^{H}_i$, there exists $\lambda_{\m
    i}\in\Lambda_{\m i}(H)$ such that
  $\minmax^{H}_i(\lambda_i,\lambda_{\m i})= \minmax^{H}_i(\lambda_i)>
  \minmax^{H}_i$. We let $\pi=\pi_0\dots
  \pi_n = \outcome^{H}(\lambda_i,\lambda_{\m i})$. Let $s,b_1,b_2$ such that 
  $\pi_n = (s,b_1,b_2)$.  Since $\pi_n\in\target'$, 
  $\minmax^{H}_i(\lambda_i)=\mu'_i(\pi_n)=\minmax^{H}_i(\pi_n)$.
  We consider the first position $\pi_k$ along $\pi$ such that $k<n$ and $\pi_k$
  is owned by Player $i$, i.e. $\pi_k\in S'_i$ and $\pi_{k+1}\dots
  \pi_{n-1} \in (S'_{\m i})^*$. This position exists, otherwise 
  all positions $\pi_0,\dots, \pi_{n-1}$ are owned by Player $\m i$, 
  and therefore $\minmax_i^H \geq \mu_i'(\pi_n) = \minmax_i^H(\lambda_i)$, which contradicts
  our hypothesis. Since $\lambda_i\in\Lambda_i(D(H))$, by definition
  of $\Lambda_i(D(H))$,  $(\pi_k,\pi_{k+1})$ is an edge of $H'$. 
  Since from $\pi_{k+1}$ Player $\m i$ can enforce Player $i$
  to go to $\pi_n$ (since there are only positions owned by Player
  $\m i$ along $\pi_k\dots \pi_{n-1}$), we have
  $\minmax^{H}_i(\pi_{k+1})\geq \minmax^{H}_i(\pi_{n})>
  \minmax^{H}_i$. Since $\pi_k\in S'_i$, this contradicts the definition of $H'$ (and $D(H)$), 
  because the edge $(\pi_k,\pi_{k+1})$ would have been removed.
  Thus $\minmax^{H}_i(\lambda_i) = \minmax^{H}_i$.
\end{proof}


\paragraph{Proposition \ref{lem:iteratedregret:tree}}

\begin{proof}Proof by induction on $j$.

  If $j=1$, we have $\reg^{G,1}_i = \reg^{D^{0}(G)}_i = \reg^G_i$ and
  by Lemma~\ref{prop:correctness:tree}, $P^1_i=\Lambda_i(D(G))$.

  We assume that $\reg^{G,j}_i = \reg^{D^{j-1}(G)}_i$ and $P^j_i=
  \Lambda_i(D^{j}(G))$. By definition, $\reg^{G,j+1}_i =
  \reg^{G,P^j_i,P^j_{\m i}}_i$. By induction hypothesis, $P^j_i=
  \Lambda_i(D^{j}(G))$, thus $\reg^{G,P^j_i,P^j_{\m i}}_i =
  \reg^{D^j(G)}_i = \reg^{G,j+1}_i$.

  Moreover, $\lambda_i\in P^{j+1}_i$ iff $\lambda_i\in P^{j}_i$ and
  $\reg^{G,j+1}_i(\lambda_i)= \reg^{G,j+1}_i$. By induction
  hypothesis, $P^j_i= \Lambda_i(D^{j}(G))$. We demonstrated that
  $\reg^{G,j+1}_i = \reg^{D^j(G)}_i$. By
  Lemma~\ref{prop:correctness:tree}, applied to the tree $D^{j}(G)$,
  we have $\reg^{D^j(G)}_i(\lambda_i)= \reg^{D^j(G)}_i$ iff $\lambda_i
  \in \Lambda_i(D(D^{j}(G)))$. By definition,
  $\reg^{D^j(G)}_i(\lambda_i)= \max_{\lambda_{\m i}\in \Lambda_{\m
      i}(D^j(G))} \reg^{D^j(G)}_i(\lambda_i,\lambda_{\m i}) =
  \max_{\lambda_{\m i}\in \Lambda_{\m i}(D^j(G))} \left(
    \ut^{G}_i(\lambda_i,\lambda_{\m i})- min_{\lambda_{i}^*\in
      \Lambda_{ i}(D^j(G))} \ut^{G}_i(\lambda_i^*,\lambda_{\m
      i})\right) $. Since $P^j_i= \Lambda_i(D^{j}(G))$, we have
  $\reg^{D^j(G)}_i(\lambda_i)= \reg^{G,j+1}_i(\lambda_i)$. So
  $\reg^{G,j+1}_i(\lambda_i)= \reg^{G,j+1}_i$ iff
  $\reg^{D^j(G)}_i(\lambda_i)= \reg^{D^j(G)}_i$.

  Consequently, $\lambda_i\in P^{j+1}_i$ iff $\lambda_i \in
  \Lambda_i(D^{j+1}(G))$.
\end{proof}

\subsection{Missing Proofs of Section \ref{sec:iterate-positive}}

In this section, we prove several lemmas that do not appear in
the paper, especially to prove Lemma \ref{lem:corres}.

\paragraph{Lemma \ref{lem:boundedstrat}}

\begin{proof}
    Suppose that there is no winning strategy for both players in $G$. Therefore 
    $\reg_1^G = \reg_{1}^{G,1} = \reg_2^G = \reg_{2}^{G,1} = +\infty$, $P_1^1 = \Lambda_1(G)$ and $P_2^1 = \Lambda_2(G)$. 
    It is easy to verify that there is no $j$-winning strategy for
    both players and all ranks $j$.

    Suppose that Player $i$ has a winning strategy, for some $i=1,2$.
    Therefore by Lemma \ref{lem:regretbounded}, the strategies
    minimizing the regret are bounded by $2 M^G |S|$. 
    Since $S^i_{j}\subseteq S^i_{0} = \Lambda_i(G)$ for all $j\geq 0$, we get
    that all strategy of $S^i_j$ is bounded by $2 M^G
    |S|$. Let $j\geq 0$, $\lambda_i\in S^i_j$ and $\lambda_{\m i}\in
    \Lambda_{\m i}(G)$. Let $\pi =
    \outcome^{G,\target_i}(\lambda_i,\lambda_{\m i})$. We have
    $\mu_i(\pi)\leq 2 M^G |S|$, and
    since the weights are strictly positive integers,
    $|\pi|\leq 2 M^G |S|$. Therefore $\mu_{\m i}(\pi)\leq
    2(M^G)^2|S|$. In other words, for all $j\geq 0$, all
    strategy of $P_i^j$ is $j$-winning and $j$-bounded by
    $2(M^G)^2|S|$.

    It remains to prove that the
    $j$-winning strategies of Player $\m i$ minimizing the $(j+1)$-th
    regret are also $j$-bounded.
    Let $j_0\geq 0$ be the first natural number such that $P^{j_0}_{\m
      i}$ contains a $j_0$-winning strategy (if it exists).
    If $j_0 = 0$, then $P^{j_0}_{\m i} = \Lambda_{\m i}(G)$. If
    $j_0>0$, then no strategy of $P^{j_0-1}_{\m i}$ is
    $(j_0-1)$-winning by definition of $j_0$, so
    that $\reg_{\m i}^{G,j_0} = +\infty$, from which we get
    $P^{j_0}_{\m i} = \Lambda_{\m i}(G)$. In both cases, we have
    $P^{j_0}_{\m i} = \Lambda_{\m i}(G)$.

    Since after reaching her objective, Player $i$ can play
    however she wants without affecting her regret, 
    there is a strategy $\gamma_{\m i}\in \Lambda_{\m i}(G)$  that
    wins against all strategies of $P^{j_0}_i$ and which is
    memoryless once Player $i$ has reached his objective.
    Formally, there is a memoryless strategy $\gamma'_{\m i} : S_{\m
      i}\rightarrow S$ such
    that for all $\pi\in\play_f(G)$ such that $\last(\pi)\in S_{\m i}$, 
    if $\pi$ contains a position of $\target_{\m i}$, then
    $\gamma_{\m i}(\pi) = \gamma'_{\m i}(\last(\pi))$.

    Let $\lambda_i\in P^{j_0}_i$. We
    now bound the size of $\outcome^{G,\target_{\m i}}(\gamma_{\m i},\lambda_i)$, which will provide a
    bound on the utility. Let $\pi_{\m i} =
    \outcome^{G,\target_{\m i}}(\gamma_{\m i},\lambda_i)$ and $\pi_i =
    \outcome^{G,\target_i}(\lambda_i,\gamma_{\m i})$. We consider two
    cases:

    \begin{itemize}
      \item if $\pi_{\m i}$ is a prefix of $\pi_i$. We already know that
        $\lambda_i$ is $j$-bounded by $2 (M^G)^2 |S|$,
        therefore we also get $\mu_\kappa(\pi_2)\leq \mu_\kappa(\pi_1) \leq 
        2 (M^G)^2 |S|$, for all $\kappa=1,2$;

      \item if $\pi_i$ is a prefix of $\pi_{\m i}$, then $\pi_{\m i} =
        \pi_i\pi_i'$, for some $\pi_i'$. Since $\lambda_i$ is
        $j$-bounded by $2 (M^G)^2 |S|$,
        $\mu_\kappa(\pi_i)\leq 2 (M^G)^2 |S|$, for all
        $\kappa=1,2$. Since $\gamma_{\m i}$ is
        memoryless after $\pi_i$, there is no loop in
        $\pi_i'$. Therefore $\mu_\kappa(\pi_i') \leq |S| M^G$, for
        all $\kappa=1,2$. Finally, for all $\kappa=1,2$, we get
        $\mu_\kappa(\pi_{\m i}) = \mu_\kappa(\pi_i) + \mu_\kappa(\pi'_i) \leq 
        (2 (M^G)^2 + M^G) |S| \leq 3 (M^G)^2 |S|$.

    \end{itemize}

    In both cases, we get $\mu_\kappa(\gamma_{\m i},\lambda_i)\leq
    3 (M^G)^2 |S|$, for all $\kappa=1,2$ and all $\lambda_i\in P_i^j$.
    Therefore $0\leq \br^{G}_{\m i}(\lambda_i)\leq 3 (M^G)^2 |S|$ ($\star$), which holds
    for all $\lambda_i\in P^{j_0}_i$. We also get $\reg_{\m i}^{G,j_0+1} \leq \reg_{\m i}^{G,j_0+1}(\gamma_{\m i})\leq 3 (M^G)^2 |S|$ ($\star\star$).

    Let now $\lambda_{\m i}$ which minimizes $\reg_{\m i}^{G,j_0+1}$ and
    $\lambda_i\in P^{j_0}_i$. Let $\pi = \outcome^{G,\target_{\m
        i}}(\lambda_{\m i},\lambda_i)$.
    By $(\star\star)$, we have
    $\reg_{\m i}^{G,j_0+1}(\lambda_{\m i},\lambda_i)\leq 3 (M^G)^2 |S|$,
    ie

    $$
    \mu_{\m i}(\pi) - \br^{G,P^{j_0}_{\m i}}_{\m i}(\lambda_i)\leq 2(M^G)^2|S|
    $$

    \noindent Since $P^{j_0}_{\m i} = \Lambda_{\m i}(G)$,
    $\br^{G,P^{j_0}_{\m i}}_{\m
        i}(\lambda_i) = \br^{G}_{\m i}(\lambda_i)$. Therefore
    by $(\star)$, we get $\br^{G,P^{j_0}_{\m i}}_{\m i}(\lambda_i)\leq
    3 (M^G)^2 |S|$, and $\mu_{\m i}(\pi)\leq
    6 (M^G)^2 |S|$. The weights being strictly
    positive, we get $\mu_i(\pi)\leq 6 (M^G)^3 |S| = b^G$.

    Therefore all $j_0$-winning strategy of Player $\m i$
    which minimizes the $(j_0+1)$-th regret is $j_0$-bounded by
    $b^G$, and \textit{a fortiori} 
    all $j$-winning strategy is also $j_0$-bounded by
    $b^G$, for all $j\geq j_0$.
\end{proof}

\begin{lemma}\label{lem:mapbounded}
  Let $i=1,2$, $j\geq 0$, $\lambda_i\in P_i^j(G)$
  and $\lambda_{\m i}\in P_{\m i}^j(G)$. Let $o =
  \outcome^G(\lambda_i,\lambda_{\m i})$. If $o$ is winning for
  Player $i$, $\mu_i(o) \leq B^G$
  and $\mu_{\m i}(o) \leq B^G$, then
  $\ut_i^G(\lambda_i,\lambda_{\m i}) =
  \ut_i^{G'}(\Phi(\lambda_i),\Phi(\lambda_{\m i}))$.
\end{lemma}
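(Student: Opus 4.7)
The plan is to show that under the boundedness hypotheses, the outcome in $G$ and the outcome of the $\Phi$-images in $G'$ coincide along the prefix that determines $\ut_i$. Write $o = s_0 s_1 s_2 \dots$ for $\outcome^G(\lambda_i,\lambda_{\m i})$, and let $n = \min\{k\ |\ s_k\in\target_i\}$, which exists because $o$ is winning for Player $i$. Let $o' = s_0\dots s_n = \outcome^{G,\target_i}(\lambda_i,\lambda_{\m i})$, so that $\ut_i^G(\lambda_i,\lambda_{\m i})=\mu_i(o')$. The hypotheses $\mu_\kappa(o)\leq B^G$ for $\kappa\in\{i,\m i\}$ are interpreted along this finite prefix, i.e. $\mu_i(o'),\mu_{\m i}(o')\leq B^G$.

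For each $k\leq n$, let $\pi_k = s_0\dots s_k$. Since weights in $G$ are non-negative, $\mu_\kappa(\pi_k)\leq \mu_\kappa(o')\leq B^G$ for both $\kappa$, so $\pi_k\in\play_{B^G}(G)$ and hence $\pi_k$ is a position of $G'$ (by Definition~\ref{def:reductutility}). The core of the proof is then the claim that the sequence $\pi_0,\pi_1,\dots,\pi_n$ is exactly the first $n{+}1$ positions of $\outcome^{G'}(\Phi(\lambda_i),\Phi(\lambda_{\m i}))$. I would prove this by induction on $k$. For $k=0$, $\pi_0 = s_0 = s'_0$. For the inductive step, assume $\pi_0\dots\pi_k$ is the correct prefix in $G'$. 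Let $\kappa$ be the player owning $s_k$, so that $\pi_k\in S'_\kappa$ and $s_{k+1} = \lambda_\kappa(\pi_k)$. By definition of $\Phi$, $\Phi(\lambda_\kappa)(\pi_k) = \pi_k.s_{k+1} = \pi_{k+1}$ whenever $\mu_1(\pi_{k+1}),\mu_2(\pi_{k+1})\leq B^G$, which we have already established. Thus the next position of the outcome in $G'$ is exactly $\pi_{k+1}$.

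Since $s_n\in\target_i$, we have $\pi_n\in\target'_i$, so the $G'$-outcome visits $\target'_i$ at step $n$. Combining this with the definition $\mu'_i(\pi_k,\pi_{k+1}) = \mu_i(\last(\pi_k),\last(\pi_{k+1})) = \mu_i(s_k,s_{k+1})$, we obtain
\[
\ut_i^{G'}(\Phi(\lambda_i),\Phi(\lambda_{\m i})) \;=\; \sum_{k=0}^{n-1}\mu'_i(\pi_k,\pi_{k+1}) \;=\; \sum_{k=0}^{n-1}\mu_i(s_k,s_{k+1}) \;=\; \ut_i^G(\lambda_i,\lambda_{\m i}).
\]

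The argument is essentially a bookkeeping exercise analogous to the proof of Lemma~\ref{lem:maputilities} in the edge-weighted case. The only mild subtlety is making sure that the bound $B^G$ is preserved on every prefix $\pi_k$ up to $n$, so that $\Phi$ does not truncate the play prematurely; this is guaranteed by non-negativity of weights together with the hypothesis applied at step $n$. No use of the iterated-regret machinery is needed for this lemma — it is a purely structural statement about the $B^G$-unfolding and the mapping $\Phi$.
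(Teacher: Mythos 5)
Your proof is correct and follows essentially the same route as the paper's: lift the finite prefix $o'=\outcome^{G,\target_i}(\lambda_i,\lambda_{\m i})$ to the sequence of its prefixes $\pi_0,\dots,\pi_n$, observe that non-negativity of the weights keeps every $\pi_k$ inside $\play_{B^G}(G)$ so that $\Phi$ never truncates, and conclude that this sequence is the outcome in $G'$ with matching edge weights. You merely spell out by induction what the paper dismisses with ``by definition of $\Phi$ we clearly have,'' and you correctly restrict to the one direction the lemma actually needs.
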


\begin{proof}
  If $\ut_1^G(\lambda_1,\lambda_2)\leq B$, then
  $\outcome^G(\lambda_1,\lambda_2)$ is winning, and we let 
  $\pi = s_0\dots s_n = \outcome^{G,\target_1}(\lambda_1,\lambda_2)$.
  We let $\pi' = (\pi_0)\dots (\pi_n)$ where for all $k\leq n$, 
  $\pi_k = s_0\dots s_k$. Since $\pi$ is
  bounded, $\pi_k\in \play_{B^G}(G')$. By definition of $\Phi$ we
  clearly have $\pi' =
  \outcome^{G'}(\Phi(\lambda_1),\Phi(\lambda_2))$, from
  which we get $\ut_1^G(\lambda_1,\lambda_2) = 
  \ut_1^{G'}(\Phi(\lambda_1),\Phi(\lambda_n))$.

  If $\outcome^{G'}(\Phi(\lambda_1),\Phi(\lambda_2))$ is winning for
  Player $1$, we let $\pi' =
  \outcome^{G',\target'_1}(\Phi(\lambda_1),\Phi(\lambda_2))$ and $\pi
  = \last(\pi_0)\dots \last(\pi_n)$ where $\pi' = \pi_0\dots \pi_n$.
  Clearly, $\pi$ is
  a winning play for Player $1$ in $G$ and by definition of $\Phi$,
  $\pi = \outcome^{G,\target_1}(\lambda_1,\lambda_2)$, from which we
  get the equality of the utilities.

\end{proof}

\begin{lemma}\label{lem:mapbr}
  For all $i=1,2$, all $j\geq 0$. If $\Phi(P_i^j(G)) = P_i^j(G')$ and
  there is a strategy $j$-bounded by $b^G$
  in $P_i^j(G)$, then for all $\lambda_{\m i}\in P_{\m i}^j(G)$, 
  $\br^{G,P_i^j(G)}_i(\lambda_{\m i}) =
  \br^{G',P_i^j(G')}_i(\Phi(\lambda_{\m i}))$.
\end{lemma}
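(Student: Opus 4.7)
The plan is to show both inequalities $\br^{G,P_i^j(G)}_i(\lambda_{\m i}) \le \br^{G',P_i^j(G')}_i(\Phi(\lambda_{\m i}))$ and $\br^{G',P_i^j(G')}_i(\Phi(\lambda_{\m i})) \le \br^{G,P_i^j(G)}_i(\lambda_{\m i})$ by transporting a witness best response across $\Phi$ and applying Lemma~\ref{lem:mapbounded}. The key insight is that the hypothesis ``there is a $j$-bounded strategy $\lambda_i^\circ\in P_i^j(G)$'' gives us uniform control on the utilities of outcomes $(\lambda_i^\circ,\lambda_{\m i})$, and positivity of the weights will promote a bound on $\mu_i$ to a bound on $\mu_{\m i}$, ensuring both coordinates stay below $B^G=b^G\cdot M^G$.

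For the forward direction, I would first observe that $\lambda_i^\circ$ being $j$-bounded by $b^G$ and $\lambda_{\m i}\in P_{\m i}^j(G)$ imply $\ut_i^G(\lambda_i^\circ,\lambda_{\m i})\le b^G$, hence $\br^{G,P_i^j(G)}_i(\lambda_{\m i})\le b^G<+\infty$. Any best response $\lambda_i^*\in P_i^j(G)$ then satisfies $\ut_i^G(\lambda_i^*,\lambda_{\m i})\le b^G$. Since weights are strictly positive, the length of $\outcome^{G,\target_i}(\lambda_i^*,\lambda_{\m i})$ is at most $b^G$, so $\mu_{\m i}$ along it is at most $b^G\cdot M^G=B^G$. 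Lemma~\ref{lem:mapbounded} then yields $\ut_i^G(\lambda_i^*,\lambda_{\m i})=\ut_i^{G'}(\Phi(\lambda_i^*),\Phi(\lambda_{\m i}))$, and since $\Phi(\lambda_i^*)\in P_i^j(G')$ by the surjectivity hypothesis, $\br^{G',P_i^j(G')}_i(\Phi(\lambda_{\m i}))\le \ut_i^{G'}(\Phi(\lambda_i^*),\Phi(\lambda_{\m i}))=\br^{G,P_i^j(G)}_i(\lambda_{\m i})$.

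For the reverse direction, I would take a best response $\mu^*\in P_i^j(G')$ to $\Phi(\lambda_{\m i})$ and use $\Phi(P_i^j(G))=P_i^j(G')$ to write $\mu^*=\Phi(\lambda_i^{**})$ for some $\lambda_i^{**}\in P_i^j(G)$. By optimality, $\ut_i^{G'}(\Phi(\lambda_i^{**}),\Phi(\lambda_{\m i}))\le \ut_i^{G'}(\Phi(\lambda_i^\circ),\Phi(\lambda_{\m i}))$; applying Lemma~\ref{lem:mapbounded} to $\lambda_i^\circ$ (whose outcome against $\lambda_{\m i}$ is bounded by $b^G\le B^G$ in both coordinates) the right-hand side equals $\ut_i^G(\lambda_i^\circ,\lambda_{\m i})\le b^G$. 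In particular $\outcome^{G'}(\Phi(\lambda_i^{**}),\Phi(\lambda_{\m i}))$ never crashes to $\perp$ and reaches $\target'_i$; by the definition of the $B^G$-unfolding, this corresponds to a winning prefix of $\outcome^G(\lambda_i^{**},\lambda_{\m i})$ whose $\mu_i$-weight is $\le b^G$, so (again by positivity) its $\mu_{\m i}$-weight is $\le B^G$. Lemma~\ref{lem:mapbounded} applies once more, giving $\ut_i^G(\lambda_i^{**},\lambda_{\m i})=\ut_i^{G'}(\Phi(\lambda_i^{**}),\Phi(\lambda_{\m i}))=\br^{G',P_i^j(G')}_i(\Phi(\lambda_{\m i}))$, whence $\br^{G,P_i^j(G)}_i(\lambda_{\m i})\le \br^{G',P_i^j(G')}_i(\Phi(\lambda_{\m i}))$.

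The main subtlety, and the step I expect to require the most care, is the bookkeeping that turns a bound on Player $i$'s utility into a bound on Player $\m i$'s utility so that we stay in the ``safe'' region $[B^G]^2$ of the unfolding. This is precisely where the strict positivity of the weights is used: each edge contributes at least $1$ to $\mu_i$, so $\mu_i(\pi)\le b^G$ forces $|\pi|\le b^G$, and hence $\mu_{\m i}(\pi)\le b^G\cdot M^G=B^G$. Without this, one cannot guarantee that the best responses have images in $P_i^j(G')$ that are actually winning in $G'$, and Lemma~\ref{lem:mapbounded} would fail to apply.
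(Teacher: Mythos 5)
Your proof is correct and follows essentially the same route as the paper's: use the $j$-bounded witness to bound the best response by $b^G$, exploit strict positivity to turn the bound on $\mu_i$ into a bound $\mu_{\m i}\le b^G M^G=B^G$, apply Lemma~\ref{lem:mapbounded} to equate utilities across $\Phi$, and use $\Phi(P_i^j(G))=P_i^j(G')$ to transport the optimal witnesses in both directions. The only cosmetic difference is that in the reverse direction the paper argues the utility equality directly from the winningness of the $G'$-outcome, whereas you re-derive the boundedness conditions and invoke Lemma~\ref{lem:mapbounded} a second time; both are fine.
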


\begin{proof}
  Let $\eta_i\in P_i^j(G)$ be a strategy $j$-bounded by $b^G$ (it
  exists by hypothesis), and let $\lambda_{\m i}\in P_{\m i}^j(G)$.
  Since $\eta_i$ is $j$-bounded, it is $j$-winning and by
  Lemma \ref{lem:mapbounded}, $\ut_i^G(\eta_i,\lambda_{\m i}) =
  \ut_i^{G'}(\Phi(\eta_i),\Phi(\lambda_{\m i}))$. Therefore
  $\ut_i^{G'}(\Phi(\eta_i),\Phi(\lambda_{\m i}))<+\infty$.

  Let $\lambda_i\in P_i^j(G)$ which minimizes
  $\br^{G,P_i^j(G)}_i(\lambda_{\m i})$. Let $\pi =
  \outcome^{G,\target_i}(\lambda_i,\lambda_{\m i})$. 
  We have $\mu_i(\pi) \leq \ut_i^G(\eta_i,\lambda_{\m
    i}) \leq b^G$. Since the weights are strictly positive integers,
  $|\pi| \leq b^G$, and therefore $\mu_{\m i}(\pi) \leq b^G M^G =
  B^G$. By Lemma \ref{lem:mapbounded}, we get
  $\ut_i^G(\lambda_i,\lambda_{\m i}) =
  \ut_i^{G'}(\Phi(\lambda_i),\Phi(\lambda_{\m i})$. Therefore
  $\ut_i^G(\lambda_i,\lambda_{\m i}) = \br^{G,P_i^j(G)}_i(\lambda_{\m i}) \geq
  \br^{G',P_i^j(G')}_i(\Phi(\lambda_{\m i}))$.

  Conversely, let $\lambda_i'\in P_i^j(G')$ which minimizes
  $\br^{G',P_i^j(G')}_i(\Phi(\lambda_{\m i}))$. Therefore $\ut_i^{G'}(\lambda'_i,\Phi(\lambda_{\m i})) \leq
  \ut_i^{G'}(\Phi(\eta_i),\Phi(\lambda_{\m i}))<+\infty$. 
  Since $\Phi(P_i^j(G)) = P_i^j(G)$ by hypothesis, there exists 
  $\lambda_i\in P_i^j(G)$ such that $\Phi(\lambda_i) = \lambda'_i$.
  Since $\ut_i^{G'}(\lambda'_i,\Phi(\lambda_{\m i}))$ is finite,
  $\outcome^{G'}(\lambda'_i,\Phi(\lambda_{\m i}))$ is winning for
  Player $i$. It is easy to see that $\outcome^{G,\target_i}(\lambda_i,\lambda_{\m
    i}) = \last(\pi_0)\dots\last(\pi_n)$ where $\pi_0\dots \pi_n=
  \outcome^{G',\target_i'}(\lambda'_i,\Phi(\lambda_{\m
    i}))$ and that they both have the same utility, i.e. 
  $\ut_i^G(\lambda_i,\lambda_{\m i}) =
  \ut_i^{G'}(\lambda'_i,\Phi(\lambda_{\m i}))$. Since
  $\ut_i^{G'}(\lambda'_i,\Phi(\lambda_{\m i}))  =
  \br^{G',P_i^j(G')}_i(\Phi(\lambda_{\m i}))$, we get 
  $\br^{G',P_i^j(G')}_i(\Phi(\lambda_{\m i}))\geq
  \br^{G,P_i^j(G)}_i(\lambda_{\m i})$.
\end{proof}

\begin{lemma}\label{lem:equivbounded}
  For all $j\geq 0$, $i=1,2$, $\lambda_i\in P_i^j(G)$.
  If $\Phi(P_{\m i}^j(G)) = P_{\m i}^j(G')$, then
  $\lambda_i$ is $j$-bounded by $B^G$ iff $\Phi(\lambda_i)$ is $j$-winning. 
  If $\lambda_i$ is $j$-bounded by $B^G$, then
  for all $\lambda_{\m i}\in S^{\m i}_j(G)$,
  $\ut_i^G(\lambda_i,\lambda_{\m i}) =
  \ut_i^{G'}(\Phi(\lambda_i),\Phi(\lambda_{\m i}))$.
\end{lemma}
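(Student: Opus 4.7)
The plan is to prove the two claims in order, both exploiting the fact that $G'$ is exactly the unfolding of $G$ in which we keep only prefixes whose utility for either player stays at most $B^G$. The hypothesis $\Phi(P_{\m i}^j(G)) = P_{\m i}^j(G')$ is what lets us translate freely between strategies of the opponent in $G$ and $G'$.

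For the forward direction of the equivalence, suppose $\lambda_i \in P_i^j(G)$ is $j$-bounded by $B^G$. I would pick an arbitrary $\lambda'_{\m i} \in P_{\m i}^j(G')$; by hypothesis there exists $\lambda_{\m i} \in P_{\m i}^j(G)$ with $\Phi(\lambda_{\m i}) = \lambda'_{\m i}$. Since $\lambda_i$ is $j$-bounded by $B^G$, the outcome $\outcome^{G,\target_i}(\lambda_i,\lambda_{\m i})$ has both players' utilities at most $B^G$, so by the definition of $\Phi$ (which sends a move to $\perp$ only when the move would exceed $B^G$ for one of the players), the outcome in $G'$ follows the same states with utility labels and reaches $\target'_i$. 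Thus $\outcome^{G'}(\Phi(\lambda_i),\lambda'_{\m i})$ is winning, showing that $\Phi(\lambda_i)$ is $j$-winning.

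For the backward direction, assume $\Phi(\lambda_i)$ is $j$-winning in $G'$ and let $\lambda_{\m i} \in P_{\m i}^j(G)$; then $\Phi(\lambda_{\m i}) \in P_{\m i}^j(G')$ by the hypothesis, so $\outcome^{G'}(\Phi(\lambda_i),\Phi(\lambda_{\m i}))$ is winning. By construction of $G'$, every node reached along this outcome is labeled by a prefix of $G$-utility (for both players) at most $B^G$, and the projection onto $G$ of this outcome coincides with $\outcome^{G,\target_i}(\lambda_i,\lambda_{\m i})$. Therefore both $\mu_i$ and $\mu_{\m i}$ of that outcome are bounded by $B^G$, proving that $\lambda_i$ is $j$-bounded by $B^G$.

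For the second claim, assume $\lambda_i$ is $j$-bounded by $B^G$ and let $\lambda_{\m i} \in P_{\m i}^j(G)$. Then by definition the outcome $o = \outcome^G(\lambda_i,\lambda_{\m i})$ is winning for Player $i$ and $\mu_i(o),\mu_{\m i}(o)\leq B^G$, so Lemma~\ref{lem:mapbounded} applies directly and gives $\ut_i^G(\lambda_i,\lambda_{\m i}) = \ut_i^{G'}(\Phi(\lambda_i),\Phi(\lambda_{\m i}))$. The only subtlety worth double-checking is the backward direction of the equivalence, where one must verify that a winning outcome in $G'$ cannot silently exceed the bound $B^G$ on the $G$ side; this is immediate from the definition of $G'$ as an unfolding restricted to $\play_{B^G}(G)$, since any prefix of the $G'$-outcome lies in that set by construction.
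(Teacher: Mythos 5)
Your proof is correct and follows essentially the same route as the paper's: the forward direction and the utility-equality claim both reduce to Lemma~\ref{lem:mapbounded} (or, equivalently, to unwinding the definition of $\Phi$ on a bounded winning outcome), and the backward direction uses the fact that every position of $G'$ is by construction a play of $\play_{B^G}(G)$, so a winning $G'$-outcome projects to a $G$-outcome whose utilities for both players stay below $B^G$. No substantive difference from the paper's argument.
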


\begin{proof}
  If $\lambda_i\in \Lambda_i(G)$ is $j$-bounded by $B^G$. Then let
  $\lambda'_{\m i}\in P_{\m i}^j(G')$. Since $\Phi(P_{\m i}^j(G)) =
  P_{\m i}^j(G')$, there exists $\lambda_{\m i}\in P_{\m i}^j(G)$ such
  that  $\Phi(\lambda_{\m i}) = \lambda_{\m i}'$. 
  Since $\lambda_i$ is $j$-bounded, we are in the condition of Lemma
  \ref{lem:mapbounded}, therefore $\ut_i^G(\lambda_i,\lambda_{\m i}) =
  \ut_i^{G'}(\Phi(\lambda_i),\lambda'_{\m i}) < +\infty$. Therefore
  $\Phi(\lambda_i)$ wins against $\lambda'_{\m i}$.

  Conversely, if $\Phi(\lambda_i)$ is $j$-winning, then let
  $\lambda_{\m i}\in P_{\m i}^j(G)$. By hypothesis, $\Phi(\lambda_{\m
    i})\in P_{\m i}^j(G')$. Therefore $\Phi(\lambda_i)$ wins against
  $\Phi(\lambda_{\m i})$. Let $\pi_0\dots \pi_n  =
  \outcome^{G',\target'_i}(\Phi(\lambda_i),\Phi(\lambda_{\m i}))$. 
  Clearly, by definition of $\Phi$ and $G'$, $\lambda_i$ wins
  against $\lambda_{\m i}$ and
  $\outcome^{G,\target_i}(\lambda_i,\lambda_{\m i}) =
  \last(\pi_0)\dots \last(\pi_n)$.
\end{proof}

\begin{lemma}\label{lem:induction}
  $\forall j\!\geq\! 0$, $\forall i=1,2$, if $\Phi(P_i^j(G)) =
  P_j^j(G')$ then:

\begin{itemize}

\item[$(i)$] $\reg_i^{G,j+1} = +\infty$ iff $\reg_{i}^{G',j+1} = +\infty$

\item[$(ii)$]  $\forall \lambda_i\in P_i^{j+1}(G)\cup \Phi^{-1}(P_i^{j+1}(G'))$, 

    $\reg^{G,j+1}_i(\lambda_i) = \reg^{G',j+1}_{i}(\Phi(\lambda_i))$
    
\item[$(iii)$] $\Phi({P_i^{j+1}(G)}) = P_i^{j+1}(G')$

\end{itemize}
\end{lemma}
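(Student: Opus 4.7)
The plan is to establish (i), (ii), (iii) in order, using the induction hypothesis $\Phi(P_\kappa^j(G)) = P_\kappa^j(G')$ (needed for both $\kappa=1,2$, so the induction must be phrased jointly) together with Lemmas \ref{lem:boundedstrat}, \ref{lem:mapbounded}, \ref{lem:mapbr}, and \ref{lem:equivbounded}.

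For (i), I would first observe that $\reg_i^{G,j+1} < +\infty$ iff some $\lambda_i \in P_i^j(G)$ is $j$-winning: a $j$-winning strategy yields a regret bounded by its own finite utility, while if every $\lambda_i \in P_i^j(G)$ admits a losing response in $P_{\m i}^j(G)$, the min-max is $+\infty$. Combining the induction hypothesis with Lemma \ref{lem:equivbounded}, existence of a $j$-winning strategy transfers between $G$ and $G'$ through $\Phi$, yielding the equivalence.

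For (ii), take $\lambda_i$ in the stated union. The key step is to show that $\lambda_i$ is $j$-bounded by $B^G$. If $\lambda_i \in P_i^{j+1}(G)$ and $\reg_i^{G,j+1}$ is finite, then $\lambda_i$ is $j$-winning and minimizes $\reg_i^{G,j+1}$, so Lemma \ref{lem:boundedstrat} gives $j$-boundedness by $b^G \leq B^G$. If $\lambda_i \in \Phi^{-1}(P_i^{j+1}(G'))$, then $\Phi(\lambda_i)$ is $j$-winning in $G'$; using the IH to produce a preimage $\tilde{\lambda}_i \in P_i^j(G)$ with $\Phi(\tilde{\lambda}_i) = \Phi(\lambda_i)$, Lemma \ref{lem:equivbounded} yields $j$-boundedness of $\tilde{\lambda}_i$, and this transfers to $\lambda_i$ because the two strategies agree on all plays of utility at most $B^G$, which is all that matters for $j$-boundedness. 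Once boundedness is secured, Lemmas \ref{lem:mapbounded} and \ref{lem:mapbr} give utility and best-response preservation, and maximizing over $P_{\m i}^j(G)$ (matching $P_{\m i}^j(G')$ via the IH) produces $\reg_i^{G,j+1}(\lambda_i) = \reg_i^{G',j+1}(\Phi(\lambda_i))$.

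For (iii), I combine (i) and (ii): if both regrets are infinite, the delete operator leaves $P_i^j$ unchanged in both games and the claim reduces directly to the IH; otherwise, (ii) applied to minimizers in each game (lifted to $P_i^j(G)$ via the IH in the $G'$ direction) equates $\reg_i^{G,j+1}$ and $\reg_i^{G',j+1}$, after which $\lambda_i \in P_i^{j+1}(G)$ iff $\reg_i^{G,j+1}(\lambda_i)$ is the minimum iff, by (ii), $\reg_i^{G',j+1}(\Phi(\lambda_i))$ is the minimum iff $\Phi(\lambda_i) \in P_i^{j+1}(G')$. I expect the main obstacle to be the $j$-boundedness argument for $\Phi^{-1}(P_i^{j+1}(G'))$ in (ii): since $\Phi$ is not injective, a preimage need not a priori lie in $P_i^j(G)$, and one has to exploit the fact that $\Phi$ collapses exactly the strategies agreeing up to utility $B^G$ in order to reduce to a representative where Lemma \ref{lem:equivbounded} applies directly.
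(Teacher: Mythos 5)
Your proof follows the same route as the paper's: the same reduction of (i) to the existence of a $j$-winning strategy in $P_i^j$, the same case split on finiteness of $\reg_i^{G,j+1}$ in (ii) with Lemmas~\ref{lem:boundedstrat}, \ref{lem:mapbounded}, \ref{lem:mapbr} and \ref{lem:equivbounded} transferring boundedness, utilities and best responses, and the same derivation of (iii) from (i) and (ii) via minimizers. You are in fact more careful than the paper at the one delicate point: the paper infers $\lambda_i\in P_i^j(G)$ directly from $\Phi(\lambda_i)\in P_i^j(G')$ and $\Phi(P_i^j(G))=P_i^j(G')$, which is not literally valid since $\Phi$ is not injective, and your observation that two strategies with the same image under $\Phi$ make the same decisions along any outcome of utility at most $B^G$ is exactly the patch needed there.
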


\begin{proof}
\textbf{(\textit{i})} If $\reg_i^{G,j+1} < +\infty$, then it means that
  there is a $j$-winning strategy $\lambda_i\in P_j^i(G)$. By Lemma
  \ref{lem:equivbounded}, $\Phi(\lambda_i)$ is $j$-winning. 
  Since by hypothesis, $\Phi(P_j^i(G)) = P_j^i(G')$, 
  $\Phi(\lambda_i)\in P_j^i(G')$, so that $\reg_i^{G',j+1} \leq
  \reg_i^{G',j+1}(\Phi(\lambda_i)) < +\infty$.

  Conversely, if $\reg_i^{G',j+1} < +\infty$, there is a $j$-winning
  strategy $\lambda'_i\in P_j^i(G')$. By hypothesis, 
  there is $\lambda_i\in P_j^i(G)$ such that $\Phi(\lambda_i) =
  \lambda'_i$, and by Lemma \ref{lem:equivbounded}, $\lambda_i$ is
  $j$-bounded, and in particular $j$-winning. Therefore
  $\reg_i^{G,j+1} < +\infty$.

\textbf{(\textit{ii})} The proof is in two parts, depending on whether
$\reg_i^{G,j+1}$ is finite or not. 

\vspace{2mm}

\textbf{\textit{(ii).a}} If $\reg_i^{G,j+1} = +\infty$, then by $(i)$,
$\reg_i^{G',j+1} = +\infty$.

Let $\lambda_i\in P_i^{j+1}(G)$. 
Since $P_i^{j+1}(G)\subseteq P_j^i(G)$, $\lambda_i\in P_j^i(G)$. 
By hypothesis, $\Phi(P_j^i(G)) = P_j^i(G')$, therefore
$\Phi(\lambda_i)\in P_j^i(G')$. Therefore
$\reg_i^{G',j+1}(\Phi(\lambda_i)) = +\infty =
\reg_i^{G,j+1}(\lambda_i)$.

If $\lambda_i\in \Phi^{-1}(P_i^{j+1}(G'))$, then 
$\reg_i^{G,j+1}(\Phi(\lambda_i)) = +\infty$.
Since $P_i^{j+1}(G')\subseteq P_j^i(G')$, 
$\Phi(\lambda_i)\in P_j^i(G')$. By hypothesis, 
$\Phi(P_j^i(G)) = P_j^i(G')$, therefore $\lambda_i\in P_j^i(G)$, and
$\reg_i^{G,j+1}(\lambda_i) = +\infty =
\reg_i^{G,j+1}(\Phi(\lambda_i))$, since $\reg_i^{G,j+1} = +\infty$.

\textbf{\textit{(ii).b}} If $\reg_i^{G,j+1} < +\infty$, then by $(i)$, 
$\reg_i^{G',j+1} < +\infty$. Let $\lambda_i\in P_i^{j+1}(G)\cup
\Phi^{-1}(P_i^{j+1}(G'))$. We prove that for all $\lambda_{\m i}\in
P_{\m i}^j(G)$,

\begin{enumerate}
  \item $\ut_i^G(\lambda_i,\lambda_{\m i}) =
    \ut_i^{G'}(\Phi(\lambda_i),\Phi(\lambda_{\m i})$

  \item $\br^{G,P_j^i(G)}_i(\lambda_{\m i}) = \br^{G',P_j^i(G')}_i(\Phi(\lambda_{\m i}))$
\end{enumerate}

For $1$, If $\lambda_i\in P_i^{j+1}(G)$, then since $\reg_i^{G,j+1}< +\infty$,
by Lemma \ref{lem:boundedstrat}, $\lambda_i$ is $j$-bounded by $b^G$,
and \textit{a fortiori} by $B^G$. By Lemma \ref{lem:equivbounded}, we
get the result. If $\lambda_i\in \Phi^{-1}(P_i^{j+1}(G'))$, then
since $\reg_i^{G',j+1}<+\infty$, $\Phi(\lambda_i)\in P_{i}^{j}(G')$
is necessarily winning. By Lemma \ref{lem:equivbounded}, $\lambda_i$
is $j$-bounded by $B^G$ and again by the same lemma, we get the
result.

For $2$, since $\reg_{i}^{G,j+1}<+\infty$, by Lemma
\ref{lem:boundedstrat}, there is a strategy of Player $i$ $j$-bounded
by $B^G$ in $P_j^i(G)$. Therefore we can apply Lemma \ref{lem:mapbr}
and we get the result.

Finally we have:

    $$
    \begin{array}{lllllll}
        & \reg_i^{G,j+1}(\lambda_i) \\
        = & \max\limits_{\lambda_{\m i}\in P_{\m i}^j(G)} [\ut_i^G(\lambda_i,\lambda_{\m i}) -
          \br^{G,P_j^i(G)}_i(\lambda_{\m i})] \\

        = & \max\limits_{\lambda_{\m i}\in P_{\m i}^j(G)} [\ut_i^{G'}(\Phi(\lambda_i),\Phi(\lambda_{\m i})) -
          \br^{G,P_j^i(G)}_i(\Phi(\lambda_{\m i}))] \\

        & \text{ (by $(1)$ and $(2)$)} \\

        = & \max\limits_{\lambda'_{\m i}\in P_{\m i}^j(G')} [\ut_i^{G'}(\Phi(\lambda_i),\lambda'_{\m i}) -
          \br^{G',P_j^i(G')}_i(\lambda'_{\m i})] \\

        & \text{ (since $\Phi(P_j^i(G)) = P_j^i(G')$ by hypothesis)} \\

        = &  \reg_i^{G',j+1}(\Phi(\lambda_i)) \\
 
    \end{array}
    $$

    \vspace{2mm}

    \textbf{(iii)} Let $i\in\{1,2\}$ and $\lambda_i\in S^i_{j+1}(G)$.
    Suppose that $\Phi(\lambda_i)\not\in S^i_{j+1}(G')$. It means that $\Phi(\lambda_i)$ does not minimize
    the $j+1$-th regret in $G'$. Therefore there exists another strategy $\lambda_i'\in P_i^{j+1}(G')$ such that
    $\reg_i^{G',j+1}(\lambda_i') < \reg_i^{G,j+1}(\Phi(\lambda_i))$.
    By $(i)$, we get $\reg_i^{G,j+1}(\gamma_i) < \reg_i^{G,j+1}(\lambda_i)$, for all $\gamma_i\in \Phi^{-1}(\lambda_i')$.
    Since $P_i^{j+1}(G')\subseteq P_{i}^{j}(G')$ and $\Phi(P_j^i(G)) = P_j^i(G')$, we have $\Phi^{-1}(\lambda'_i)\subseteq P_j^i(G)$, and we
    get a contradiction on the minimality of $\lambda_i$.

    Conversely, let $\lambda_i'\in P_i^{j+1}(G')$. Suppose that $\lambda'_i\not\in \Phi(P_i^{j+1}(G))$. Since $\lambda'_i\in P_{i}^{j}(G')$, by hypothesis,
    $\lambda'_i\in \Phi(P_j^i(G))$. Therefore there exists $\lambda_i\in P_j^i(G)$ such that $\Phi(\lambda_i) = \lambda'_i$, but $\lambda_i\not\in P_i^{j+1}(G)$. 
    It means that $\lambda_i$ did not survive to the $j$-th iteration. In other words, for all strategy $\gamma_i\in P_i^{j+1}(G)$, $\reg_i^{G,j+1}(\gamma_i) < \reg_i^{G,j+1}(\lambda_i)$.
    Since $\lambda_i\in\Phi^{-1}(P_i^{j+1}(G'))$, by $(ii)$ we have $\reg_i^{G,j+1}(\lambda_i) = \reg_i^{G',j+1}(\Phi(\lambda_i)) = \reg_i^{G',j+1}(\lambda'_i)$. By $(ii)$, we also
    have $\reg_i^{G,j+1}(\gamma_i) = \reg_i^{G',j+1}(\Phi(\gamma_i))$. Therefore $\reg_i^{G',j+1}(\Phi(\gamma_i)) < \reg_i^{G',j+1}(\lambda'_i)$.
    Since $\gamma_i\in P_j^i(G)$ and by hypothesis, $\Phi(P_j^i(G)) = P_j^i(G')$, we have $\Phi(\gamma_i)\in P_j^i(G')$. Therefore we get a strategy
    $\Phi(\gamma_i)$ of $P_j^i(G')$ with a lower $(j+1)$-th regret than the $(j+1)$-regret of $\lambda'_i$. This is in contradiction with $\lambda'_i\in P_i^{j+1}(G')$.
\end{proof}

\paragraph{Lemma \ref{lem:corres}}

\begin{proof}
  Clearly, $\Phi(\Lambda_i(G)) = \Lambda_i(G')$. Therefore we can
  apply Lemma \ref{lem:induction} (proved in Appendix) so that items $(i)$, $(ii)$ and
  $(iii)$ holds at rank $0$. In particular, $\Phi(P_i^1(G)) =
  P_i^1(G')$. Therefore we can again apply Lemma \ref{lem:induction}
  at rank $1$. More generally, for all $j\geq 1$, we have:

  \begin{enumerate}
    \item for all $\lambda_i\in P_j^i(G)$, $\reg_i^{G,j}(\lambda_i) =
      \reg_{i}^{G',j}(\Phi(\lambda_i))$;
    \item $\Phi(P_j^i(G)) = P_j^i(G')$.
  \end{enumerate}
\end{proof}

\end{document}